\documentclass[11pt]{article}
\usepackage{amsmath,amssymb,amsfonts,latexsym,graphicx,amsthm}
\usepackage{fullpage,color}
\usepackage{url}
\usepackage{hyperref}
\usepackage{comment}
\usepackage[linesnumbered,boxed,ruled,vlined]{algorithm2e}
\usepackage{framed}
\usepackage[shortlabels]{enumitem}
\usepackage{cleveref}
\usepackage{todonotes}
\usepackage[normalem]{ulem}
\usepackage{mathabx}
\usepackage{tikz}
\usetikzlibrary{calc}
\usetikzlibrary{arrows.meta}
\usetikzlibrary{backgrounds}
\usetikzlibrary{arrows.meta}
\usetikzlibrary{positioning}
\usetikzlibrary{decorations.pathreplacing}

\pgfdeclarelayer{foreground}
\pgfsetlayers{main,foreground}

\usepackage{subcaption}
\usepackage[margin=1in]{geometry}
\usepackage{ifthen}
\usepackage{thm-restate}
\usepackage{xcolor}

\usepackage{stmaryrd}

\newtheorem{theorem}{Theorem}[section]

\newtheorem{lemma}{Lemma}[section]

\newtheorem{corollary}{Corollary}[section]

\newtheorem{definition}{Definition}[section]

\newtheorem{question}{Question}[section]

\newtheorem{observation}{Observation}[section]

        {\medskip}

\newcommand{\eps}{\epsilon}
\newcommand{\hide}[1]{}

\newcommand{\ceil}[1]{\left\lceil #1 \right\rceil}

\definecolor{BrickRed}{rgb}{0.8, 0.25, 0.33}
\def\EMPH#1{\emph{\textcolor{BrickRed}{#1}}}

\newcommand{\E}{\mathbb{E}}
\newcommand{\Prob}{\mathbb{P}}
\newcommand{\outdeg}{d^{+}}
\newcommand{\wt}{\operatorname{wt}}

\title{Dynamic Edge Orientation via Random Walks: \\From Trees to Outerplanar Graphs and Beyond}
\author{
Gabriel Marques Domingues
\thanks{Tel Aviv University,
\href{mailto:gm@mail.tau.ac.il}{gm@mail.tau.ac.il}.
Gabriel Marques Domingues is supported by the Israel Science Foundation,
grant No.~1948/21.}
\and
Minh Hang Nguyen
\thanks{Tel Aviv University,
\href{mailto:nguyenminhhang31198@gmail.com}{nguyenminhhang31198@gmail.com}.
Minh Hang Nguyen is funded by the European Union
(ERC, DynOpt, 101043159).}
\and
Shay Solomon
\thanks{Tel Aviv University,
\href{mailto:shayso@tauex.tau.ac.il}{shayso@tauex.tau.ac.il}.
Shay Solomon is funded by the European Union
(ERC, DynOpt, 101043159). Views and opinions expressed are however
those of the author(s) only and do not necessarily reflect those of
the European Union or the European Research Council. Neither the
European Union nor the granting authority can be held responsible
for them. Shay Solomon is also funded by a grant from the
United States-Israel Binational Science Foundation (BSF), Jerusalem,
Israel, and the United States National Science Foundation (NSF).}
}
\date{}

\begin{document}
\begin{titlepage}
\maketitle
\vspace{-1.5em}

\begin{abstract}
We study the \emph{fully dynamic edge orientation problem}, focusing on \EMPH{worst-case} time bounds. 
An undirected graph undergoes edge insertions and deletions, and the goal is to maintain an orientation with small {\em maximum outdegree} (hereafter, outdegree) and small worst-case update time. The  outdegree of any orientation is at least $\alpha-1$, where $\alpha$ is the graph's \EMPH{arboricity}, i.e., the minimum number of forests into which its edge set can be partitioned. 
When $\alpha = O(1)$, it is long known that both the outdegree and the worst-case update time can be bounded by $O(\log n)$. Despite numerous follow-ups, no $o(\log^3 n)$ worst-case update time is known for maintaining constant outdegree, even for very basic graph families---with a notable exception, \EMPH{forests}.

For forests, a \EMPH{simple folklore} algorithm maintains outdegree 2 via \EMPH{random walks}: When an insertion creates a vertex of outdegree 3, the algorithm repeatedly chooses a uniformly random outgoing edge until reaching a vertex of outdegree at most 1, and then flips the resulting directed path. 
As the underlying graph is cycle-free, the path length is easily shown to be $O(\log n)$ in expectation, and also with high probability for polynomially long update sequences.

We prove that this simple random walk paradigm extends to \EMPH{outerplanar graphs}. Our algorithm maintains constant outdegree with $O(\log n)$ worst-case update time, where the time bound holds in expectation, and also with high probability for polynomially long update sequences.
We give a \EMPH{tight analysis}: outdegree 4 is achievable with $O(\log n)$-length paths, while outdegree 3 incurs $\mathtt{poly}(n)$-length paths.  
We also extend the argument to $K_{2,t}$-minor-free graphs, for any $t \ge 2$, with the outdegree bound depending only on $t$ and with the same update time guarantees. The locality of the random walk rule also yields an essentially lossless implementation in \EMPH{dynamic distributed networks}, with the same asymptotic bounds on rounds and messages.

Finally, we establish a limitation of this paradigm. For every fixed $\Delta \geq3$, there are $n$-vertex oriented 
\EMPH{series-parallel} graphs (which have treewidth at most two and are in particular planar), in which a uniform directed random walk from a designated vertex requires 
$n^{\Omega_\Delta(1)}$ steps, both in expectation and with high probability, to reach a vertex of outdegree below $\Delta$.
Thus planarity and even treewidth 2 alone do not guarantee a short hitting time for the random walk paradigm.
\end{abstract}

\thispagestyle{empty}
\end{titlepage}

\section{Introduction}
\label{sec:introduction}

Consider an \EMPH{orientation} of an undirected graph $G=(V,E)$ whose \emph{maximum outdegree}, hereafter simply \emph{outdegree}, is $\Delta$.  If $\Delta$ is small, this gives a simple and
memory-efficient graph representation: an adjacency query for two
vertices $u$ and $v$ can be answered in $O(\Delta)$ worst-case time by
scanning their two out-neighbor lists.  This motivated the pioneering
work of Brodal and Fagerberg~\cite{BF99}: can such a basic representation
be maintained efficiently as edges are inserted and deleted?

In the resulting \emph{fully dynamic edge orientation problem}, we
maintain an orientation under edge updates while keeping both its
outdegree and update time small.  Update time may be measured
\EMPH{amortized} over the entire sequence or in the \EMPH{worst case}
per update.  We also measure the \emph{recourse}, the number of edge
flips per update, again either amortized or in the worst case. The natural benchmark for the outdegree is the graph's \emph{arboricity} $\alpha(G)$, defined as the minimum number of forests into which its edges can be partitioned.  By the Nash--Williams theorem~\cite{NashWilliams64},
$\alpha(G)=\max_{U\subseteq V,\,|U|\ge2}
  \left\lceil \frac{|E(G[U])|}{|U|-1}\right\rceil$.
Thus arboricity measures the graph's {\em uniform sparsity}.
Forests, outerplanar graphs, and planar graphs have arboricity at most $1$, $2$, and $3$, respectively, while every family excluding a fixed minor has bounded arboricity.  Every graph admits an orientation of outdegree at most
$\alpha(G)$, whereas no orientation can have outdegree below
$\alpha(G)-1$.  Thus, up to one unit, arboricity determines the smallest outdegree one can hope to maintain.  Throughout, $\alpha$ denotes a fixed upper bound on the arboricity over the entire update sequence.

The algorithm of Brodal and Fagerberg~\cite{BF99}, hereafter the
\emph{BF algorithm}, maintains  outdegree $O(\alpha)$ with
$O(\log n)$ amortized update time.  Its update time is linear in the
number of flips, so it
has $O(\log n)$ amortized recourse.  The algorithm
itself is very simple.  Whenever a vertex exceeds an outdegree threshold
$\Delta=O(\alpha)$, the algorithm \emph{resets} it by flipping all its
outgoing edges.  Any vertex whose outdegree now exceeds $\Delta$ is
placed in a global queue and reset in turn.

It was shown in~\cite{Kowalik07}, and later in~\cite{HTZ14}, that the BF algorithm yields a general tradeoff between
outdegree and amortized update time.  For constant arboricity, it
provides outdegree $O(\Delta)$ with amortized update time
$O(\log n/\Delta)$.
Whether this tradeoff can be improved at any point of the tradeoff curve has remained an
outstanding open question since~\cite{BF99}, \EMPH{even for forests}.

\vspace{-5pt}
\refstepcounter{subsection}\label{sec:intro-worst-case}
\paragraph{1.1~ Worst-case update time.~}
The BF tradeoff gives worst-case adjacency-query time, but only
amortized update time: a single update may trigger a long cascade of
resets.  BF explicitly asked whether comparable bounds
can be obtained with worst-case update time~\cite{BF99}.

A positive resolution would have implications well beyond adjacency
queries: dynamic low-outdegree orientations are a basic primitive for
dynamic distance oracles~\cite{KK03},
$k$-clique counting~\cite{chiba1985arboricity,Epp94,DLSY20},
maximal matching~\cite{NS13,HTZ14,KKPS14,CCHHQRS24},
maximal independent set~\cite{OSSW18},
and graph coloring~\cite{SW18,henzinger2020explicit,CR22,christiansen2023improved},
to name a few notable examples;
see \Cref{sec:intro-related} for further related work.
Since constant-arboricity graph families are the focus of this paper, we next state the central question in that regime.
\vspace{-3pt}
\begin{question}
\label{q:intro-worst-case}
Can one maintain constant outdegree in fully dynamic graphs of
constant arboricity using $O(\log n)$ \EMPH{worst-case} update time?
\end{question}
\vspace{-5pt}

If the entire update sequence is known in advance, often referred to as the
\emph{offline setting}, then the analogous worst-case recourse question
is already settled.  Brodal and Fagerberg~\cite{BF99} showed that, for
every update sequence of arboricity at most $\alpha$ and every
$\Delta>\alpha$, there exists a sequence of $\Delta$-orientations in
which each update flips at most
$\lceil\log_{\Delta/\alpha}n\rceil$ edges.  Thus, for constant $\alpha$
and $\Delta=\alpha+1$, $O(\log n)$ worst-case recourse is always
achievable offline.  This existential guarantee, however, does not yield
a fast update procedure without knowledge of future updates.

The first nontrivial worst-case update time bounds were obtained in~\cite{KKPS14}.  For constant
arboricity, their algorithm uses a simple local invariant to maintain
 outdegree $O(\log n)$ with $O(\log n)$ worst-case update time.
Thus it matches the BF algorithm's logarithmic time bound, but loses a
logarithmic factor in the outdegree.

A simple greedy algorithm~\cite{BerglinBrodal20} refined the
tradeoff: for constant arboricity, it achieves outdegree $O(\log n)$
with $O(\sqrt{\log n})$ worst-case update time.
Subsequent work pushed the outdegree much closer to optimal.
Outdegree $(1+\eps)\alpha+2$ with
$O(\eps^{-6}\alpha^2\log^3 n)$ worst-case update time was achieved in~\cite{CR22}, and the dependence on $\alpha$ in the update time was later improved to $O(\eps^{-6}\log^3 n\log\alpha)$ in~\cite{CCHHQRS24}. 
For constant $\alpha$ and fixed $\eps>0$, both bounds are $O(\log^3 n)$.
Thus, for constant outdegree, the best known worst-case update time is
$O(\log^3 n)$, and \Cref{q:intro-worst-case} remains open.

There is, however, one clean exception beyond the trivial case of
bounded-degree graphs: \EMPH{forests}.  A \EMPH{simple folklore} random walk
algorithm maintains outdegree $2$ with $O(\log n)$ expected update time,
and with $O(\log n)$ update time throughout every polynomial-length
update sequence w.h.p.  This suggests a particularly simple approach to
\Cref{q:intro-worst-case}, and is the starting point of our work.

\vspace{-5pt}
\refstepcounter{subsection}\label{sec:intro-forests}
\paragraph{1.2~ The random walk paradigm in forests.~}

The simple folklore algorithm maintains outdegree $2$ as follows.  Orient
each inserted edge arbitrarily.  If the insertion raises its tail's
outdegree to $3$, start a uniform directed random walk there, stop at the first vertex of
outdegree $\le 1$, and flip the resulting directed path. The starting vertex loses one outgoing edge, the terminal vertex gains
one, and every internal outdegree is unchanged, restoring outdegree $2$.  Deletions require no
repair.

The analysis is immediate since a forest contains no cycle, so the
walk is a simple path.  Until it stops, every visited vertex has at least
two outgoing edges; hence any fixed $k$-edge path is followed with
probability at most $2^{-k}$.  Since there is at most one simple path from the starting vertex to each
possible endpoint, if $\tau$ denotes the walk length then
$\Pr(\tau>k)\le\min\{1,n2^{-k}\}$, and hence
$\E[\tau]=O(\log n)$.  Traversing and flipping the path take
$O(\tau+1)$ time and exactly $\tau$ flips, so every update has
$O(\log n)$ expected update time and recourse.  Importantly, this bound
holds even after conditioning on the complete preceding history, and
hence against an adaptive adversary.  A union bound then gives
$O(\log n)$ update time and recourse for every update simultaneously
throughout any polynomial-length execution w.h.p.
We stress that there always (with probability 1) \EMPH{exists} an $O(\log n)$-length path on a forest that reaches a vertex of outdegree less than $2$. The randomization is only needed to efficiently find such a path, which is done via a random walk.

Several known algorithms are themselves notably
simple~\cite{BF99,KKPS14,BerglinBrodal20}. The main advantage of the random walk paradigm over previous algorithms is that it is essentially \EMPH{structure-free}: it only samples outgoing
edges, records the resulting path, and flips it, using no global data structures or auxiliary invariants. This feature of the algorithm yields an essentially lossless implementation in the
\EMPH{dynamic distributed setting}; see \Cref{cor:intro-distributed}.

Random walks were used in {\em incremental forests} to obtain outdegree $3$ with $O(\log\log n)$ flips per insertion and $O(\log n\log\log n)$ worst-case insertion time w.h.p.~\cite{BKKPS21}.  Their \emph{Dancing-Walk} algorithm is more involved and exploits the absence of deletions.  Although their paper does not state the adversary model explicitly, their analysis fixes the insertion sequence and hence establishes the
guarantee against an \emph{oblivious adversary}. By contrast, the conditional tail bound above applies against an
\emph{adaptive adversary} that observes the current orientation and all
past random choices; our $O(\log n)$ bounds hold simultaneously over any
polynomial-length execution w.h.p.

We ask if this simple approach already resolves
\Cref{q:intro-worst-case} for a basic graph family beyond forests.
Outerplanar graphs are a natural first test: they are the graphs with no $K_4$ or $K_{2,3}$ minor, equivalently, the graphs admitting a planar drawing in which every vertex lies on the outer face.  They have arboricity at most 2, but unlike forests they may contain many cycles.

\vspace{-5pt}
\begin{question}
\label{q:intro-main}
Can the uniform random walk paradigm maintain constant outdegree in
fully dynamic outerplanar graphs, against an adaptive adversary, with
$O(\log n)$ expected update time per update and $O(\log n)$ update time
throughout every polynomial-length update sequence w.h.p.?
\end{question}
\vspace{-14pt}
\refstepcounter{subsection}\label{sec:intro-contributions}
\paragraph{1.3~ Our contributions.~}

We answer \Cref{q:intro-main} affirmatively.
Fix a threshold $\Delta$.  Orient each inserted edge arbitrarily; if its tail now has outdegree $\Delta+1$, start a uniform directed random walk there and stop at the first vertex of outdegree below $\Delta$.  In contrast to forests, now the walk may repeat vertices, so we erase its closed detours and flip the resulting simple directed path.  As in a forest, this decreases the outdegree of the starting vertex by one, increases that of the final vertex by one, and leaves all internal outdegrees unchanged.  Deletions require no repair.

The algorithm needs no auxiliary structure beyond the outgoing lists.  For constant $\Delta$, each step consists only of sampling from a constant-size list, and the total work is linear in the length of the sampled walk, repetitions included.  
All the difficulty is pushed into one combinatorial question: \EMPH{How long can the walk remain among vertices of outdegree at least $\Delta$}?

Our main result provides a \EMPH{tight} answer to this   question for outerplanar graphs, thus we resolve \Cref{q:intro-worst-case} for outerplanar graphs, albeit with
randomized update times rather than deterministic guarantees.
We note that, analogously to forests, there always (with probability 1) \EMPH{exists} an $O(\log n)$-length path that reaches a low-outdegree vertex.

\begin{theorem}[Dynamic outerplanar orientation]
\label{thm:intro-outerplanar}
Starting from the empty graph, the algorithm above maintains maximum
outdegree $4$ under fully dynamic updates that preserve outerplanarity.
Against an adaptive adversary, conditioned on the complete preceding
history, every update has expected update time and expected recourse
$O(\log n)$.  Moreover, throughout every polynomial-length execution,
all update times and recourse bounds are $O(\log n)$ simultaneously
w.h.p.
\end{theorem}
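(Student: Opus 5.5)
The first step is to reduce the theorem to a single combinatorial hitting-time lemma about outerplanar orientations. Consider any update that inserts an edge and raises its tail $s$ to outdegree $5$. Just before the walk starts, every vertex has outdegree at most $4$ except $s$. The walk only moves through \emph{heavy} vertices, those of outdegree at least $4$, until it first enters a \emph{light} vertex, of outdegree at most $3$. Correctness is then immediate. Erasing closed detours leaves a simple directed path, and flipping it lowers $s$ to $4$, raises the light endpoint to at most $4$, and leaves internal vertices unchanged. Deletions cannot violate the bound. The work and recourse are at most $O(\tau+1)$, where $\tau$ is the raw walk length.

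It therefore suffices to prove a conditional tail bound. The target is: for every outerplanar graph $G$ on $n$ vertices and every orientation with the above outdegree profile, the uniform directed walk from $s$ satisfies
\[
\Pr(\tau>k)\le n^{c}\rho^{k}
\]
for absolute constants $c$ and $\rho<1$. Because this bound depends only on the current graph and orientation, it holds conditioned on the full history, and hence against an adaptive adversary. It gives $\E[\tau]=O(\log n)$, and a union bound over $\mathrm{poly}(n)$ updates gives the w.h.p.\ statement, exactly as in the forest argument of \Cref{sec:intro-forests}.

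To prove the tail bound, I will work with the substochastic transition matrix $P$ of the walk restricted to the set $R$ of heavy vertices reachable from $s$ through heavy vertices. Mass that exits $R$ is absorbed at light vertices. The plan is to exhibit a Lyapunov function $g\colon R\to[1,n^{c}]$ with
\[
(Pg)(v)\le\rho\, g(v)\quad\text{for every } v\in R.
\]
Then $\Pr(\tau>k)\le (P^k g)(s)/\min g\le n^c\rho^k$.

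The structural input is sparsity. Every subgraph of an outerplanar graph on $m$ vertices has at most $2m-3$ edges, and outerplanar graphs are $2$-degenerate. Since every vertex of $R$ has outdegree at least $4$ while $G[R]$ has average total degree below $4$, a constant fraction of the out-edges of $R$, hence of walk probability, must leave $R$ or land on vertices of small degree inside $G[R]$. I would make this quantitative with a $2$-degeneracy peeling of $G[R]$. A heavy vertex removed at a given stage keeps at most $2$ neighbors among the vertices not yet removed. So at least half of its $\ge 4$ out-edges point to earlier-peeled vertices or out of $R$, and each step has probability at least $1/2$ of moving "downward."

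Peeling depth alone can be linear, so I would not use it directly. Instead I would use the recursive separator structure of outerplanar graphs: each $2$-connected block is a triangulable polygon whose weak dual is a tree, and chords are $2$-vertex separators. This gives a potential $g$ that is exponential in a depth of $O(\log n)$ in a balanced decomposition, mimicking how the forest proof charges at most one simple path per endpoint. The count of walks should be controlled by loop-erasure together with a bound on how much probability a cycle can recycle. In a block with outdegrees at least $4$, each cycle traversal loses a constant factor because the $\ge 2$ edges that "escape" the face structure are taken with probability at least $1/2$.

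I expect the main obstacle to be the construction of $g$, or equivalently a bound on the number of probability-weighted walks that revisit vertices. Here cycles are genuinely present, and the lower bound for threshold $3$ shows that the slack between "at least $4$ out-edges" and "at most $2$ inward-structural neighbors" is exactly what must be exploited. My first attempt would be to set $g(v)=\beta^{h(v)}$, with $h$ the height of $v$ in a centroid-type decomposition of the weak dual tree of its block. I would then verify $(Pg)(v)\le\rho g(v)$ by a local case analysis on the at most two chord-neighbors of $v$ at its level. If this fails, the fallback is a direct counting bound: encode each walk by its loop-erased path plus a sequence of excursions inside $2$-vertex-separated pieces, and bound the encoding entropy by $\log_2$ of the per-step $1/4$ weight.
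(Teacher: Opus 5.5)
You reduce the theorem to a hitting-time tail bound correctly, and this part matches the paper's Corollary~\ref{cor:dynamic-four}: flipping the loop-erased path restores outdegree $4$, the work is $O(\tau+1)$, and a tail bound that depends only on the current orientation gives both the conditional expectation bound and, by a union bound, the w.h.p.\ statement.

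\textbf{The gap.} The hitting-time bound at threshold $4$ (the paper's Theorem~\ref{thm:threshold-four}) is the entire difficulty, and you do not prove it. Asking for $g\ge 1$ with $Pg\le\rho g$ and $g\le n^c$ only restates the problem; the natural choice is $g(v)=\E_v[\zeta^{\tau}]$. So all the content is in constructing $g$, and the construction you propose fails:
\begin{itemize}
\item In an outerplanar graph, one vertex can lie on every triangle of a fan. It therefore meets faces at every level of a centroid decomposition of the weak dual.
\item It can have both in- and out-edges to heavy vertices whose levels differ from its own by $\Theta(\log n)$. For example, attach ears to the rim edges so that the rim vertices become heavy.
\item Whichever way the exponent runs, one of these edges is taken with probability up to $1/4$ and multiplies $g=\beta^{h(\cdot)}$ by $\beta^{\Theta(\log n)}$. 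This violates $(Pg)(v)\le\rho g(v)$ at its tail.
\item For the same reason, a vertex can be an endpoint of arbitrarily many chords, so your local case analysis on ``at most two chord-neighbors at its level'' has no basis.
\end{itemize}
The sparsity remarks do not rescue this. That at least half the out-edges of $R$ leave $R$ \emph{in aggregate} says nothing about where the walk spends its time. The peeling order only gives ``up with probability at most $1/2$,'' which, with unbounded jumps, yields no drift.

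\textbf{The missing idea.} You need a bound, independent of region size, on the $\zeta$-weighted mass that returns to a $2$-vertex separator. The paper does this as follows:
\begin{itemize}
\item It cuts along a chord $ab$ and defines the regional kernels $x=K_R(a,b)$ and $y=K_R(a,a)$. These are the $\zeta$-weighted masses of walks that leave $a$ into the region and next hit $\{a,b\}$ at $b$, respectively at $a$.
\item It proves the coupled inequalities \eqref{eq:cone}, $x+\theta y\le A(p)$ and $y+\theta x\le B(p)$, where $p$ is $a$'s own outgoing weight into the region. The proof is by induction: split at the triangle $abc$ into two child regions and sum the geometric series of returns to $c$.
\item The parameters are tuned against total outgoing weight $4$, with a reserve term that pays for a direct edge $a\to c$.
\item The upshot (Lemma~\ref{lem:region}) is that from any interior vertex the $\zeta$-weighted mass of reaching $\{a,b\}$ is at most $h=18/19$, whatever the region's size.
\end{itemize}
Only then does your intuition ``$\ge 4$ out-edges versus $\le 2$ structural neighbors'' become usable, and only at one place. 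At the centroid triangle at most $2$ of the $4$ units of weight go to the other two corners, so an excursion returns with mass at most $\zeta(1+h)/2<1$ (Lemma~\ref{lem:excursion}). A geometric sum plus induction on halved subproblems then gives $\E[\zeta^{\tau}]\le Cn^{\kappa}$.

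\textbf{Your fallback does not work either.} Encoding walks as a loop-erased path plus excursions is exactly the Layer~I counting. With per-step weight about $1/4$, set against exponentially many fixed-endpoint paths and the mass of closed detours, it closes only for thresholds far above $4$ ($48$ in the paper). Threshold $3$ already forces polynomial hitting times (Theorem~\ref{thm:outerplanar-three-lower}), so any argument at $4$ must use the slack sharply, and your sketch gives no mechanism for that.
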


The hitting-time statement underlying
\Cref{thm:intro-outerplanar} is itself tight:
\vspace{-5pt}
\begin{restatable}[Tight outerplanar threshold]{theorem}{ThmTightOuter}
\label{thm:intro-tight-threshold}
In every orientation of every $n$-vertex outerplanar graph, from every
starting vertex, the uniform directed random walk reaches a vertex of
outdegree below $4$ after $O(\log n)$ steps in expectation and after
$O(\log(n/\delta))$ steps with probability at least $1-\delta$.

Conversely, for threshold $3$, there are $n$-vertex outerplanar
orientations and starting vertices for which the walk requires
$n^{\Omega(1)}$ steps in expectation and exceeds $n^{\Omega(1)}$ steps
with high probability.
For threshold $2$, the hitting time can be
$2^{\Omega(n)}$ in expectation and can exceed $2^{\Omega(n)}$ with
probability $1-2^{-\Omega(n)}$.  Both lower bound behaviors can arise
immediately after a single insertion.
\end{restatable}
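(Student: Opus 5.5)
The plan splits the statement into the upper bound for threshold $4$ and two explicit lower-bound constructions.

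\textbf{Upper bound.} Let $H$ be the set of vertices of outdegree at least $4$, and let $P_H$ be the substochastic transition matrix of the walk killed on leaving $H$. Then $\Pr(\tau>k)\le \max_v (P_H^k\mathbf 1)(v)$. I would aim for a Lyapunov certificate: a weight $w:H\to\mathbb R_{>0}$ with
\[
  \sum_{u\in N^+(v)\cap H}\frac{w(u)}{d^+(v)}\le \rho\, w(v)\quad\text{for all } v\in H,
\]
for an absolute constant $\rho<1$, and with $\max w/\min w\le n^{O(1)}$. Iterating gives $P_H^k\mathbf 1\le \rho^k n^{O(1)}\mathbf 1$. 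This yields $\Pr(\tau>k)\le n^{O(1)}\rho^k$, hence $\E[\tau]=O(\log n)$ and $\tau=O(\log(n/\delta))$ with probability $1-\delta$.

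The combinatorial input is sparsity at every scale. Every subgraph on $s\ge2$ vertices of an outerplanar graph has at most $2s-3$ edges. So in any $S\subseteq H$ the vertices emit at least $4|S|$ out-edges, while at most $2|S|-3$ of them stay inside $S$. I would build $w$ along a $2$-degeneracy peeling of $G[H]$: repeatedly remove a vertex of degree $\le2$ in the current graph, and let $w$ grow by a constant factor per layer of the peeling. The intended mechanism is that a vertex has at most $2$ neighbours peeled later than itself, while the rest of its $\ge 4$ out-edges go to earlier-peeled or non-$H$ vertices and carry smaller (or zero) weight. The total variation of $w$ then has to stay polynomial. For that I would use a balanced peeling of depth $O(\log n)$ that removes a constant fraction of vertices per round; outerplanar graphs have linearly many vertices of degree $\le 3$, which permits this.

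The counting step that parallels the forest argument is that a vertex with $d^+\ge4$ sends weight to at most $2$ ``heavier'' neighbours, giving ratio about $2\cdot c/4<1$ for a suitable growth factor $c<2$. \emph{I expect this step to be the main obstacle}: checking that the per-vertex inequality really holds with $\rho<1$, and not only on average. It is also exactly where threshold $3$ must fail, since $2c/3<1$ together with the other constraints on $c$ should be infeasible. If the degeneracy-based weight fails, my fallback is to bound the weighted number of walks of length $k$ directly. I would encode each walk by its sequence of ``forward'' versus ``backward'' moves relative to the peeling, which is where the $n\cdot 2^{-\Omega(k)}$ bound should come from.

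\textbf{Lower bounds.} For threshold $2$, I would take a ladder-like outerplanar strip (a path of triangles) oriented so that every vertex has outdegree exactly $2$. At each vertex one out-edge moves back toward the start and one moves forward, so the walk is a biased random walk pushed back by constant drift. The only low vertex is at the far end, at distance $\Theta(n)$. Gambler's-ruin estimates then give expected hitting time $2^{\Omega(n)}$ and failure probability $1-2^{-\Omega(n)}$. For threshold $3$ I would use a recursive or fan-type gadget where each level locally biases the walk backward with a constant ratio, but only over $\Theta(\log n)$ levels, which yields $n^{\Omega(1)}$. To get the ``single insertion'' clause, I would start from an orientation of maximum outdegree $\Delta$ in which the designated start vertex has outdegree $\Delta$, and insert one edge out of it. This places the start at outdegree $\Delta+1$ without changing the gadget.
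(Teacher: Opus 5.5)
You are right that a Lyapunov weight suffices. Indeed $v\mapsto\E_v[\zeta^{\tau_4}]$ from the paper's theorem is such a certificate with $\rho=1/\zeta$. But the weight you propose cannot satisfy the per-vertex inequality, so the step you flag is where the argument breaks.

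\textbf{Upper bound.} With $w=c^{\ell}$ along a peeling, earlier-peeled out-neighbours are not negligible: they can sit one layer below $v$ and carry $w(v)/c$ each. Take $d^+(v)=4$ with all four out-neighbours in $H$, two peeled just before $v$ and two just after. Then the left-hand side is $\frac14\bigl(2c+\frac2c\bigr)w(v)=\frac{c+1/c}{2}\,w(v)\ge w(v)$ for every $c>0$; your estimate $2c/4$ simply drops the backward terms. A later-peeled neighbour can also lie many layers above $v$, so its weight is not bounded by $c\,w(v)$. The depth bound fails as well. In the triangle strip on $v_1,\dots,v_m$ with $v_i\sim v_j$ iff $1\le|i-j|\le 2$, only $v_1,v_m$ have degree $\le 2$ and only four vertices have degree $\le 3$. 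So outerplanar graphs need not have linearly many vertices of degree $\le 3$, each degree-$\le2$ peeling round removes at most two vertices, and $\max w/\min w$ becomes $c^{\Theta(n)}$. The fallback encoding does not decay on its own. Its $2^k$ forward/backward patterns, with forward mass up to $1/2$ and backward mass up to $1$ per step, sum to $(3/2)^k$. Since forward moves can jump many layers, nothing limits the number of backward moves.

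The missing idea is control of returns through two-vertex separators. The paper triangulates to a maximal outerplanar graph. By induction over the weak dual tree it proves $x+\theta y\le A(p)$ and $y+\theta x\le B(p)$ for the $\zeta$-weighted masses of leaving a region $R(a,b)$ through $b$ and through $a$. It deduces that from any interior vertex the $\zeta$-mass of ever reaching $\{a,b\}$ is at most $h=18/19$. At a centroid face of the weak dual, one excursion then returns with $\zeta$-mass at most $\frac{\zeta}{4}\bigl(d+(4-d)h\bigr)\le\frac{\zeta(1+h)}{2}<1$, using $d\le 2$ and outdegree at least $4$. Recursing on the half-size pieces gives $\E_s[\zeta^{\tau_4}]\le Cn^{\kappa}$.

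\textbf{Threshold $2$.} Your sketch is infeasible as written. Since $|E|\le 2n-3$, every orientation of maximum outdegree $2$ has $\sum_v\bigl(2-d^+(v)\bigr)\ge 3$. Hence not every vertex has outdegree $2$, and there are at least two low vertices. With back and forward steps of equal length the walk is symmetric and has no drift. In the triangle strip, if no vertex away from the far end is low, the orientation is forced to be $v_i\to v_{i+1}$, $v_i\to v_{i+2}$, which the walk crosses in fewer than $m$ steps. The drift needs a reset vertex. The paper takes two paths with middle roots $r_1,r_2$ and edges $s\to r_i$, directs path edges away from the roots, and directs every non-root path vertex to $s$. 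Then each step away from a root resets with probability $1/2$, and an excursion succeeds with probability $2^{-(h-1)}$.

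\textbf{Threshold $3$.} You give no construction, and this is the delicate case. If $R$ is the set reachable from $s$, all out-edges of $R$ stay inside $R$. So $\sum_{v\in R}\bigl(3-d^+(v)\bigr)\ge |R|+3$, and at least a third of the reachable vertices are low, yet the walk must rarely find them. The paper hides them in recursive wedges $W_k(a,b)$. From the top of such a wedge, the walk returns to the back terminal with probability at most $1/3$ and finds a low vertex with probability at most $(6/7)^{k-1}$. The wedges are attached along four fan arms with decreasing depths, giving per-excursion success probability $O(n^{-\gamma})$ with $\gamma=\log_2(7/6)$.

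\textbf{Single insertion.} This step is not automatic either. If the head of the inserted edge were low, the walk would be absorbed with probability $1/(\Delta+1)$ at every visit to $s$. The paper instead makes the new edge enter an extra identical copy of the gadget: an additional tree, or the entry wedge of one arm.
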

\vspace{-5pt}

Interestingly, the tight outdegree $4$ threshold for this paradigm  is also twice the largest possible arboricity of an outerplanar graph, paralleling outdegree $2$ for forests. It also improves the outdegree bound of $(1+\eps)\alpha+2$ of the slower algorithms in
~\cite{CR22,CCHHQRS24} (with worst-case update time
$O(\eps^{-6} \log^3 n)$ for constant arboricity graphs). Indeed, for outerplanar graphs, $(1+\eps)\alpha+2$ gives outdegree 5 rather than 4; however, our algorithm is randomized and not deterministic as ~\cite{CR22,CCHHQRS24}, and it does not extend to arbitrary graphs of bounded arboricity.

Our upper bound analysis has two layers, both concerning the
same repair rule.  Layer I uses a simple path-counting argument
to establish the desired logarithmic bounds with threshold $48$,
cleanly exposing why the walk is short.  Layer II uses a more complex analysis to lower the threshold to the tight value $4$.
Thus only the analysis changes; the algorithm is identical in the two layers.

\vspace{-5pt}
\paragraph{Beyond outerplanar graphs.}
Since every outerplanar graph is $K_{2,3}$-minor-free, the family of $K_{2,3}$-minor-free graphs is a strict superclass of the outerplanar graphs: the additional $K_4$-minor exclusion that characterizes outerplanarity is not needed for the Layer-I analysis (for outdegree 48). Indeed, Layer I of our upper-bound analysis uses outerplanarity only through a bound on the number of simple paths of a given length between two fixed vertices. By extending this bound to $K_{2,t}$-minor-free graphs, we obtain the following result.
\vspace{-5pt}
\begin{restatable}[$K_{2,t}$-minor-free graphs]{theorem}{ThmMinorFree}
\label{thm:intro-k2t}
For every fixed $t\ge2$, fully dynamic $K_{2,t}$-minor-free graphs can be
maintained with maximum outdegree $48(t-1)$.  Against an adaptive
adversary, conditioned on the complete preceding history, every update
has expected update time and expected recourse $O(\log n)$.  Moreover,
throughout every polynomial-length execution, all update times and
recourse bounds are $O(\log n)$ simultaneously w.h.p.
\end{restatable}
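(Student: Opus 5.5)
We follow the Layer-I template, with the outerplanar path-counting bound replaced by one valid for $K_{2,t}$-minor-free graphs. Fix threshold $\Delta=48(t-1)$. Run the walk from a vertex $s$ of outdegree $\Delta+1$ (outdegrees are at most $\Delta+1$, since they were $\le\Delta$ before the insertion). While the walk has not stopped, every visited vertex has outdegree at least $\Delta$, so each step picks a specific out-edge with probability at most $1/\Delta$. The walk may revisit vertices, so we charge it via its loop-erasure. Each prefix of length $k$ is a walk $W$ with $\Pr[W]\le\Delta^{-k}$. The plan is to bound the number of length-$k$ directed walks from $s$ that stay inside the high-outdegree set $H$ by $n\cdot c_t^k$ with $c_t<\Delta$. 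Then
\[
\Pr(\tau>k)\le n\,(c_t/\Delta)^k ,
\]
which gives $\E[\tau]=O(\log n)$ and the tail $O(\log(n/\delta))$.

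\textbf{Combinatorial core.} We need to show that a $K_{2,t}$-minor-free graph has few simple paths of each length between two fixed vertices: at most $C\cdot\gamma_t^{\ell}$ paths of length $\ell$, with $\gamma_t=O(t)$. The approach runs as follows.
\begin{enumerate}
\item Two vertices $u,v$ in a $K_{2,t}$-minor-free graph have fewer than $t$ internally disjoint $u$–$v$ paths of length at least $2$. By Menger's theorem they are therefore separated by at most $t-1$ vertices (plus possibly the edge $uv$).
\item Recursing on these small separators, decompose every $s$–$x$ path into segments. At each branching the path chooses among at most $t-1$ separator vertices, which bounds the count by roughly $(t-1)^{O(\ell)}$.
\item Constrain this to a linear exponent by tracking that each choice consumes at least one edge.
\end{enumerate}
We would rather use the analogue of the known outerplanar fact, presumably stated earlier in the paper for Layer I, and generalize its proof. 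The outerplanar bound presumably gives $\gamma\le 24$ given the threshold $48=2\cdot 24$. The natural target is $\gamma_t\le 24(t-1)$, which explains the constant $48(t-1)$.

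\textbf{Walks to paths.} A non-simple walk must be handled by loop-erasure: decompose the walk into its loop-erased simple path plus the erased closed detours. Each closed detour is a cycle through a vertex, and its count is again bounded by the path bound applied to $u$–$v$ pairs, so the total weight of walks is dominated by a geometric series whose ratio stays below $1$ once $\Delta\ge 2\gamma_t$. The factor $2$ margin is exactly what absorbs the loops. Summing over the $\le n$ endpoints gives the factor $n$.

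\textbf{Dynamic wrap-up.} This part is identical to the forest and outerplanar arguments.
\begin{itemize}
\item The class is minor-closed, so every intermediate graph is $K_{2,t}$-minor-free and the hitting bound holds conditioned on the full history, which covers an adaptive adversary.
\item Update time is linear in the walk length, and recourse is at most the walk length.
\item A union bound over polynomially many updates with $\delta=n^{-c}$ gives the w.h.p.\ statement.
\end{itemize}

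\textbf{Main obstacle.} The hard step is the path-counting lemma with an exponent linear in $\ell$ and base $O(t)$. A naive separator recursion loses factors through the separator choices and through repeated subproblems. We would first attempt induction on $\ell$: fix the first vertex $w$ after $s$ along the path. We then try to show that the number of neighbours of $s$ through which long $s$–$x$ paths can pass, weighted appropriately, is $O(t)$ on average, using the fact that $K_{2,t}$-minor-free graphs have $O(tn)$ edges and bounded "fan-out" between any two vertices.
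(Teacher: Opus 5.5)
\textbf{The gap: the path-counting lemma is not proved.} This is exactly the step you flag, and the route you say you would try does not work. Fixing the first vertex after $s$ cannot give $O(t)$ choices. The vertex $s$ may have unboundedly many neighbours, each starting an $s$--$x$ path that avoids $s$; the hub of a fan is an example for $t\ge3$. The global bound $O(tn)$ on the number of edges says nothing about paths between a fixed pair.

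Your Menger step is the right ingredient; what is missing is the recursion it yields. Delete $ab$ if present, since a path with $k\ge2$ edges does not use it. Take a separator $X$ with $|X|\le t-1$, and split each $k$-edge $a$--$b$ path at the \emph{first} vertex $x\in X$ it meets, say after $i$ edges. This gives
\[
N_k(a,b)\le\sum_{x\in X}\sum_{i=1}^{k-1}N_i(a,x)\,N_{k-i}(x,b).
\]
Run an induction on $k$, simultaneously over all $K_{2,t}$-minor-free graphs and all endpoint pairs. It turns this inequality into the Catalan recurrence and gives
\[
N_k(a,b)\le C_{k-1}(t-1)^{k-1}\le(4(t-1))^{k-1}.
\]
Because the hypothesis is uniform over pairs, there are no repeated subproblems to control. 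Pairs that fail to glue into a simple path are harmless overcount.

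\textbf{The walk-to-path transfer is also incomplete.} The erased detours are arbitrary closed walks, not cycles, and they may nest to any depth, so the path bound does not count them directly. The paper needs a separate induction on a length bound $L$. With weights $2/\outdeg(u)$, it shows that the total weight of high closed walks at any high vertex is less than $2$. The proof removes the first edge and decomposes the remainder into a retained path with $\ell\ge2$ edges plus $\ell+1$ closed pieces, each bounded inductively.

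This induction closes only if each retained edge costs a small factor, namely $B\cdot\frac{2}{\Delta}\cdot 2=4B/\Delta=1/3$ with $\Delta=12B$. Your margin $\Delta=2\gamma_t$ is not enough for the recursion to close. Your guess of the constants is off for the same reason: the outerplanar count is $4^k$ and $48=12\cdot 4$, not $2\cdot 24$. Likewise $48(t-1)=12\cdot 4(t-1)$ with $B=4(t-1)$ coming from the Catalan bound.

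Your dynamic wrap-up (minor-closedness, conditioning on history, union bound) is fine and matches the paper.
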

\vspace{-3pt}
\noindent {\bf Remark.} We did not try to optimize the dependence on $t$ in the outdegree. 

\vspace{-8pt}

\paragraph{Dynamic distributed networks.}
The random walk repair procedure also admits an essentially lossless implementation
in the \EMPH{local-wakeup CONGEST model}~\cite{PPS16,CHK16,KS18,AOSS18,ALS22}.
Here the communication graph starts with no edges and undergoes one edge update per step; only the two updated endpoints initially wake up, and they must
restore the orientation using few synchronous rounds and few $O(\log n)$-bit messages.  We give the formal model in
\Cref{sec:dynamic-distributed}.

The entire repair procedure is carried by one \EMPH{token}.  It follows the sampled walk,
maintains its loop-erased path using local marks and predecessor pointers,
and finally backtracks along this path while flipping its edges.  A sampled
walk of length $L$, including repetitions, therefore costs $O(L)$ rounds
and $O(L)$ messages, without any global coordination.
\vspace{-5pt}
\begin{restatable}[Dynamic distributed orientation]{corollary}{CoroDistributed}
\label{cor:intro-distributed}
Under fully dynamic edge updates that preserve outerplanarity, maximum
outdegree $4$ can be maintained in the local-wakeup CONGEST model so that,
for every update and conditioned on the complete preceding history, the
expected round and message complexities are $O(\log n)$.  Throughout every
polynomial-length execution, all updates use $O(\log n)$ rounds and
messages simultaneously w.h.p.  For every fixed $t\ge2$, the analogous
result holds under updates that preserve $K_{2,t}$-minor-freeness, with
maximum outdegree $48(t-1)$.  These guarantees hold against an adaptive
adversary.
\end{restatable}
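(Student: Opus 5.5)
The plan is to derive \Cref{cor:intro-distributed} by simulating the sequential repair rule behind \Cref{thm:intro-outerplanar} and \Cref{thm:intro-k2t} with a single token, and then reusing their probabilistic guarantees unchanged. The key is to charge rounds and messages to the \emph{sampled} walk length $L$, repetitions included, rather than to the length of the flipped path. The analyses of those theorems bound $L$ itself: the walk's hitting time of a vertex of outdegree below the threshold, which is $\Delta=4$ for outerplanar graphs by \Cref{thm:intro-tight-threshold}. So it suffices to show that one repair costs $O(L+1)$ rounds and $O(L+1)$ messages of $O(\log n)$ bits.

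\textbf{Local state.} Every vertex stores only its out-neighbor list, of constant size $\Delta$, which it knows locally. When edge $uv$ is inserted, the endpoints orient it (say $u\to v$); this takes one round and one message. If $u$ now has outdegree $\Delta+1$, it creates a token carrying the update's identifier. A deletion requires only the two endpoints to update their lists, and no repair is needed.

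\textbf{Forward phase.} At its current vertex $x$, the token checks whether $\outdeg(x)<\Delta$. If not, $x$ samples a uniform out-neighbor $y$ with local randomness and forwards the token over the edge $xy$; this costs one round and one message. To maintain the loop-erased path:
\begin{enumerate}[(i)]
\item Each vertex $x$ on the current path stores a mark (the update identifier) and a predecessor pointer $p(x)$, which is its local port.
\item When the token arrives at $y$ and $y$ is unmarked, $y$ sets $p(y)=x$ and marks itself.
\item If $y$ is already marked, a cycle has closed. The segment from $y$ to $x$ must be erased, which we do lazily. The path is implicitly the chain of predecessor pointers starting from the token's current vertex. Arriving at a marked $y$ just means we continue from $y$ with $y$'s existing pointer. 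The stale marks on the erased loop are never used, because the backward phase follows pointers only from the endpoint.
\end{enumerate}
Stale marks would matter only if the walk re-entered an erased vertex $z$. In that case $z$ has a stale pointer, so $z$ must reset its pointer to the current sender. A vertex cannot cheaply tell whether it is stale or on the current path. The fix is to always overwrite $p(y)$ with the current sender $x$ on every arrival, even when $y$ is marked. Then the backward chain from the endpoint is exactly the loop-erasure: following the last-entry pointers reproduces chronological loop erasure. This needs no information beyond the sender, so forward progress is one message per step.

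\textbf{Backward phase.} When the token reaches $w$ with $\outdeg(w)<\Delta$, it walks back along the pointers $p$ until it returns to the start vertex, flipping each traversed edge; both endpoints update their lists, at one round and one message per edge. Marks carry the update identifier, so stale marks from earlier updates are harmless and never need clearing. Along the way, the flipping messages also reset the pointers on the path. The path has at most $L$ edges, so the whole repair costs $O(L+1)$ rounds and messages. Since $\Delta$ is constant, every message carries only an identifier and a port, i.e.\ $O(\log n)$ bits.

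\textbf{Probabilistic guarantees and the main obstacle.} The orientation after the repair coincides with that of the sequential algorithm under the same random choices. Hence the bounds from \Cref{thm:intro-outerplanar} on the sampled length carry over. Conditioned on the history, $\E[L]=O(\log n)$, and the tail bound $L=O(\log(n/\delta))$ with probability at least $1-\delta$ holds in every orientation, hence against an adaptive adversary. A union bound over polynomially many updates with $\delta=n^{-c}$ then gives the w.h.p.\ statement. The $K_{2,t}$ case is identical, using the hitting-time bound behind \Cref{thm:intro-k2t} with $\Delta=48(t-1)$.

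The main obstacle is verifying that the last-entry-pointer rule yields exactly a simple directed path from start to end whose flip preserves outdegrees. I would prove this by induction on the walk steps, with the invariant that following pointers from the current vertex traces a simple path back to the start. When the walk enters a vertex $y$ that is already on the current chain, overwriting $p(y)$ truncates the loop and keeps the chain simple. When the walk enters a stale vertex that is not on the chain, the overwrite simply extends the chain.
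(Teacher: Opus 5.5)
There is a genuine gap in the loop-erasure mechanism, which is the one nontrivial ingredient of the distributed argument. The rule you finally adopt, overwriting $p(y)$ with the current sender on every arrival, does not truncate a loop; it closes one in the pointer structure. Suppose the current chain is $s=y_0,y_1,\ldots,y_j=x$ and the walk steps $x\to y_i$ with $i<j$. Setting $p(y_i):=y_j$ then produces the pointer cycle $y_i\to y_j\to y_{j-1}\to\cdots\to y_{i+1}\to y_i$.

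For a concrete instance, take a directed triangle $a\to b\to c\to a$ with further edges $s\to a$, $a\to d$, $d\to w$. The trajectory $s,a,b,c,a,d,w$ leaves $p(w)=d$, $p(d)=a$, $p(a)=c$, $p(c)=b$, $p(b)=a$. Your backward phase therefore runs $w,d,a,c,b,a,c,\ldots$ forever, flipping triangle edges back and forth, and never reaches $s$. The retained path of \Cref{lem:decomposition} is $s\to a\to d\to w$. So the inductive invariant you planned to prove fails exactly at a re-entry into the current chain. Likewise, the claim that last-entry predecessor pointers read backward give the chronological loop erasure is false. That erasure is what last-exit successor pointers give when read forward from $s$.

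Either of two fixes closes the gap.
\begin{itemize}
\item \textbf{The paper's fix.} Keep the marks equal to the current path. On arriving at a marked $y$ from $x$, the token goes back over the closing edge and backtracks along the pointers from $x$ to $y$, clearing marks. Each erased edge is charged to the walk step that placed it on the path, so all erasures together cost $O(L)$.
\item \textbf{Your own first rule.} The rule in your item (iii) already works: mark $s$ at the start and never overwrite a pointer set during the current update. Your worry about re-entered erased vertices is unfounded. Here $p(v)$ is the sender at $v$'s first arrival, and that sender was first visited strictly earlier. So first-arrival times strictly decrease along the chain, and the chain from $w$ is a simple directed $s$--$w$ path of walk edges.
\end{itemize}

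Under the second rule the path is generally not the loop erasure, so the resulting orientation no longer coincides with the sequential algorithm's. That is harmless: flipping any simple directed $s$--$w$ path restores the outdegree bound, and \Cref{thm:threshold-four,thm:k2t-walk} hold for every orientation. You would just need to drop the coincidence claim. With either fix, the rest of your argument matches the paper's: charging all costs to the sampled length $L$, $O(\log n)$-bit messages for constant $\Delta$, and conditional tail bounds followed by a union bound.
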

\vspace{-5pt}

This direct implementation is significant because many known centralized
orientation algorithms coordinate nonlocal repairs through global data
structures (e.g., a queue in the BF algorithm), so it is problematic to implement them efficiently in distributed networks.  The relevant
distributed orientation results~\cite{PPS16,KS18} focus primarily on
amortized guarantees and provide no comparable non-amortized message
bounds. Refer to
\Cref{sec:dynamic-distributed} for more details.

\vspace{-5pt}
\paragraph{Limitations.}
We show that the exponential hitting time of \Cref{thm:intro-tight-threshold}
for the threshold-$2$   walk stems from a more basic obstruction: even complete knowledge incurs $\Omega(n)$ \EMPH{amortized} recourse.

\vspace{-5pt}
\begin{restatable}[Outdegree $2$ requires $\Omega(n)$ recourse]{theorem}{ThmRecourse}
\label{thm:intro-outdegree-two}
For sufficiently large $n$, there is a sequence of $\Theta(n)$
updates, starting with no edges and preserving outerplanarity, such
that every sequence of orientations of  outdegree $2$ incurs
$\Omega(n^2)$ flips.  Thus, even
offline, the amortized recourse is
$\Omega(n)$.
\end{restatable}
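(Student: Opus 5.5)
The plan is a \emph{cut-potential argument} on a long, nearly maximal outerplanar strip. Two update patterns, applied at the two ends of the strip, should force the orientation of a linear number of cuts to change in opposite directions. Repeating the alternation $\Theta(n)$ times then forces $\Omega(n)$ flips per round, hence $\Omega(n^2)$ flips in total, and this holds for every (even offline) sequence of outdegree-$2$ orientations.

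\textbf{Step 1 (the strip).} First, build the triangulated strip on $v_1,\dots,v_m$ with $m=\Theta(n)$, using edges $v_iv_{i+1}$ and $v_iv_{i+2}$. This graph is maximal outerplanar with $2m-3$ edges. For each $k$, the cut between the prefix $P_k=\{v_1,\dots,v_k\}$ and the corresponding suffix is crossed by exactly three edges, namely $v_{k-1}v_{k+1}$, $v_kv_{k+1}$ and $v_kv_{k+2}$. Every edge crosses at most two consecutive cuts. For an orientation of outdegree at most $2$, let $c_k$ be the number of these crossing edges oriented from $P_k$ to its complement. Since $P_k$ spans $2k-3$ internal edges and has capacity $2k$, we get $c_k\le 3-s(P_k)$, where $s(P_k)$ is the total slack $\sum_{v\in P_k}(2-\outdeg(v))$ counted on internal edges. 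The symmetric inequality holds for suffixes.

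\textbf{Step 2 (end gadgets).} Next, attach small outerplanar gadgets at the two ends. Each gadget is switched between an ``on'' state and an ``off'' state by $O(1)$ edge insertions and deletions. The gadgets must be designed so that the global slack becomes $0$ once a gadget is on, with all capacity consumed. Then the counting identity $\sum_{v\in P_k}\outdeg(v) = |E(P_k)| + c_k$ pins down $c_k$ exactly for every $k$ in the middle range. Concretely, the aim is $c_k\ge 2$ for all $\Omega(n)$ middle cuts when the left gadget is on and the right one is off, and $c_k\le 1$ in the mirrored state. The intended mechanism is that the on-gadget has one more edge than its vertices can absorb internally. That unit of demand must therefore be exported across every cut, toward the only remaining slack, which sits at the opposite end.

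\textbf{Step 3 (counting flips).} A single flip changes $c_k$ by one for at most two values of $k$. Moving from one state to the other changes $c_k$ for $\Omega(n)$ cuts, so it costs $\Omega(n)$ flips regardless of the intermediate orientations. Alternating the two states $\Theta(n)$ times uses $\Theta(n)$ updates. Every state keeps the graph outerplanar, because the gadgets are attached along outer edges. The total is therefore $\Omega(n^2)$ flips.

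\textbf{Main obstacle.} The hard part is Step 2. Outerplanarity caps every induced subgraph at $2|U|-3$ edges, so a gadget cannot be locally overfull. The forcing must instead come from the global slack being exactly exhausted, and this has to happen without the slack being absorbable near the same end. My first attempt would be to make the ends of the strip dense: glue additional triangulated fans onto the end triangles so that total edges equal total capacity minus the slack allowed. I would then toggle a single chord at each end. After that, I would verify by the prefix/suffix capacity inequalities that no configuration can satisfy both end constraints locally.
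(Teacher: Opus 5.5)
Your proposal has a genuine gap, and it sits in Step~2. That step is not merely unfinished: it cannot be completed in the form you describe.

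\textbf{The stated target is impossible.} Every outerplanar graph on $N\ge 2$ vertices has at most $2N-3$ edges. So under an outdegree cap of $2$, the global slack $\sum_v (2-d^{+}(v))$ is always at least $3$, and no toggling of gadget edges can make it $0$.

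\textbf{More fundamentally, no gadget attached to your strip can force the strip to change at all.} Let $S=\{v_1,\dots,v_m\}$ and let $Q$ be the remaining vertices. Since the strip is maximal outerplanar and the whole graph stays outerplanar, $G[S]$ is exactly the strip, with $2|S|-3$ edges, in every state. Take any outdegree-$\le 2$ orientation of $G[S]$ and orient every $Q$--$S$ edge into $S$. By Hall's theorem, the edges of $G[Q]$ together with these forced edges can be oriented with outdegree at most $2$ in $Q$ iff $|E(U)|+|F(U)|\le 2|U|$ for every nonempty $U\subseteq Q$, where $F(U)$ is the set of edges from $U$ to $S$. This holds, because
$|E(U)|+|F(U)|\le |E(G[U\cup S])|-|E(G[S])|\le (2|U|+2|S|-3)-(2|S|-3)=2|U|$.
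Hence an offline algorithm can freeze the strip's orientation forever and reorient only gadget edges. Every $c_k$ then stays constant, and the cut-potential argument yields no lower bound.

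Your per-cut identities (effectively $c_k=s(S_k)$) and the ``one flip moves at most two cuts'' count are fine. The problem is that nothing forces the slack to move. A small gadget can always absorb its own demand, because outerplanarity leaves every side with room to spare.

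\textbf{The missing idea is a forcing mechanism in which both endpoints of the inserted edge sit in large structures far from slack.} You must also live with the fact that the adversary cannot choose which endpoint becomes the tail. The paper does this as follows.
\begin{itemize}
\item It uses two disjoint copies $A,B$ of the strip $H_m$. Each copy has exactly $3$ units of slack in any outdegree-$2$ orientation.
\item In each round it inserts and deletes bridges $(a_i,b_i)$ between seven ports per copy, pairwise at distance $\ge 2r$, with $r=\Omega(m)$.
\item At the start of a round there are at most six slack vertices in total. Hence at most six of the fourteen ports lie within distance $r$ of one.
\item The seven tails $z_i$ are distinct, so some tail $z_i$ is far from all initial slack in its copy.
\item The internal outdegree of that $z_i$ must drop from $2$ to at most $1$. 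So the edges whose direction differs since the round began contain a path from $z_i$ to a formerly slack vertex, of length at least $r$.
\end{itemize}
This gives $\Omega(m)$ flips per round, over $m$ rounds of $O(1)$ updates each. That pigeonhole-over-ports step is exactly what your end-gadget design lacks.
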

\vspace{-5pt}
This complements the lower bound of~\cite{CHRT22}, which shows that maintaining a $3$-orientation in {\em planar} graphs requires $\Omega(n)$ amortized flips even offline~\cite{CHRT22}; Brodal and Fagerberg had earlier proved an analogous obstruction without the planarity restriction~\cite{BF99}.  Our construction lowers the outdegree to $2$ while restricting the graphs further to outerplanar graphs.

Next, we show that the random walk paradigm does not extend well to all series-parallel graphs.  These graphs are planar and have treewidth at most $2$, but they can contain $K_{2,m}$ as a minor for arbitrarily large $m$, creating many competing walks between the same two terminals.
\vspace{-5pt}
\begin{restatable}[Failure in series-parallel graphs]{theorem}{ThmSeriesParallel}
\label{thm:intro-series-parallel}
For every fixed $\Delta \ge3$, there are oriented $n$-vertex series-parallel
graphs and starting vertices from which the uniform directed walk needs
$n^{\Omega_\Delta(1)}$ steps, both in expectation and with high probability, to
reach outdegree below $\Delta$.  
\end{restatable}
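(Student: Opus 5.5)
Fix $\Delta\ge 3$. The plan is to build a recursive series-parallel gadget in which the walk, started at a designated source $s$, is pushed toward a sink-like region far more strongly than toward the exit. Reaching a vertex of outdegree below $\Delta$ will then be a rare event, like a biased gambler's ruin.

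The basic building block is a \emph{trap}. Take two terminals $a,b$ joined by $m$ internally disjoint paths of length two, $a\to c_i\to b$. This is a parallel composition, so the graph is series-parallel and contains a $K_{2,m}$ minor, which is exactly the structure that outerplanarity forbids. Each middle vertex $c_i$ gets outdegree $\Delta$. One outgoing edge goes forward to $b$, and the remaining $\Delta-1$ edges point back toward $a$, either directly or through short pendant series-parallel pieces in which every vertex also has outdegree at least $\Delta$. Such pieces exist: we can use blobs built from parallel compositions of paths, with edges oriented back into $a$ or $c_i$. Similarly, $a$ is given many edges into the $c_i$ and only one escape direction. From $a$, the walk then enters some $c_i$ and returns to $a$ with probability about $(\Delta-1)/\Delta$, and proceeds to $b$ with probability only $1/\Delta$ per attempt.

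We then chain such layers in series, $s=a_0,a_1,\dots,a_L$, and design them so that from each $a_j$ the walk moves backward (to $a_{j-1}$) with probability $p$ and forward with probability $q<p$. Only $a_L$, or a vertex just beyond it, has outdegree below $\Delta$. The projection of the walk onto the layer index $j$ is then a biased birth--death chain with reflection at $0$. Its expected hitting time of $L$ is $\Theta((p/q)^L)$. By the standard gambler's-ruin estimate, the number of excursions from $a_0$ before reaching $L$ is geometric with success probability $\approx (q/p)^L$. So the walk exceeds $(p/q)^{L}/\mathrm{poly}$ steps with high probability.

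The remaining task is bookkeeping of size. To get a constant bias $p/q\ge 1+\eps_\Delta$, each layer needs only $O_\Delta(1)$ vertices, since the backward edges must carry outdegree but the gadget can be constant size. The total size is then $n=\Theta_\Delta(L)$, which would give an exponential hitting time. That is suspicious, because the paper claims only $n^{\Omega_\Delta(1)}$, presumably because series-parallel graphs have arboricity $\le 2$ and some counting argument prevents purely local backward drift at constant cost. So the main obstacle is building backward-biased gadgets consistent with all outdegrees being at least $\Delta$ while the graph stays series-parallel: edge counts force average outdegree below $2$, so most vertices must have low outdegree and must be avoided.

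To handle this, I would let layer $j$ hide its low-outdegree vertices deep inside recursively nested traps. In these traps the walk gets funneled back with probability at least $1-\Delta^{-k}$, at a cost of about $\Delta^{k}$ vertices per shielded leaf. Each low vertex is placed at depth $k$ of a tree-like series-parallel structure whose internal vertices have outdegree $\Delta$, with $\Delta-1$ edges back toward the root. Balancing the shielding cost against the number of layers should give layer size $n^{\Theta(1/\log\Delta)}$-type tradeoffs, and a final hitting time of $n^{\Omega_\Delta(1)}$ both in expectation and with high probability via the geometric-excursion argument above.

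The last step is verifying that every leak into a shield has probability small enough that the total leak over $n^{c}$ steps stays below $1/\mathrm{poly}$. This step is where I expect the delicate parameter choice.
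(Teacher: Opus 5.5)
There is a genuine gap: the proposal never produces a construction, and the gadget it starts from cannot be built. In your trap the edge $ac_i$ is already oriented $a\to c_i$, so none of the $\Delta-1$ ``backward'' out-edges of $c_i$ can go directly to $a$. They must all enter pendant pieces $P$ of high vertices whose out-edges return into $P\cup\{a,c_i\}$, and such pieces do not exist in a series-parallel graph. Indeed, $G[P\cup\{a,c_i\}]$ has at most $2(|P|+2)-3=2|P|+1$ edges, yet the vertices of $P$ are tails of at least $\Delta|P|\ge3|P|$ of them. This forces $|P|\le1$, and a single vertex cannot have three out-neighbours in $\{a,c_i\}$.

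More generally, the bound $|E(G[S])|\le2|S|-3$ shows that every set $S$ of high vertices is the tail of at least $(\Delta-2)|S|>0$ edges leaving $S$. Hence no out-closed set of high vertices exists, and every gadget leaks into low vertices with a probability you never control. The constant backward bias $p>q$ per constant-size layer, on which your birth--death chain rests, is therefore never realised, as you yourself suspect.

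The proposed repair is only a sketch. The funnel that returns with probability $1-\Delta^{-k}$ at a cost of $\Delta^k$ vertices is asserted rather than built. Its ``$\Delta-1$ edges back toward the root'' cannot all go to the root in a simple graph. You do not show that the other $\Delta-2$ back-neighbours can be high vertices without creating new leakage or a $K_4$ minor; for instance, root edges together with grandparent edges already span a $K_4$. The final leakage and parameter balancing is explicitly left open. So neither the outdegree pattern nor the hitting-time bound is established.

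The missing idea is that no backward drift and no layering are needed. The paper lets the walk drift \emph{toward} the low vertices and makes them hard to reach through a single reset vertex. Take $s$ with edges $s\to r_i$ to the roots of $\Delta$ disjoint trees of height $h$, where roots have $\Delta$ children and other internal vertices have $\Delta-1$. Direct tree edges away from the roots, and add an edge from every nonroot tree vertex to $s$. This gives the following:
\begin{itemize}
\item every internal vertex has outdegree exactly $\Delta$;
\item the low vertices are exactly the leaves, a constant fraction of the $\Theta_\Delta((\Delta-1)^h)$ vertices, consistent with your edge count;
\item the bags $\{s,p(v),v\}$ give treewidth at most $2$;
\item an excursion from $s$ reaches a leaf if and only if it avoids the reset edge at $h-1$ consecutive vertices, which happens with probability $p_h=((\Delta-1)/\Delta)^{h-1}=O_\Delta(n^{-\beta_\Delta})$, where $\beta_\Delta=\log_{\Delta-1}\frac{\Delta}{\Delta-1}$.
\end{itemize}
Your geometric-excursion step then finishes the proof exactly as you intended: $\E_s[\tau_\Delta]\ge1/p_h$ and $\Prob_s(\tau_\Delta<M)\le Mp_h$.
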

\vspace{-5pt}

\vspace{-6pt}
\refstepcounter{subsection}\label{sec:intro-related}
\paragraph{1.4~ Further related work in a nutshell.~}

Low-outdegree orientations and closely related structures, such as \emph{forest}, \emph{pseudoforest}, and 
\emph{core decompositions}, 
have been studied extensively in the dynamic setting, as well as in various other settings and computational models, including the static sequential setting, the LOCAL and CONGEST models of distributed computing, the parallel and parallel batch-dynamic models; see ~\cite{MatulaBeck83,chiba1985arboricity,Kowalik06,BE08,PPS16,GS17,BHNT15,KS18,LSZ24,LSYDS22,DLRSSY22,GK25,BBDFGH26}, and the references therein.

Given the breadth of models in which edge orientation has been studied, the local, nearly structure-free nature of the random walk update rule is particularly appealing, as it seems \EMPH{amenable to extensions to additional computational models} beyond the dynamic settings studied in this work.

\vspace{-5pt}
\paragraph{Organization.}
\Cref{sec:outerplanar-layer1} presents the random walk repair rule
and proves the required bounds first at threshold $48$.
\Cref{sec:outerplanar-layer2} sharpens the analysis to
the tight threshold $4$, completing the proof of
\Cref{thm:intro-outerplanar}.
\Cref{sec:k2t-extension} extends Layer I to
$K_{2,t}$-minor-free graphs, and
\Cref{sec:dynamic-distributed} gives the dynamic distributed
implementation.
\Cref{sec:outerplanar-three-lower,sec:sp-walk-lower} establish the
threshold-$3$ and threshold-$2$ walk lower bounds and the
series-parallel obstruction, while
\Cref{sec:outdegree-two-lower-bound} proves the offline recourse lower
bound for outdegree $2$.
We conclude with open questions in \Cref{sec:conclusion}.
\vspace{-5pt}

\section{The Outerplanar Upper Bound:
Layer I---Threshold \texorpdfstring{$48$}{48}}
\label{sec:outerplanar-layer1}

We begin with a simple analysis at outdegree threshold $48$.  The repair rule is stated below for an arbitrary fixed threshold $\Delta$, because
Layer II uses exactly the same rule with $\Delta=4$; the two layers
differ only in their hitting-time analyses.

\subsection{The Random Walk Repair Rule}
\label{sec:randomwalk}

Fix an integer threshold $\Delta$.  Suppose that the current orientation
has maximum outdegree at most $\Delta$, and insert a new edge.

\begin{enumerate}[itemsep=0.4pt,parsep=0pt,topsep=2pt,partopsep=0pt]
  \item Name the endpoints $s,v$ and orient the new edge as $s\to v$.
  If $\outdeg(s)\le\Delta$, stop.

  \item Otherwise, $\outdeg(s)=\Delta+1$.  Starting at $s$, run the
  uniform directed random walk until it first reaches a vertex $z$ of
  outdegree less than $\Delta$.  Record the entire walk; do not mark
  vertices or forbid repetitions.

  \item Extract from the recorded walk its retained simple directed
  path from $s$ to $z$, as defined in
  \Cref{lem:decomposition}.

  \item Flip every edge of the retained path.
\end{enumerate}

Flipping the retained path restores the outdegree bound.  Its first
vertex $s$ loses one outgoing edge, its last vertex $z$ gains one, and
every internal vertex loses one outgoing path edge and gains another.
Thus $s$ returns from outdegree $\Delta+1$ to $\Delta$, while $z$,
which had outdegree less than $\Delta$, ends with outdegree at most
$\Delta$.  Deletions require no repair.

\noindent\textbf{Remark.}
We work in the randomized word-RAM model with $O(\log n)$-bit words and
constant-time sampling of a uniform integer from a constant-size range.
For fixed $\Delta$, during a repair every outdegree is at most
$\Delta+1$; hence each vertex's outgoing edges can be stored in a
fixed-size indexable array.  Thus sampling an outgoing edge and updating
an outgoing list take $O(1)$ time, and the total work is linear in the
length of the sampled walk, including repetitions.

\subsection{Removing Repetitions}
\label{sec:walk-decomposition}

A \emph{closed walk at $x$} is a walk that starts and ends at $x$; it may
have length zero.  The following elementary decomposition is the reason
that repeated vertices do not cause serious trouble.

\begin{lemma}[Walk decomposition]
\label{lem:decomposition}
Every directed walk $W=(x_0,x_1,\cdots, x_m)$ can be written as
\[
  W=C_0,e_0,C_1,e_1,\ldots,e_{k-1},C_k,
\]
where
\[
  P=(v_0,v_1,\ldots,v_k),
  \qquad v_0=x_0,\quad v_k=x_m,
\]
is a simple directed path, $e_i=v_i\to v_{i+1}$, and $C_i$ is a closed
walk at $v_i$.
Every vertex occurrence of $W$ belongs to exactly one $C_j$ of
these pieces. Every edge occurrence either lies in exactly one $C_j$ or is one of the retained edges $e_j$.  
The decomposition can be found in time linear in the length
of $W$.
\end{lemma}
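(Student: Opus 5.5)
The decomposition is loop erasure, chosen so that the retained path ends at $x_m$ and each closed detour is attached to a path vertex. I would define $v_0,\dots,v_k$ and cut positions $p_0<q_0<p_1<q_1<\cdots$ greedily from the walk. Set $v_0=x_0$. Given $v_i=x_{p_i}$, where $p_i$ is the position at which we entered $v_i$ ($p_0=0$), let $q_i$ be the \emph{last} index $j\ge p_i$ with $x_j=v_i$. Define $C_i=(x_{p_i},\dots,x_{q_i})$, which is a closed walk at $v_i$, possibly of length zero. If $q_i=m$, stop with $k=i$. Otherwise set $e_i=x_{q_i}\to x_{q_i+1}$, $v_{i+1}=x_{q_i+1}$, and $p_{i+1}=q_i+1$.

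The key claims are proved in this order.
\begin{enumerate}[itemsep=0.4pt,parsep=0pt,topsep=2pt,partopsep=0pt]
  \item The pieces tile $W$. The index ranges $[p_i,q_i]$ partition $\{0,\dots,m\}$, so every vertex occurrence lies in exactly one $C_j$. Every edge occurrence $x_j\to x_{j+1}$ either has both ends in one range, and then lies in that $C_i$, or has $j=q_i$, and then it is $e_i$. The procedure terminates because $p_{i+1}>q_i\ge p_i$. The final piece ends at $m$, so $v_k=x_m$.
  \item $P$ is simple. Suppose $v_{i'}=v_i$ with $i'>i$. Then $v_i$ occurs at position $p_{i'}>q_i$, which contradicts the maximality of $q_i$.
  \item $P$ is a directed path. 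Each $e_i$ is a directed edge of $W$ from $v_i$ to $v_{i+1}$.
\end{enumerate}

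For the linear-time bound, first make one backward scan that stores in a hash table, or an array indexed by vertex, the last occurrence index of every vertex of $W$. The forward jumps then cost $O(1)$ each, and the total work is $O(m)$. If one prefers to avoid random access into tables, the standard alternative is a stack-based loop erasure with per-vertex marks and predecessor pointers. That version also runs in $O(m)$, because each popped vertex was pushed once. It produces the same path, since chronological loop erasure and last-exit decomposition coincide here.

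The main obstacle is a subtle point in the second claim: it must be the same vertex \emph{label} that is maximized. The argument works because $q_i$ is the last occurrence over the whole walk, not just beyond $p_i$; any later reappearance of $v_i$ would lie at a position greater than $q_i$. Everything else is routine bookkeeping.
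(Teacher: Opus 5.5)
Your proposal is correct and takes the same route as the paper. You make a preliminary scan recording each vertex's last position in $W$, then greedily cut $C_i$ from the position where $v_i$ is entered to its last occurrence and retain the next edge; the path is simple because each retained vertex is left from its final occurrence, and the pieces partition the walk. (The concern in your last paragraph is vacuous: since $x_{p_i}=v_i$, the last index $j\ge p_i$ with $x_j=v_i$ is automatically the global last occurrence of $v_i$ in $W$.)
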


\begin{proof}
First scan $W$ once and record the last position at which each visited
vertex occurs. 
Start at position $r=0$.  At each stage, let
$v_j=x_r$, and let $i$ be the last position at which $v_j$ occurs in $W$.
Set  $C_j=(x_r,x_{r+1},\ldots,x_i)$.
This is a closed walk at $v_j$, as $x_r=x_i=v_j$.
If $i=m$, stop.  Otherwise, retain the
next edge $e_j=x_i\to x_{i+1}$,
set $r=i+1$, and repeat.

A retained vertex cannot occur later in the walk, because we retain the
edge following its last occurrence.  Hence all retained vertices are
distinct, and the retained edges form a simple directed path from
$x_0$ to $x_m$.

The closed pieces are consecutive and partition the vertex occurrences
of $W$.  Together with the retained edges, they also partition its edge
occurrences.  After the initial scan, the construction takes time
linear in the length of $W$.
\end{proof}
Informally, $C_i$ contains all the wandering done at $v_i$ before the walk
leaves $v_i$ for the last time.  We call $P$ the \emph{retained path} of
$W$.

\subsection{The Walk Bound at Threshold \texorpdfstring{$48$}{48}}
\label{sec:walk-48}

Let $G$ be a simple outerplanar graph whose edges have been oriented.
Call a vertex \emph{high} if its outdegree is at least $48$, and
\emph{low} otherwise.  Starting at a vertex $s$, run the uniform
directed random walk until it first reaches a low vertex.  Let $\tau$
be the number of edges traversed before the walk stops; if $s$ is
already low, then $\tau=0$.

\begin{theorem}[Outerplanar walk with threshold $48$]
\label{thm:walk}
For every orientation of every $n$-vertex outerplanar graph, every starting
vertex $s$, and every integer $t\ge0$,
\[
  \Prob(\tau>t)\le 3n\,2^{-t}.
\]
Consequently, $\E[\tau]\le \left\lceil\log_2(3n)\right\rceil+2$.
Moreover, for every $0<\delta<1$,
\[
  \Prob\!\left(
    \tau>\left\lceil\log_2\frac{3n}{\delta}\right\rceil
  \right)
  \le\delta.
\]
In particular, $\tau=O(\log n)$ both in expectation and with high
probability.
\end{theorem}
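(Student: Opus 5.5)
The approach is the forest argument: bound the tail by a sum over possible walk prefixes, and control the number of prefixes using the walk decomposition of \Cref{lem:decomposition} together with path counting in outerplanar graphs. This plan is tentative: the weight bound of step two and the way \Cref{lem:decomposition} is applied to the closed pieces in step three are unchecked, and step three is where it may fail.

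\emph{Reduction to a weighted sum.} The event $\tau>t$ means that the first $t$ steps $W=(x_0,\ldots,x_t)$ visit only high vertices, except possibly $x_t$. Any fixed such prefix is followed with probability $\prod_{i<t}1/\outdeg(x_i)\le 48^{-t}$. Hence $\Prob(\tau>t)\le 48^{-t}\cdot N_t$, where $N_t$ is the number of directed walks of length $t$ from $s$ in the oriented outerplanar graph. It therefore suffices to show $N_t\le 3n\cdot 24^{t}$. Counting directed walks directly only gives a bound like $49^t$, since outdegrees may reach $\Delta+1$, which is useless. So each walk must instead be encoded using the outerplanar structure.

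\emph{Encoding.} Apply \Cref{lem:decomposition} to $W$: it splits $W$ into a retained simple path $P$ from $s$ to $x_t$ of length $k$, and closed walks $C_0,\ldots,C_k$ at the vertices of $P$ with total length $t-k$. The retained part is counted by (endpoint $z$) $\times$ (simple directed path $s\to z$ of length $k$). I would prove a counting lemma: in an outerplanar graph, the number of simple paths of length $k$ between two fixed vertices is at most $c^k$ for a small constant $c$ (I expect $c=2$). The proof would induct on an ear/chord decomposition: the blocks are cycles with noncrossing chords, and at each step a path either follows the outer cycle or takes a chord, giving at most two choices per edge. Summing over $z$ contributes the factor $n$, and the $3$ absorbs the $\le 2n-3$ edges and small slack.

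\emph{Closed pieces, the main obstacle.} A closed walk at $v$ of length $\ell$ must also be encoded with only $c'^{\ell}$ possibilities, and $c\cdot c'$ must be at most $24$ so that the total stays below $24^t$. My first attempt is to decompose each closed walk recursively: cut it at its first step $v\to u$, apply \Cref{lem:decomposition} to the remainder from $u$ back to $v$, and continue. This yields a tree-like encoding in which each step is either a simple-path step, charged $c$ choices via the counting lemma, or a Dyck-type bracket marker, charged $O(1)$ choices. Catalan-style counting then gives $c'^{\ell}$ with $c'\le 4c$. The risk is that the constants exceed $48$. If they do, I would instead bound $\Prob(\text{retained path}=P)$ directly, using that each closed piece is an excursion whose step weights are also at most $1/48$, and sum a generating function over returns.

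\emph{Finishing.} Once $\Prob(\tau>t)\le 3n2^{-t}$ holds, sum the tail: $\E[\tau]=\sum_t\Prob(\tau>t)$. Split at $t_0=\lceil\log_2(3n)\rceil$, using the trivial bound $1$ below $t_0$ and a geometric series above, to get the stated expectation bound. Choosing $t=\lceil\log_2(3n/\delta)\rceil$ gives the high-probability bound directly.
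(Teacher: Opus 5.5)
There is a genuine gap, and it is in your first reduction. The theorem covers \emph{every} orientation of every outerplanar graph, so high vertices can have arbitrarily large outdegree. Your remark that outdegrees ``may reach $\Delta+1$'' holds only in the dynamic application. Replacing $\prod_{i<t}1/\outdeg(x_i)$ by $48^{-t}$ discards exactly the factor that controls branching at such vertices, and the resulting target $N_t\le 3n\cdot 24^t$ is false.

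Here is a counterexample. Take a windmill with center $s$ and $D$ directed triangles $s\to v_i\to w_i\to s$, and give each $v_i$ and $w_i$ forty-seven outgoing pendant leaves. This graph is outerplanar with $n=96D+1$. Every $v_i,w_i$ has outdegree $48$ and $s$ has outdegree $D$, so there are at least $D^m$ high walks of length $3m$ from $s$. For $D>24^3$ this exceeds $3n\cdot 24^{3m}$ once $m$ is large.

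The same example breaks your closed-piece encoding. There are $D$ high closed walks of length $3$ at $s$, so no bound of the form $c'^{\ell}$ is possible. The problem is the first edge of each closed walk: its head is not a fixed endpoint, so no fixed-endpoint path count can pay for it.

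Your ``two choices per edge'' sketch of the path-counting lemma is also invalid, since a vertex can have unbounded degree. What bounds the count is fixing \emph{both} endpoints. The available bound is $4^k$ (Matolcsi--Nagy), which is too large for your budget $c\cdot c'\le 24$ with $c'\le 4c$.

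The missing idea is to keep probabilities normalized at each vertex instead of counting walks.
\begin{itemize}
\item Put $\wt(u\to v):=2/\outdeg(u)$ for high $u$. Then a high walk of length $t$ is followed with probability exactly $2^{-t}\wt(W)$, every edge has weight at most $1/24$, and the out-weights at each high vertex sum to exactly $2$.
\item Prove by induction on the length bound that the total weight of high closed walks at any high $x$ is less than $2$. The free first edge $x\to y$ is paid for by the normalization, which gives total weight at most $2$.
\item The remainder from $y$ back to $x$ has \emph{fixed} endpoints. By simplicity its retained path has length $\ell\ge2$ and is counted by $4^\ell$, and its $\ell+1$ closed pieces are bounded by induction. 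This gives $1+2\sum_{\ell\ge2}4^\ell 24^{-\ell}2^{\ell+1}=5/3<2$.
\item Retained paths from $s$ with $k$ edges then contribute at most $n4^k24^{-k}2^{k+1}=2n3^{-k}$, which sums to $3n$.
\end{itemize}

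Your fallback of bounding probabilities directly points in this direction. It only works if the first step of each closed piece is summed against $\sum_y 1/\outdeg(x)=1$, not bounded term by term by $1/48$. Your final computation of the expectation and the tail bound from $\Prob(\tau>t)\le 3n2^{-t}$ is fine.
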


\noindent\textbf{Remark.}
The value $48$ is not optimized; we choose it only to simplify the counting
argument.

We use the following fixed-endpoint bound of Matolcsi and Nagy
\cite[Section~3.1, proof of Theorem~1.8]{MN22}.  They let $f(k)$
denote the maximum number of simple paths with exactly $k$ edges between
two fixed vertices of an outerplanar graph, and prove that $f(k)\le 4^k$.
Their paths are undirected.  Since forgetting the edge directions maps
every directed simple $a$--$b$ path to a distinct undirected simple
$a$--$b$ path, the same bound applies to directed simple paths in an
orientation.

\begin{lemma}[Outerplanar path count]
\label{lem:path-count}
For fixed vertices $a,b$ of an outerplanar graph, the number of directed
simple $a$--$b$ paths with exactly $k$ edges in any orientation of the
graph is at most $4^k$.
\end{lemma}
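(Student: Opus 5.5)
The plan is to reduce the directed statement to the undirected bound of Matolcsi and Nagy~\cite{MN22}, which we take as given: in any outerplanar graph, the number of undirected simple paths with exactly $k$ edges between two fixed vertices $a,b$ is at most $f(k)\le 4^k$. The only work is to build an injection from the directed paths we want to count into the undirected paths that bound covers.

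Fix an orientation $\vec G$ of the outerplanar graph $G$ and fix vertices $a,b$. A directed simple $a$--$b$ path with $k$ edges in $\vec G$ is a sequence of distinct vertices $(a=u_0,u_1,\ldots,u_k=b)$ such that $u_i\to u_{i+1}$ is an arc of $\vec G$ for every $i$. Each arc $u_i\to u_{i+1}$ comes from the undirected edge $\{u_i,u_{i+1}\}$ of $G$. So if we forget the directions, the same vertex sequence is an undirected simple $a$--$b$ path in $G$ with exactly $k$ edges.

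The map that forgets directions is injective. Both the directed path and its image are determined by the same vertex sequence. Since $G$ is simple, that sequence fixes the edge set, and two different directed paths have different sequences and hence different images. The number of directed paths is therefore at most the number of undirected simple $a$--$b$ paths with $k$ edges, which is at most $f(k)\le 4^k$.

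The only point needing care is the convention of~\cite{MN22}. Their $f(k)$ is a maximum over all outerplanar graphs and all pairs of vertices, so it covers any particular $G$, $a$, $b$. Their paths must also be counted as distinct vertex sequences, that is, as distinct edge sets in a simple graph, and they do not identify a path with its reversal. Even if they did, fixing the endpoints $a$ and $b$ already determines the traversal direction, so the count is unaffected. I expect no real obstacle beyond checking these conventions.
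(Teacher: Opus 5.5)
Your proposal is correct and is essentially the paper's argument. The paper also takes the undirected bound $f(k)\le 4^k$ of Matolcsi and Nagy as given, and notes that forgetting edge directions maps each directed simple $a$--$b$ path with $k$ edges to a distinct undirected one.
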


A corresponding bound for $K_{2,t}$-minor-free graphs is proved in
\Cref{sec:k2t-extension}.

\subsection{Counting high walks}

Call a directed walk \emph{high} if every vertex on the walk is high.
For every directed edge $u\to v$ whose tail $u$ is high, define
\[
  \wt(u\to v):=\frac{2}{\outdeg(u)}.
\]
The weight of a high walk is the product of the weights of its edge
occurrences.  The empty walk has weight $1$.

\begin{observation}
\label{ob:basic}
These weights have three elementary properties.
\begin{enumerate} [itemsep=0.4pt,parsep=0pt,topsep=2pt,partopsep=0pt]
  \item Every edge of a high walk has weight at most
  $\frac{2}{48}=\frac1{24}$,
  as its tail has outdegree at least $48$.

  \item At every high vertex $u$, the weights of all outgoing edges add
  up to exactly $2$:
  \[
    \sum_{v:u\to v}\wt(u\to v)
    =
    \outdeg(u)\cdot\frac{2}{\outdeg(u)}
    =
    2.
  \]

  \item If $W=(x_0,x_1,\ldots,x_t)$
    is a particular high walk of length $t$, then the probability that the
  random walk follows $W$, starting from $x_0$, is
  \[
    \prod_{i=0}^{t-1}\frac{1}{\outdeg(x_i)}
    =
    2^{-t}\wt(W).
  \]
\end{enumerate}
Consequently, the total weight of all high walks of length $t$ starting
at $s$ is exactly
$2^t\Prob(\tau>t)$.
\end{observation}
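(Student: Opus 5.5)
The plan is to verify the three properties directly from the definitions and then derive the final identity by splitting the event $\{\tau>t\}$ according to the first $t$ steps of the walk. Property~1 is immediate: the tail $u$ of an edge of a high walk is high, so $\outdeg(u)\ge 48$ and $\wt(u\to v)=2/\outdeg(u)\le 2/48=1/24$. Property~2 also follows at once, since each of the $\outdeg(u)$ outgoing edges of a high vertex $u$ has weight $2/\outdeg(u)$. For Property~3, note that the uniform directed random walk at $x_i$ selects each outgoing edge with probability $1/\outdeg(x_i)$, independently of the past. The probability of following $W$ is therefore $\prod_{i=0}^{t-1}1/\outdeg(x_i)$. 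Writing $1/\outdeg(x_i)=\tfrac12\cdot\wt(x_i\to x_{i+1})$, which is legitimate because every $x_i$ is high, gives $2^{-t}\wt(W)$.

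For the consequence, I first pin down the event $\{\tau>t\}$. The walk traverses more than $t$ edges exactly when none of $x_0,\ldots,x_t$ is low, since the walk stops at the first low vertex. So $\tau>t$ holds precisely when the length-$t$ prefix $(x_0,\ldots,x_t)$ of the random walk is a high walk starting at $s$. Distinct length-$t$ walks give disjoint events, because the prefix is uniquely determined. Hence
\[
  \Prob(\tau>t)=\sum_{W}\Prob(\text{the walk follows }W)=\sum_{W}2^{-t}\wt(W),
\]
where $W$ ranges over the high walks of length $t$ starting at $s$, and the last equality uses Property~3. Multiplying by $2^t$ gives the claim.

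There is no real obstacle; the only care needed is with boundary cases. The outdegree of the final vertex $x_t$ does not appear in the product, but $x_t$ must still be high, and this is consistent with the definition of a high walk. When $t=0$ the only candidate is the empty walk, which has weight $1$ and is high exactly when $s$ is high; this matches $\Prob(\tau>0)=\mathbf 1[s\text{ high}]$. If $s$ is low, both sides vanish for every $t$. It is also worth noting that the walk is well defined while it stays on high vertices, since every high vertex has positive outdegree.
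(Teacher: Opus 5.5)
Your proposal is correct and follows the paper's own reasoning. Properties 1--3 are exactly the direct computations displayed in the observation. The final identity comes from summing property 3 over the high length-$t$ walks from $s$, which are precisely the possible pairwise disjoint length-$t$ prefixes on the event $\{\tau>t\}$; the paper makes the same point in the proof of Theorem~\ref{thm:walk}. Your explicit treatment of the boundary cases ($t=0$, low $s$) is correct and simply fills in what the paper leaves implicit.
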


We bound the total weight of a walk by grouping the walks according to their
retained paths.  Consider a high walk $W$ of length $t$ whose retained path is
  $P=(v_0,v_1,\ldots,v_k)$.
By Lemma~\ref{lem:decomposition}, the walk has the form
  $C_0,e_0,C_1,e_1,\ldots,e_{k-1},C_k$,
where $e_i=v_i\to v_{i+1}$ and $C_i$ is a high closed walk at $v_i$.
Because these pieces partition the edge occurrences of the original walk,
its weight is
\[
  \wt(W)
  =
  \left(\prod_{i=0}^{k-1}\wt(e_i)\right)
  \left(\prod_{i=0}^{k}\wt(C_i)\right).
\]

Suppose that the starting vertex $v_0=s$ is fixed.  There are at most
$n4^k$ possible retained paths with $k$ edges: there are at most $n$
choices for the last vertex $v_k$, and for each such choice
Lemma~\ref{lem:path-count} gives at most $4^k$ simple $s$--$v_k$ paths.
Every retained edge has a high tail and hence weight at most $1/24$.
Therefore, the $k$ retained edges of any fixed path contribute at most
$\left(\frac1{24}\right)^k$ to the weight.  It remains to account for
the $k+1$ closed pieces $C_0,C_1,\ldots,C_k$.

The next lemma (\Cref{lem:closed}) shows that, at any fixed high vertex, the total weight of
all possible high closed walks, subject to any fixed finite length bound,
is less than $2$.  Applying this with length bound $t$, and allowing the
closed pieces to be chosen independently, bounds their total contribution
by less than $2^{k+1}$. Of course, the independent choices only overcount: some choices may have total
length different from $t$, or may not produce the prescribed retained
path.

Consequently, once the closed-walk lemma is established, let
$\mathcal W_{t,k}$ denote the set of high walks of length $t$ starting at
$s$ whose retained path has $k$ edges.  The discussion above gives
\begin{equation}
\label{eq:fixed-k}
  \sum_{W\in\mathcal W_{t,k}}\wt(W)
  <
  n4^k
  \left(\frac1{24}\right)^k
  2^{k+1}
  =
  2n\left(\frac13\right)^k.
\end{equation}
The ratio $1/3$ has a simple meaning.  Increasing $k$ by one multiplies
the path-counting bound by $4$, contributes one more retained edge of
weight at most $1/24$, and creates one more closed-piece position of total
weight less than $2$.  Thus
  $4\cdot\frac1{24}\cdot2=\frac13$.
The leading factor $2$ in \eqref{eq:fixed-k} accounts for the first of the
$k+1$ closed-piece positions.  We now prove the bound of $2$ for one such
position.

\begin{lemma}[Closed walks cost less than two]
\label{lem:closed}
Fix an integer $L\ge0$.  For every high vertex $x$, the total weight of all
high closed walks at $x$ having at most $L$ edges is less than $2$.
\end{lemma}

\begin{figure}[t]

\begin{subfigure}[t]{0.52\textwidth}

\begin{tikzpicture}[scale=0.8,
>=Stealth,
font=\small,
vertex/.style={
circle,
fill=black,
inner sep=1.7pt
}
]

\node[vertex,label=below:$x$] (x1) at (0,0) {};

\node[vertex,label=below:$y$] (y) at (2,0) {};

\node[vertex,label=below:$v_1$] (v1) at (4,0) {};

\node[vertex,label=below:$v_2$] (v2) at (6,0) {};

\node[vertex,label=below:$x$] (x2) at (8,0) {};

\draw[->,thick] (x1)--node[above]{$e$}(y);
\draw[->,thick] (y)--(v1);
\draw[->,thick] (v1)--(v2);
\draw[->,thick, dotted] (v2)--(x2);



\node at (2,0.82) {$C_0$};
\draw (y) to[out=45,in=135,looseness=50] (y);

\node at (4,0.82) {$C_1$};
\draw (v1) to[out=45,in=135,looseness=50] (v1);

\node at (6,0.82) {$C_2$};
\draw (v2) to[out=45,in=135,looseness=50] (v2);

\node at (8,0.82) {$C_\ell$};
\draw (x2) to[out=45,in=135,looseness=50] (x2);

\draw[
decorate,
decoration={brace,mirror,amplitude=5pt}
]
(2,-0.8)--(8,-0.8)
node[midway,yshift=-12pt]
{remainder $R$};

\end{tikzpicture}

\caption{Decomposition of the remainder $R$.}
\label{fig:decomposition}

\end{subfigure}
\hfill
\begin{subfigure}[t]{0.47\textwidth}
\centering

\begin{tikzpicture}[
scale=0.88,
transform shape,
>=Stealth,
font=\small,
box/.style={
draw,
rounded corners,
minimum width=2.6cm,
minimum height=9mm,
align=center
}
]

\node[box] (R) at (0,3)
{Remainder $R$:\\
retained path of length $\ell \ge 2$\\
with $\ell+1$ attached closed walks};

\node[box] (A) at (-3,0.6)
{$4^\ell$ \\
retained paths};

\node[box] (B) at (0,0.6)
{$(1/24)^\ell$\\
edge weights};

\node[box] (C) at (3,0.6)
{$2^{\ell+1}$ weights from\\
closed walks};

\draw[->] (R)--(A);
\draw[->] (R)--(B);
\draw[->] (R)--(C);

\end{tikzpicture}

\caption{Counting the contribution of the remainder $R$.}
\label{fig:recursive}

\end{subfigure}

\caption{
Illustration of the proof of Lemma~\ref{lem:closed}.
A non-empty high closed walk first traverses an edge
$x\to y$.
The remainder $R$ is decomposed according to Lemma~\ref{lem:decomposition}.
}

\label{fig:closed}

\end{figure}

\begin{proof}
For a high vertex $x$, let $S_L(x)$ denote the total weight of all high
closed walks from $x$ to itself having at most $L$ edges.  We prove by
induction on $L$, simultaneously for every high vertex $x$, that
$S_L(x) < 2$.
For $L=0$, the empty walk is the only such walk, so $S_0(x)=1 < 2$.

Now let $L\ge1$, and assume that $S_{L-1}(z) <2$, 
for every high vertex $z$.  Fix a high vertex $x$.  The empty walk
contributes $1$ to $S_L(x)$.  We next bound the contribution of the nonempty
walks.

Every nonempty high closed walk from $x$ begins with an edge  $e=x\to y$, where $y$ is high.  After this first edge, what remains is a high walk
$R$ from $y$ back to $x$ having at most $L-1$ edges. See Figure~\ref{fig:closed} for an illustration.
We first fix $e$
and bound the total weight of all possible remainders $R$.  The weight
of $e$ is not included yet; we will sum over the possible first edges
afterward.

Apply Lemma~\ref{lem:decomposition} to $R$.  Suppose that its retained
path has $\ell$ edges.  Since $e=x\to y$ and the graph is simple, the
retained $y$--$x$ path has $\ell\ge2$ edges.  For each $\ell$,
Lemma~\ref{lem:path-count} gives at most $4^\ell$ possible retained
$y$--$x$ paths.
For any such retained path, its $\ell$ edges contribute at most  $\left(\frac1{24}\right)^\ell$ to the weight.

The decomposition of $R$ also contains $\ell+1$ high closed pieces.
Each piece has at most $L-1$ edges, so by the induction hypothesis the
total weight of its possible choices is less than $2$.  Allowing the
$\ell+1$ pieces to be chosen independently therefore gives an upper
bound of $2^{\ell+1}$
for their total contribution to the weight; of course, this may count choices that do not
reconstruct a valid remainder $R$, but such over-counting only makes the estimate
larger.

Thus, for the fixed first edge $e$, the total weight of all possible
remainders is at most
\begin{align*}
  \sum_{\ell=2}^{L-1}
    4^\ell
    \left(\frac1{24}\right)^\ell
    2^{\ell+1}
  ~<~
    2\sum_{\ell\ge2}\left(\frac13\right)^\ell
  ~=~ \frac13.
\end{align*}

We now include the first edge.  Only edges $x\to y$ with $y$ high can
occur, and the total weight of these edges is at most the total weight
of all outgoing edges of $x$:
\[
  \sum_{\substack{x\to y\\ y\text{ high}}}\wt(x\to y)
  \le
  \sum_{x\to y}\wt(x\to y)
  =
  2,
\]
where the equality follows from \Cref{ob:basic}(2).
Consequently, all nonempty high closed walks at $x$ have total weight
at most
\[
  \frac13
  \sum_{\substack{x\to y\\ y\text{ high}}}\wt(x\to y)
  \le
  \frac13\cdot 2
  =
  \frac23.
\]
Adding the empty walk gives
$S_L(x)
  \le
  1+\frac23
  =
  \frac53 < 2$,
which completes the induction.
\end{proof}
We can now count all high prefixes of the random walk.

\begin{proof}[Proof of Theorem~\ref{thm:walk}]
If $s$ is low, then $\tau=0$.  Assume that $s$ is high, and fix $t\ge0$.
The sets $\mathcal W_{t,0},\ldots,\mathcal W_{t,t}$ partition the high
walks of length $t$ starting at $s$.  These walks are exactly the possible
length-$t$ prefixes on the event $\tau>t$.  By
\Cref{ob:basic}(3) and \eqref{eq:fixed-k},
\begin{align*}
  2^t\Prob(\tau>t)
  &=
  \sum_{k=0}^{t}\sum_{W\in\mathcal W_{t,k}}\wt(W) 
  ~<~
  \sum_{k=0}^{t}2n\left(\frac13\right)^k 
  ~<~
  2n\sum_{k\ge0}\left(\frac13\right)^k
  =3n.
\end{align*}
Therefore $\Prob(\tau>t)\le3n2^{-t}$.
Let $T=\lceil\log_2(3n)\rceil$.  For every $j\ge0$, the tail bound gives
\[
  \Prob(\tau>T+j)
  \le 3n\,2^{-(T+j)}
  \le 2^{-j}.
\]
Hence
\begin{align*}
  \E[\tau]
  &=
  \sum_{r=0}^{T-1}\Prob(\tau>r)
  +
  \sum_{j\ge0}\Prob(\tau>T+j) 
~\le~
  T+\sum_{j\ge0}2^{-j}
  =T+2.
\end{align*}
Finally, taking
$t=\lceil\log_2(3n/\delta)\rceil$ in the tail bound gives
$\Prob(\tau>t)\le\delta$. The theorem follows.
\end{proof}

\subsection{Dynamic Consequence at Threshold \texorpdfstring{$48$}{48}}

\begin{corollary}[Dynamic orientation with threshold $48$]
\label{cor:dynamic}
Starting from the empty graph, a simple outerplanar graph can be maintained
under insertions and deletions that preserve outerplanarity so that every
outdegree is at most $48$.
For every update, conditioned  on the complete
state before that update, the expected update time and recourse are
$O(\log n)$.  Moreover, for every $0<\delta<1$, both are
$O(\log(n/\delta))$ with probability at least $1-\delta$. Thus over every polynomial-length update sequence,
all update times and recourse bounds are $O(\log n)$
simultaneously with high probability.
\end{corollary}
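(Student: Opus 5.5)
We prove \Cref{cor:dynamic} by running the repair rule of \Cref{sec:randomwalk} with $\Delta=48$ and reducing every claim to \Cref{thm:walk}. First we establish the invariant that, between updates, the maximum outdegree is at most $48$. We argue by induction over the update sequence. Initially the graph is empty. A deletion only decreases outdegrees. An insertion oriented $s\to v$ either leaves $\outdeg(s)\le 48$, or makes it exactly $49$. In the latter case the walk starts at $s$, and every vertex it visits has outdegree at most $49$. As argued after the rule, flipping the retained path from \Cref{lem:decomposition} returns $s$ to $48$, raises the terminal $z$ from below $48$ to at most $48$, and leaves internal vertices unchanged. The process must also terminate with probability $1$; this follows from the tail bound below.

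The main step is to apply \Cref{thm:walk} to the graph \emph{after} the insertion. This graph is still outerplanar, since updates preserve outerplanarity, and it has $n$ vertices. The notions of ``high'' (outdegree $\ge 48$) and ``low'' in \Cref{thm:walk} match the stopping rule exactly: the walk stops at the first vertex of outdegree below $48$. The one difference is that $s$ has outdegree $49$ rather than at most $48$. This causes no problem, because \Cref{thm:walk} holds for every orientation of an outerplanar graph, with no cap on outdegrees. Hence the walk length $\tau$ satisfies $\Prob(\tau>t)\le 3n2^{-t}$, $\E[\tau]\le\lceil\log_2(3n)\rceil+2$, and $\tau\le\lceil\log_2(3n/\delta)\rceil$ with probability at least $1-\delta$.

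The key subtlety is conditioning: the adversary is adaptive, and the current orientation depends on past randomness. We handle this by fixing the complete state before the update, including the adversary's choice of update. Then the orientation and $s$ are deterministic, and the only randomness is the fresh randomness of this walk, so \Cref{thm:walk} applies verbatim conditionally.

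Finally we convert $\tau$ into costs. By the Remark in \Cref{sec:randomwalk}, each step costs $O(1)$, so the total time is $O(\tau+1)$, including the linear-time decomposition from \Cref{lem:decomposition} and the flips. The recourse is the retained path length, which is at most $\tau$. This gives expected $O(\log n)$ time and recourse, and $O(\log(n/\delta))$ with probability at least $1-\delta$. For a sequence of $\mathrm{poly}(n)$ updates, we choose $\delta=n^{-c}$ per update and take a union bound; the conditional per-update bound makes this valid against an adaptive adversary. Here $n$ denotes a fixed upper bound on the number of vertices. I expect no real obstacle; the only points needing care are the outdegree-$49$ start vertex and the conditioning argument.
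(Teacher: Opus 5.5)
Your proposal is correct and follows essentially the same route as the paper's proof. You apply \Cref{thm:walk} to the post-insertion orientation; since it holds for every orientation, this covers both the outdegree-$49$ start vertex and the conditioning on the adaptive history. You bound recourse and work by the sampled walk length via \Cref{lem:decomposition} and the constant-time representation, and you finish with a conditional union bound. The one detail to pin down is the per-update failure probability: for $n^{c}$ updates it should be $n^{-(c+\gamma)}$, as the paper takes it, so that the overall success probability is $1-n^{-\gamma}$.
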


\begin{proof}
A deletion needs no repair.  On an insertion, run the repair procedure from \Cref{sec:randomwalk}. If repair is needed, only the starting vertex $s$ has
outdegree $49$. By \Cref{thm:walk}, the sampled walk reaches a low vertex $z$ after
$O(\log n)$ steps in expectation and $O(\log(n/\delta))$ steps with
probability at least $1-\delta$.  By Lemma~\ref{lem:decomposition}, its retained path is no longer than the sampled walk. Flipping this path restores maximum outdegree $48$, as
explained in \Cref{sec:randomwalk}, and the number of flipped edges is at
most the walk length.

Recording the walk, extracting its retained path, and flipping the path
take time linear in the sampled length; by the representation described in
\Cref{sec:randomwalk}, each individual walk step and edge flip takes
constant time.  The claimed per-update bounds follow.  As
\Cref{thm:walk} holds for every current orientation, these bounds remain
valid after conditioning on the complete preceding history.

Finally, consider at most $n^c$ updates and fix any $\gamma>0$.  Apply the
per-update tail bound with $\delta=n^{-(c+\gamma)}$.  Every update then
takes $O(\log n)$ time and flips $O(\log n)$ edges except with
conditional probability at most $\delta$.  A union bound shows that all
updates satisfy these bounds with probability at least $1-n^{-\gamma}$.
\end{proof}
This completes Layer I: the   repair rule has the desired
logarithmic update time and recourse guarantees at threshold $48$.
Layer II sharpens the analysis of the same rule to
threshold $4$.

\section{The Outerplanar Upper Bound:
Layer II---Threshold \texorpdfstring{$4$}{4}}
\label{sec:outerplanar-layer2}

We now sharpen the hitting-time analysis of the repair rule from
\Cref{sec:randomwalk}.  Unlike Layer I, the proof below does not use
fixed-endpoint path counting; instead, it combines a two-terminal
estimate for outerplanar regions with a recursion at a centroid face.
For the uniform directed random walk $(X_j)_{j\ge0}$ in an orientation of a simple graph, put $\tau_4:=\min\{j\ge0:\outdeg(X_j)<4\}$.
We give a more precise statement of~\Cref{thm:intro-outerplanar}.

\begin{theorem}[Outerplanar walk with threshold $4$]
\label{thm:threshold-four}

There are absolute constants $\zeta>1$, $C>0$, and $\kappa>0$ such that,
for every orientation of every $n$-vertex outerplanar graph and every
starting vertex $s$, $\E_s[\zeta^{\tau_4}]\le Cn^\kappa$.
Consequently, $\Prob_s(\tau_4>m)\le Cn^\kappa\zeta^{-m}$ (for $m\ge0$), $\E_s[\tau_4]=O(\log n)$, and
$\tau_4=O(\log(n/\delta))$ with probability at least $1-\delta$.
\end{theorem}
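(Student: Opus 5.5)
The plan is to reduce everything to the moment-generating bound $\E_s[\zeta^{\tau_4}]\le Cn^\kappa$. The three consequences then follow routinely. Markov's inequality gives $\Prob_s(\tau_4>m)\le Cn^\kappa\zeta^{-m}$. Summing this tail as in the proof of \Cref{thm:walk} gives $\E_s[\tau_4]\le \log_\zeta(Cn^\kappa)+O(1)=O(\log n)$, and choosing $m=\lceil\log_\zeta(Cn^\kappa/\delta)\rceil$ gives the high-probability statement.

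To bound the MGF, I would keep the weighted-walk viewpoint of Layer I with tilted weights. Set $\wt_\zeta(u\to v):=\zeta/\outdeg(u)$ for high $u$, where high now means outdegree at least $4$. Then $\E_s[\zeta^{\tau_4}]$ is at most the total tilted weight of all high walks from $s$, each extended by one final step to a low vertex. For a walk staying in high vertices, $\zeta^{\text{length}}$ times its probability is exactly its tilted weight. Since outdegree is now only $4$, the crude count of \Cref{lem:path-count} ($4^k$ paths, each edge of weight $1/4$) gives ratio $1$ and fails, so the structure of outerplanarity must be used more finely. Define $F(x):=$ the total tilted weight of high walks from $x$ until absorption, and $S(x):=$ the total tilted weight of high closed walks at $x$. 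These are the analogues of the quantities in \Cref{lem:closed}, and the goal is $F(s)\le Cn^\kappa$.

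The recursion follows the outline announced before the theorem. Take the outerplanar graph with its outer Hamiltonian-cycle-like structure; after adding edges we may assume it is maximal outerplanar, but extra unoriented edges do not affect the walk, so I would work with the weak dual tree. Choose a centroid face $f$, a triangle whose removal splits the dual tree so that each component region has at most $n/2$ vertices, say. Each region $R$ attaches to $f$ along a single chord $\{a,b\}$, giving a two-terminal outerplanar region. The \emph{two-terminal estimate} should state the following: for a walk entering $R$ at $a$ or $b$, the tilted weight of excursions inside $R$ returning to $\{a,b\}$ is a matrix $M_R$ (entries $a\to a$, $a\to b$, etc.) with controlled norm. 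The key point is that each high vertex has at least $4$ out-edges, while an interior vertex of a two-terminal outerplanar piece can lead back to the terminals only through a path-like structure. Inducting on the region, I would show that the return weight is at most some constant $\beta<1$ per terminal transition, or is absorbed, with slack coming from $\zeta$ close to $1$. With this in hand, walks from $s$ decompose by their visits to the three vertices of $f$. Between visits they make excursions into regions whose weights are bounded by the two-terminal estimate, and the final escape into a region is bounded recursively by $F$ on a graph of half the size, times a constant factor $K$. This gives $F_n\le K\,F_{n/2}$, i.e.\ $F_n\le n^{\log_2 K}$, which yields $\kappa=\log_2 K$.

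The main obstacle is the two-terminal estimate at threshold exactly $4$. One must show that, in any orientation of an outerplanar region with all interior vertices of outdegree $\ge4$, the expected $\zeta$-weighted number of returns to the terminals is uniformly bounded, with the $3\times3$ transfer matrix at $f$ having spectral radius $<1$. I would first attempt this for $\zeta=1$ by an explicit induction on the region, peeling off the triangle adjacent to the chord $ab$ and tracking a potential (e.g.\ return probabilities $p_a,p_b$) satisfying a linear inequality preserved under gluing, using that outdegree $4$ forces at least $4-2=2$ edges leaving into subregions. I would then perturb to $\zeta>1$ by continuity, since the bounds are strict with uniform slack. The threshold-$3$ lower bound in \Cref{thm:intro-tight-threshold} indicates that this inequality must genuinely use the value $4$.
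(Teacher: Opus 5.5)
Your architecture is the same as the paper's. You reduce to the moment bound, prove a two-terminal estimate for the region $R(a,b)$ cut off by a chord by peeling the triangle $abc$ into the child regions $L(a,c)$ and $N(c,b)$, then recurse at a centroid face and sum the excursions geometrically. The paper handles your $3\times3$ transfer matrix just by the row bound $\frac{\zeta}{4}\bigl(d+(4-d)h\bigr)\le\frac{\zeta}{4}(2+2h)<1$, where $d\le2$ is the weight of the two direct edges inside the face. The consequences and the recursion $F_n\le KF_{n/2}$ are fine, given the two-terminal estimate.

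The gap is that estimate itself. You flag it as the main obstacle, and it is essentially the whole content of the theorem. The only concrete form you give it is a uniform bound $\beta<1$ on the return mass per unit of entering weight, equivalently on the $\zeta$-mass of reaching $\{a,b\}$ from an interior vertex. That invariant cannot be pushed through the peeling induction. Let the apex $c$ send probability $1/4$ to each of $a$, $b$, $L^\circ$, $N^\circ$. The hypothesis on the children lets all of their mass $\beta$ come back to $c$ rather than exit at $a$ or $b$. So the best the induction gives for the mass of reaching $\{a,b\}$ from $c$ is $\frac{\zeta/2}{1-\zeta\beta/2}$. Already at $\zeta=1$ this is at most $\beta$ only if $(1-\beta)^2\le0$. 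A scalar invariant therefore closes only for the useless value $\beta=1$, and there is nothing left to perturb to $\zeta>1$.

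What is missing is the paper's two-dimensional invariant. Put $w=4P$ and let $p=p_R(a)$ be the total weight from terminal $a$ into $R^\circ$. The cross-exit mass $x=K_R(a,b)$ and the same-terminal return mass $y=K_R(a,a)$ satisfy
$x+\theta y\le A(p)$ and $y+\theta x\le B(p)$, with $\theta=9/10$, $A(p)=p/5+\min\{p,1\}/50$, and $B(p)=\frac{17}{20}\cdot\frac p4$.

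Several features of this invariant are essential:
\begin{itemize}
\item The bounds are linear in the entering weight, so at the apex the children's bounds can be added against the budget $q+r+f+g\le4$.
\item The return to the entry terminal is capped separately. This is what tames the geometric series of returns to $c$; in the scenario above it gives at most $\frac{1/2}{1-17/40}<0.87$.
\item A direct edge $a\to c$ of weight $e$ needs the one-time reserve $\min\{p,1\}/50$ in $A$. It also needs sharper bounds (the constants $\sigma=7/8$, $\nu=17/20$) on combinations of the apex's exit masses. These are available because $e>0$ rules out the edge $c\to a$.
\item The passage to $\zeta>1$ must be made on the finitely many algebraic inequalities of the inductive step. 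These hold at $\zeta=1$ with slack uniform over the compact range of $(q,r,f,g)$. Continuity of the final hitting masses in $\zeta$ gives no uniformity in the size of the region.
\end{itemize}

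Until you specify such an invariant and check that it closes, neither the bound $h<1$ nor the centroid recursion built on it is established.
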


\subsection{Proof of Theorem~\ref{thm:threshold-four}}

We prove a slightly stronger statement.
Consider a finite substochastic
Markov chain on the vertices of an oriented simple outerplanar graph.
Assume that $P(u,v)>0$ only if $u\to v$ is an oriented edge and that $P(u,v)\le\frac14$
for every $u,v$.  Missing probability is sent to an absorbing cemetery
state.
We write $\tau$ for the absorption time and prove a polynomial
bound on $\E_v[\zeta^\tau]$.

For an outerplanar graph $G$, fix an embedding in the plane.
Assume for now that $N\ge4$, and augment $G$, on the
same vertex set, to a maximal outerplanar graph.
Thus, all faces of the augmented graph are triangles and the augmented graph is $2$-connected.
The added edges have probability $0$.
We do not assume that the augmented graph is oriented with bounded outdegree, only that $P(u,v)\leq 1/4$ for all pairs.

It is convenient to scale probabilities by $4$: write
$w(u,v):=4P(u,v)$.  Every edge then has weight at most $1$, and the total
outgoing weight at a vertex is at most $4$.
For a stopping time $T$ and an event $\mathcal E$ determined
by the walk up to time $T$, we call
$\E[\zeta^T\mathbf 1_{\mathcal E}]$
the {$\zeta$-weight} of $\mathcal E$.
Thus, traversing an edge of weight $w$ contributes the factor $\zeta w/4$.

\begin{definition}[$\zeta$-weight of a trajectory]
\label{def:traj}
A \emph{trajectory} of length $k$ is a finite sequence of states
$w = (x_0, x_1, \dots, x_k)$ such that $P(x_i, x_{i+1}) > 0$ for every $0 \le i < k$. Write $|w| := k$ for its length and
\[
  \Pr(w) \;:=\; \prod_{i=0}^{k-1} P(x_i, x_{i+1})
\]
for the probability that the walk, started at $x_0$, follows exactly this sequence of states for its first $k$ steps. The \emph{$\zeta$-weight} of $w$ is
\[
  \operatorname{wt}_\zeta(w) \;:=\; \zeta^{|w|} \cdot \Pr(w).
\]
\end{definition}
 
\begin{definition}[$\zeta$-weight of an event]
\label{def:event}
Let $T$ be a stopping time for the walk, and let $E$ be an event determined by the walk's trajectory up to time $T$. The \emph{$\zeta$-weight of $E$}, started from a state $u$, is
\[
  \operatorname{wt}_\zeta(E) \;:=\; \sum_{w \,\in\, E} \operatorname{wt}_\zeta(w)
  \;=\; \mathbb{E}_u\!\left[\zeta^{T} \, \mathbf{1}_E\right].
\]
That is, $\operatorname{wt}_\zeta(E)$ sums the $\zeta$-weights (Definition~\ref{def:traj}) of every trajectory belonging to $E$, or equivalently is the expectation of $\zeta^T$ restricted to the event $E$.
\end{definition}

Let us briefly describe the weak dual decomposition of outerplanar graphs that we will use in the proof.

\begin{definition}[Weak dual]
For an outerplanar graph $G$, the \emph{weak dual} of $G$ is graph $D(G)$ where every (inner) face of $G$ is a vertex and two faces have an edge iff they share an edge in $G$.
Because $G$ is outerplanar, $D(G)$ is a forest.
\end{definition}
For simplicity,
we assume that $G$ is a maximal outerplanar graph,
so that $G$ is $2$-connected and all faces are triangles,
and thus  $D(G)$ is a tree with maximum degree $3$.
See~\Cref{fig:weak-dual}.

\begin{figure}

\begin{subfigure}[t]{0.6\textwidth}
\begin{tikzpicture}[
>=Stealth,
every node/.style={font=\scriptsize}
]

    \coordinate (P0)  at (0.000, 1.000);
    \coordinate (P1)  at (-0.870,-0.500);
    \coordinate (P2)  at (0.870,-0.500);
    \coordinate (B11) at (-0.735, 1.426);
    \coordinate (B12) at (-2.341, 0.353);
    \coordinate (B13) at (-2.206, 2.279);
    \coordinate (B24) at (-0.870,-1.350);
    \coordinate (B25) at (0.870,-2.200);
    \coordinate (B26) at (-0.870,-3.050);
    \coordinate (B27) at (0.870,-3.900);
    \coordinate (B38) at (1.605,-0.074);
    \coordinate (B39) at (1.471, 1.853);
    \coordinate (C110) at (-3.188, 0.412);
    \coordinate (C111) at (-3.902, 2.398);
    \coordinate (C212) at (1.720,-2.200);
    \coordinate (C213) at (2.570,-3.900);
    \coordinate (C314) at (-1.720,-1.350);
    \coordinate (C415) at (2.453,-0.014);

    \draw[thick, gray]
      (P0)  -- (P1)
      (P1)  -- (P2)
      (P0)  -- (P2)
      (P1)  -- (B11)
      (B11) -- (B12)
      (B12) -- (B13)
      (P2)  -- (B24)
      (B24) -- (B25)
      (B25) -- (B26)
      (B24) -- (B26)
      (B25) -- (B27)
      (P0)  -- (B38)
      (B38) -- (B39)
      (B13) -- (C110)
      (B27) -- (C212)
      (B39) -- (C415);

    \draw[thick]
      (P0)  -- (B11) -- (B13) -- (C111) -- (C110) -- (B12) -- (P1)
            -- (B24) -- (C314) -- (B26) -- (B27) -- (C213) -- (C212) -- (B25) -- (P2)
            -- (B38) -- (C415) -- (B39) -- cycle;

    \draw[very thick, blue] (P0) -- (P1) -- (P2) -- cycle;

    \foreach \v in {P0,P1,P2,B11,B12,B13,C110,C111,
                    B24,B25,B26,B27,C212,C213,C314,
                    B38,B39,C415} {
      \fill (\v) circle (1.1pt);
    }

    \node at (0.0,0.0)      {$T_{0}$};
    \node at (-0.535,0.642) {$T_{1}$};
    \node at (-1.315,0.426) {$T_{2}$};
    \node at (-1.761,1.353) {$T_{3}$};
    \node at (-0.29,-0.783) {$T_{6}$};
    \node at (0.29,-1.35)   {$T_{7}$};
    \node at (-0.29,-2.2)   {$T_{8}$};
    \node at (0.29,-3.05)   {$T_{10}$};
    \node at (0.825,0.142)  {$T_{13}$};
    \node at (1.025,0.926)  {$T_{14}$};
    \node at (-2.578,1.015) {$T_{4}$}; 
    \node at (-3.099,1.697) {$T_{5}$};
    \node at (1.153,-2.767) {$T_{11}$};
    \node at (1.72,-3.333)  {$T_{12}$};
    \node at (-1.153,-1.917){$T_{9}$};
    \node at (1.843,0.588)  {$T_{15}$};

\end{tikzpicture}
\caption{A triangulated (maximal) outerplanar graph.}
\end{subfigure}
\hfill
\begin{subfigure}[t]{0.5\textwidth}
\begin{tikzpicture}[
>=Stealth,
every node/.style={font=\scriptsize}
]
        
    \coordinate (P0)  at (0.000, 1.000);
    \coordinate (P1)  at (-0.870,-0.500);
    \coordinate (P2)  at (0.870,-0.500);
    \coordinate (B11) at (-0.735, 1.426);
    \coordinate (B12) at (-2.341, 0.353);
    \coordinate (B13) at (-2.206, 2.279);
    \coordinate (B24) at (-0.870,-1.350);
    \coordinate (B25) at (0.870,-2.200);
    \coordinate (B26) at (-0.870,-3.050);
    \coordinate (B27) at (0.870,-3.900);
    \coordinate (B38) at (1.605,-0.074);
    \coordinate (B39) at (1.471, 1.853);
    \coordinate (C110) at (-3.188, 0.412);
    \coordinate (C111) at (-3.902, 2.398);
    \coordinate (C212) at (1.720,-2.200);
    \coordinate (C213) at (2.570,-3.900);
    \coordinate (C314) at (-1.720,-1.350);
    \coordinate (C415) at (2.453,-0.014);

    \draw[thin, gray!40]
      (P0)  -- (P1)
      (P1)  -- (P2)
      (P0)  -- (P2)
      (P1)  -- (B11)
      (B11) -- (B12)
      (B12) -- (B13)
      (P2)  -- (B24)
      (B24) -- (B25)
      (B25) -- (B26)
      (B24) -- (B26)
      (B25) -- (B27)
      (P0)  -- (B38)
      (B38) -- (B39)
      (B13) -- (C110)
      (B27) -- (C212)
      (B39) -- (C415);
    \draw[thin, gray!40]
      (P0)  -- (B11) -- (B13) -- (C111) -- (C110) -- (B12) -- (P1)
            -- (B24) -- (C314) -- (B26) -- (B27) -- (C213) -- (C212) -- (B25) -- (P2)
            -- (B38) -- (C415) -- (B39) -- cycle;
    \foreach \v in {P0,P1,P2,B11,B12,B13,C110,C111,
                    B24,B25,B26,B27,C212,C213,C314,
                    B38,B39,C415} {
      \fill[gray!50] (\v) circle (0.9pt);
    }

    \coordinate (D0)  at (0.0,0.0);
    \coordinate (D1)  at (-0.535,0.642);
    \coordinate (D2)  at (-1.315,0.426);
    \coordinate (D3)  at (-1.761,1.353);
    \coordinate (D4)  at (-0.29,-0.783);
    \coordinate (D5)  at (0.29,-1.35);
    \coordinate (D6)  at (-0.29,-2.2);
    \coordinate (D7)  at (0.29,-3.05);
    \coordinate (D8)  at (0.825,0.142);
    \coordinate (D9)  at (1.025,0.926);
    \coordinate (D10) at (-2.578,1.015);
    \coordinate (D11) at (-3.099,1.697);
    \coordinate (D12) at (1.153,-2.767);
    \coordinate (D13) at (1.72,-3.333);
    \coordinate (D14) at (-1.153,-1.917);
    \coordinate (D15) at (1.843,0.588);

    \draw[very thick, red]
      (D0)  edge[-] (D1)
      (D0)  edge[-] (D4)
      (D0)  edge[-] (D8)
      (D1)  edge[-] (D2)
      (D2)  edge[-] (D3)
      (D3)  edge[-] (D10)
      (D4)  edge[-] (D5)
      (D5)  edge[-] (D6)
      (D6)  edge[-] (D7)
      (D6)  edge[-] (D14)
      (D7)  edge[-] (D12)
      (D8)  edge[-] (D9)
      (D9)  edge[-] (D15)
      (D10) edge[-] (D11)
      (D12) edge[-] (D13);

    \foreach \v in {D0,D1,D2,D3,D4,D5,D6,D7,D8,D9,D10,D11,D12,D13,D14,D15} {
      \fill[red] (\v) circle (1.8pt);
    }

    \end{tikzpicture}
    \caption{The weak dual decomposition.}
\end{subfigure}
    \caption{A triangulated outerplanar graph and its weak dual decomposition.}
    \label{fig:weak-dual}
\end{figure}

The proof has two steps.  First we show that, inside any outerplanar
region, the $\zeta$-weight of reaching its two boundary vertices is
bounded by a constant $h<1$.  We then cut the graph at a centroid
triangle and sum the resulting excursions as a geometric series.

\subsection{A two-terminal estimate.}
Taking an edge $ab$ in $G$, let $R:=R(a,b)$ be the
region on one side of $ab$, with terminals $a,b$.
That is, $R$ is the union of all faces in this one side of $ab$.
Starting from a terminal $u\in\{a,b\}$, count only trajectories whose first step enters
$R^\circ:=V(R)\setminus\{a,b\}$.
and stop when the walk next reaches
$a$, $b$, or the cemetery state. 
For $v\in\{a,b\}$, let $K_R(u,v)$ be the $\zeta$-weight mass of a walk from terminal $u$ exiting $R$ through terminal $v$.
This exit mass is bounded purely in terms of $u$'s own outgoing weight into $R^\circ$ --- not in terms of $|R^\circ|$ --- and consequently that any interior vertex reaches $\{a,b\}$ with $\zeta$-weighted mass at most $h=18/19<1$. The proof inducts on $|R^\circ|$, splitting each
region at one triangle into two smaller children and reconstructing the
parent's bound from theirs; the reconstruction reduces to numeric inequalities in $\theta,\sigma,\nu,\zeta$ that were
pre-selected given that each vertex's total outgoing weight
is $4$.

\begin{definition}[Special case: $K_R(u,v)$]
\label{def:KR}
Let $R$ be a region with two designated terminal vertices $a,b$, and let $u, v \in \{a,b\}$. Run the walk starting at $u$, and let
\[
  T \;:=\; \min\{\, j > 0 : X_j \in \{a,b\}\ \text{or the walk is absorbed} \,\}
\]
be the first time the walk exits $R$ through a terminal or is absorbed inside $R$. Let
\[
  E_v \;:=\; \{\, T < \infty \ \text{and the walk exits through } v \,\}
\]
be the event that the walk takes its first step into $R^{\circ}$ and exits $R^{\circ}$ through $v$ (rather than being absorbed). Define
\[
  K_R(u,v) \;:=\; \operatorname{wt}_\zeta(E_v) \;=\; \mathbb{E}_u\!\left[\zeta^{T}\,\mathbf{1}_{E_v}\right],
\]
the $\zeta$-weight (Definition~\ref{def:event}) of the event that the walk started at $u$ exits region $R$ through terminal $v$. Trajectories that are absorbed inside $R$ before reaching $\{a,b\}$ do not belong to $E_v$, and contribute $0$ to $K_R(u,v)$. We call $K_R(u,v)$ the ``regional kernel''.
\end{definition}

For $u = a$, the two quantities $K_R(a,b)$ and $K_R(a,a)$ are respectively the $\zeta$-weighted mass of trajectories from $a$ that exit through $b$, and the $\zeta$-weighted mass of trajectories from $a$ that loop back and exit through $a$ itself. Let $p_R(u):=\sum_{z\in R^\circ}w(u,z)$ be the total outgoing weight from $u$ into the region.
For the starting point $a$, write
\[
 x:=K_R(a,b),\qquad y:=K_R(a,a),\qquad p:=p_R(a),
\]
and set $\overline{p}:=\min\{p,1\}$.  We use the following parameters
\[
 \theta:=\frac9{10},\qquad \sigma:=\frac78,
 \qquad \nu:=\frac{17}{20},
 \qquad
 A(p):=\frac p5+\frac{\overline{p}}{50},
 \qquad B(p):=\frac{\nu p}{4}.
\]
Note that $\theta-\sigma=\sigma-\nu=1/40$ and $1/5 < \sigma/4 < 1/5 + 1/50$.
The term $p/5$ is the main budget and the small extra term in $A$ is a
one-time reserve used when the parent region contains a direct edge from
a terminal to its root.
We prove the two inequalities
\begin{equation}\label{eq:cone}
  x+\theta y\le A(p),\qquad y+\theta x\le B(p).
\end{equation}

We need two elementary consequences.  The first controls repeated returns
at the internal vertex of a region; the second pays for a possible direct
edge from a terminal to that vertex.  Adding the inequalities in
\eqref{eq:cone} gives
\begin{equation}\label{eq:phi}
 \theta(x+y)\le
 \Phi(p):=\frac{\theta}{1+\theta}\bigl(A(p)+B(p)\bigr)
 =\frac p5+\frac{72\overline{p}-35p}{7600}
 \le\frac p5+\frac{37}{7600}
 <\frac p5+\frac1{200}.
\end{equation}
The last inequality follows immediately by considering $p\le1$ and $p\ge1$.
Second, we find a lower bound for $A(p+e)$ for a direct edge of any weight $e\in[0,1]$.
Since $1>\sigma>\theta^2$, note that
\begin{align*}
    \sigma x+\theta y
    &=\frac{\sigma-\theta^2}{1-\theta^2}(x+\theta y) +\frac{\theta(1-\sigma)}{1-\theta^2}(y+\theta x)\\
    &\le\frac{\sigma-\theta^2}{1-\theta^2}A(p) +\frac{\theta(1-\sigma)}{1-\theta^2}B(p)
    =A(p)+\frac{1-\sigma}{1-\theta^2}\Big(\theta B(p)-A(p)\Big)
    \le A(p)-\frac{3\overline{p}}{160}.
\end{align*}
Substituting $A,B$ leaves a nonnegative slack $(35p-34\overline{p})/6080\geq 0$.
Next, note that: $\overline{p+e}-\overline{p}\geq (e-\overline{p})\cdot 15/16$,
considering  $p\leq 1-e$ and $p>1-e$.
This reserve of $3\overline{p}/160$ pays for the weight $e$.
Namely, we get
$A(p+e)-A(p)-{\sigma e}/{4}= -3e/160+(\overline{p+e}-\overline{p})/50 \ge -{3\overline{p}}/{160}$, which yields:
\begin{equation}\label{eq:reserve}
\sigma x+\theta y+\frac{\sigma e}{4}\le A(p+e)~.
\end{equation}

Lastly, we choose $\zeta$.
Consider  $q,r\in[0,4]$ and $f,g\in[0,1]$.
These variables represent the regional kernels of two child regions in the weak dual graph and the weights of the connecting edge, respectively.
From~\eqref{eq:phi}, note that $A(r)+\Phi(q)<\frac{1}{5}r+\frac{1}{50}+\frac{1}{5}q+\frac{1}{200}=\frac{1}{5}(q+r)+\frac{1}{40}$.
At $\zeta=1$, the relevant inequalities are: 
\begin{align*}
 \frac{g+\theta f}{4}+A(r)+\Phi(q)
 &<\frac{33+f+2g}{40}\le\theta
 &&(q+r+f+g\le4),\\
 \frac g4+A(r)+\Phi(q)
 &<\frac{33+2g}{40}\le\sigma
 &&(q+r+g\le4),\\
 \frac{\theta g}{4}+A(q)+\Phi(r)
 &<\frac{33+g}{40}\le\nu
 &&(q+r+g\le4).
\end{align*}
These inequalities are strict
and all have a uniform positive slack (at least $1/7600$ from \eqref{eq:phi}).
We may therefore fix $\zeta>1$,
sufficiently close to $1$ (namely, $\zeta\leq 1+1/50$), such that
\begin{align}
 \frac\zeta4(g+\theta f)+A(r)+\Phi(q)&\le\theta,
   &&q+r+f+g\le4,\label{eq:budget-ordinary}\\
 \frac{\zeta g}{4}+A(r)+\Phi(q)&\le\frac\sigma\zeta,
   &&q+r+g\le4,\label{eq:budget-U}\\
 \frac{\zeta\theta g}{4}+A(q)+\Phi(r)&\le\frac\nu\zeta,
   &&q+r+g\le4.\label{eq:budget-V}
\end{align}

Now we prove that for every terminal $a,b$, $K_R(a,b)$ and $K_R(a,a)$ satisfy \eqref{eq:cone}. Intuitively, the $\zeta$-weighted mass of trajectories from $a$ that exist throught $a$ or $b$ only depend on the total outgoing weight from $a$ into the region

\begin{figure}[t]
\centering
\begin{tikzpicture}[
scale=1.5,
>=Stealth,
font=\small,
vertex/.style={
circle,
draw,
fill=white,
inner sep=2pt
}
]

\node[vertex] (a) at (0,0) {${\;}a{\;}$};
\node[vertex] (b) at (2,0) {${\;}b{\;}$};
\node[vertex] (c) at (1,3/2) {${\;}c{\;}$};
\node[] (a0) at (-1,-1/2) {};
\node[] (b0) at (3,-1/2) {};
\node[] (c0) at (7/4,-1/2) {};
\node[] (d0) at (1/4,-1/2) {};
\node[] (a1) at (-1-1/2,1/4) {};
\node[] (b1) at (3+1/2,1/4) {};
\node[] (c1) at (-1-1/2,2+1/4) {};
\node[] (c2) at (3+1/2,2+1/4) {};

\draw (a) edge[-] (b);
\draw (c) edge[-] (a) (a) edge[-] (a0) (a) edge[-] (d0);
\draw (c) edge[-] (b) (b) edge[-] (b0) (b) edge[-] (c0);
\draw (a) edge[-] (a1) (b) edge[-] (b1);
\draw (c) edge[-] (c1) (c) edge[-] (c2);

\node[] (L) at (0-3/4, 1) {$L(a,c){\quad}$};
\node[] (N) at (2+3/4, 1) {${\quad}N(c,b)$};

\draw[->,dotted, thick] (c) to[out=-140, in=80, looseness=1] node[left]{$x_{L}$} (a);
\draw[->,dotted, thick] (c) to[out=170, in=-160, looseness=20] node[left]{$y_{L}$} (c);
\draw[->,dotted, thick] (c) to[out=-40, in=100, looseness=1] node[right]{$x_{N}$} (b);
\draw[->,dotted, thick] (c) to[out=10, in=-20, looseness=20] node[right]{$y_{N}$} (c);
\draw[->,dotted, thick] (a) to[out=115, in=140, looseness=2] node[left]{$x_{T}$} (c);
\draw[->,dotted, thick] (a) to[out=160, in=135, looseness=20] node[left]{$y_{T}$} (a);

\end{tikzpicture}
\caption{Illustration of the proof of~\Cref{lem:region}.
The subpolygon $R(a,b)$ is the union of $L(a,c)$, $N(c,b)$, and the triangle $abc$.
}
\label{fig:regions}
\end{figure}

\begin{lemma}\label{lem:region}
For every region $R(a,b)$ of the triangulated outerplanar graph, $K_R(a,b)$ and $K_R(a,a)$ satisfy \eqref{eq:cone}. 
Moreover, from every vertex $v$ of $R^\circ$, the total $\zeta$-weighted mass of trajectories starting at $v$ and reaching terminals $a$ or $b$ before absorption is at most $$h:=\frac{2\theta}{1+\theta}=\frac{18}{19}~.$$
\end{lemma}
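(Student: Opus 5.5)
Proof by induction on $|R^\circ|$, following the splitting of Figure~\ref{fig:regions}. \emph{Base case:} if $R^\circ=\emptyset$, the region is the single edge $ab$. Then no trajectory enters $R^\circ$, so $x=y=p=0$, and \eqref{eq:cone} holds trivially.

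\emph{Inductive step:} let $abc$ be the unique triangle of $R$ containing the edge $ab$. Then $R=L(a,c)\cup\{abc\}\cup N(c,b)$, where $L$ and $N$ are the regions beyond the edges $ac$ and $cb$ (either may be a bare edge). Both have strictly fewer interior vertices, so the induction hypothesis applies to each of them, viewed from either of their terminals.

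A trajectory from $a$ that enters $R^\circ$ either steps directly to $c$, with weight $e:=w(a,c)\le 1$, or enters $L^\circ$. In the latter case it exits $L$ at $a$ (mass $y_L$, after which it has left $R^\circ$ through $a$) or at $c$ (mass $x_L$). Note that $p=p_R(a)=e+p_L(a)$.

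The central object is the $\zeta$-weighted mass from $c$ of first reaching $\{a,b\}$. At $c$, the walk has three options:
\begin{itemize}
\item a direct edge to $a$ or to $b$, with weights $f,g$ and factor $\zeta/4$ each;
\item an excursion into $L^\circ$, which exits at $a$ (kernel $K_L(c,a)$) or returns to $c$ (kernel $K_L(c,c)$);
\item an excursion into $N^\circ$, which exits at $b$ or returns to $c$.
\end{itemize}
Write $q=p_L(c)$ and $r=p_N(c)$. Since $c$'s total outgoing weight is at most $4$, we have $q+r+f+g\le4$. The total return mass to $c$ is at most $K_L(c,c)+K_N(c,c)$, and by \eqref{eq:phi} each return loop is bounded in terms of $\Phi$. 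Summing the geometric series over returns, the mass of reaching $a$ from $c$, and of reaching $b$ from $c$, are each a ratio of the form $\text{(direct terms)}/(1-\text{return mass})$.

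The three budget inequalities \eqref{eq:budget-ordinary}--\eqref{eq:budget-V} are presumably tuned exactly so that this vector of exit masses from $c$, call it $(\alpha_a,\alpha_b)$, lies in a cone. I would aim to show
\[
\alpha_b+\theta\alpha_a\le\theta
\qquad\text{and}\qquad
\alpha_a+\theta\alpha_b\le\theta
\]
(up to the $\sigma,\nu$ variants), which must be handled in two cases according to which child's terminal is the orientation of interest. This cone-type bound is what gives the second claim: summing the two inequalities yields $(1+\theta)(\alpha_a+\alpha_b)\le2\theta$, hence $\alpha_a+\alpha_b\le h=2\theta/(1+\theta)$. For a general interior vertex $v$, which lies in $L^\circ$ or $N^\circ$, I would apply the same statement inductively in the child region and compose it with the bound from $c$, showing that the mass stays at most $h$ because it is a convex-type combination.

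For the first claim, I compose from $a$: $x$ is the sum of $\zeta e/4\cdot\alpha_b$ and $x_L\cdot\alpha_b$ (plus $a$-returns that count toward $y$), and $y$ is the sum of $y_L$ and the terms through $c$ that end at $a$. Bounding $x_L$ and $y_L$ via the hypothesis $x_L+\theta y_L\le A(p_L)$ and applying \eqref{eq:reserve} to absorb the direct edge of weight $e$ then gives $x+\theta y\le A(p)$; the bound $y+\theta x\le B(p)$ follows likewise using $\nu$ from \eqref{eq:budget-V}.

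The main obstacle is bookkeeping the exit masses at $c$ from the correct viewpoint. The induction hypothesis for $L$ is stated from terminal $a$ or from terminal $c$, and we need both orientations, since the kernels $K_L(c,\cdot)$ are governed by $p_L(c)=q$. We must also check that the geometric sum of returns at $c$ converges with the slack given by $\Phi(q)<q/5+1/200$. I would first verify the inequalities at $\zeta=1$ and then invoke the uniform slack to pass to $\zeta>1$, exactly as in the parameter choice above.
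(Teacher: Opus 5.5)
Your skeleton is the paper's: induct on $|R^\circ|$, split at the triangle $abc$, sum the geometric series of returns to $c$ to get the exit masses $H_a,H_b$ (your $\alpha_a,\alpha_b$), and prove $U:=H_b+\theta H_a\le\theta$ and $V:=H_a+\theta H_b\le\theta$. Concretely, you clear the denominator: $(1-y_L-y_N)U+\theta(y_L+y_N)=\frac{\zeta}{4}(g+\theta f)+(x_N+\theta y_N)+\theta(x_L+y_L)\le\frac{\zeta}{4}(g+\theta f)+A(r)+\Phi(q)\le\theta$ by \eqref{eq:budget-ordinary}, where $x_L,y_L,x_N,y_N$ are the kernels from $c$. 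You then deduce $H_a+H_b\le h$ and compose $K_R(a,b)=(x_T+\zeta e/4)H_b$, $K_R(a,a)=y_T+(x_T+\zeta e/4)H_a$, with $(x_T,y_T):=(K_L(a,c),K_L(a,a))$ (your $x_L,y_L$). The hitting estimate is fine.

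The gap is in the first claim when the direct edge $a\to c$ has weight $e>0$. From $U,V\le\theta$ alone you only get $K_R(a,b)+\theta K_R(a,a)\le\theta(x_T+y_T)+\theta\zeta e/4$ and $K_R(a,a)+\theta K_R(a,b)\le y_T+\theta x_T+\theta\zeta e/4$. These do not yield \eqref{eq:cone} at $p_R(a)=p_L(a)+e$. If $L$ is a bare edge and $e=1$, you would need $\theta\zeta/4\le A(1)=11/50$ and $\theta\zeta\le\nu$, but $\theta\zeta/4>9/40$ and $\theta>\nu$. This is why \eqref{eq:reserve} carries $\sigma$ rather than $\theta$ in front of $x$ and $e$. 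To use it you need $U\le\sigma/\zeta$, and for the $B$-inequality (via $B(p)+\nu e/4=B(p+e)$) you need $V\le\nu/\zeta$.

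The missing idea is what licenses these sharper bounds. The inequalities \eqref{eq:budget-U} and \eqref{eq:budget-V} have no $f$-term, so they control $U,V$ only when the edge $c\to a$ has weight $f=0$. That is exactly what $e>0$ guarantees: the chain moves only along oriented edges, so at most one of $a\to c$, $c\to a$ has positive weight. The correct dichotomy is therefore $e=0$ versus $e>0$ (hence $f=0$). Your split ``according to which child's terminal is the orientation of interest'' is just the symmetric $U$/$V$ computation.

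For $e=0$, $U,V\le\theta$ suffice: $x_TU+\theta y_T\le x_T+\theta y_T\le A(p)$ and $y_T+x_TV\le y_T+\theta x_T\le B(p)$. For $e>0$, redo the denominator-clearing with $\sigma/\zeta$ (resp.\ $\nu/\zeta$) in place of $\theta$ on the left, using $\sigma/\zeta,\nu/\zeta<\theta$ on the return terms; this gives $U\le\sigma/\zeta$ and $V\le\nu/\zeta$. Then $(x_T+\zeta e/4)U+\theta y_T\le\sigma x_T+\sigma e/4+\theta y_T\le A(p+e)$ by \eqref{eq:reserve} and $\zeta>1$, and $y_T+(x_T+\zeta e/4)V\le y_T+\theta x_T+\nu e/4\le B(p+e)$. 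Your hedge ``up to the $\sigma,\nu$ variants'' skips exactly this step, which is where the direct edge gets paid for.
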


\begin{proof}
We prove both statements by induction on $|R^\circ|$.
The empty region is immediate.  Let $c$ be the third vertex of the
triangle incident with $ab$, and let $L(a,c)$ and $N(c,b)$ be the two
child regions.
See~\Cref{fig:regions}.
Write
\begin{align*}
    (x_T,y_T):=(K_L(a,c),K_L(a,a))~,&&
    (x_L,y_L):=(K_L(c,a),K_L(c,c))~,&&
    (x_N,y_N):=(K_N(c,b),K_N(c,c))~.
\end{align*}
Let $e,f,g\in[0,1]$ be the weights of $a\to c$, $c\to a$, and
$c\to b$, and put
\begin{align*}
    p:=p_L(a)~,&&
    q:=p_L(c)~,&&
    r:=p_N(c)~.
\end{align*}
Then, $q+r+f+g\le4$
and $e>0$ forces $f=0$.
Moreover, $p_R(a)=p+e$.
By induction, we have
\begin{align*}
x_L+\theta y_L \leq A(q)~, && 
y_L+\theta x_L \leq B(q)~, && 
x_N+\theta y_N \leq A(r)~, &&
y_N+\theta x_N \leq B(r)~.
\end{align*}
Thus,
$y_L+y_N\le B(q)+B(r)\le B(4)=17/20<1$. Repeated returns to $c$ give
\[
 H_a:=\frac{\zeta f/4+x_L}{1-y_L-y_N},\qquad
 H_b:=\frac{\zeta g/4+x_N}{1-y_L-y_N},
\]
Here $H_a$ is the total $\zeta$-weighted mass of a walk started at $c$
eventually exiting toward terminal $a$, accounting for the walk possibly
looping back to $c$ via $L$ or $N$ arbitrarily many times before
leaving: the numerator $\zeta f/4+x_L$ is the mass of
reaching $a$ without any return to $c$, and
dividing by $1-y_L-y_N$ sums the resulting geometric series over repeated returns $\sum_{i=0}^{\infty}(y_N+y_L)^i$, since each return resets the walk to $c$. $H_b$
is defined symmetrically for terminal $b$.

For convenience, put $U:=H_b+\theta H_a$ and $V:=H_a+\theta H_b$. Then
\begin{align*}
 (1-y_L-y_N)\cdot U+\theta(y_L+y_N)
 &=\frac\zeta4(g+\theta f)
   +(x_N+\theta y_N)+\theta(x_L+y_L)\\
 &\le\frac\zeta4(g+\theta f)+A(r)+\Phi(q)\le\theta,
\end{align*}
so $U\le\theta$.  The symmetric calculation gives $V\le\theta$.
Consequently,
\begin{equation}\label{eq:hitting-root}
 H_a+H_b=\frac{U+V}{1+\theta}\le h.
\end{equation}

\noindent
The regional kernel of $R(a,b)$ is thus:
\begin{align*}
    K_R(a,b)=(x_T+\zeta e/4)H_b~,&&
    K_R(a,a)=y_T+(x_T+\zeta e/4)H_a~.
\end{align*}
If $e=0$, then
\begin{align*}
 K_R(a,b)+\theta K_R(a,a)
 &= x_T\cdot U+\theta y_T
 \le\theta x_T+\theta y_T
 \le x_T+\theta y_T\le A(p),\\
 K_R(a,a)+\theta K_R(a,b)
 &=y_T+x_T\cdot V
 \le y_T+\theta x_T\le B(p).
\end{align*}
If $e>0$, then $f=0$.  Since $\sigma,\nu<\theta$, \ref{eq:budget-U} and \ref{eq:budget-V} give:
\begin{align*}
 (1-y_L-y_N)\cdot U+\frac\sigma\zeta(y_L+y_N)<
 \frac{\zeta g}{4}+x_N+\theta x_L+\theta(y_L+y_N)
 &\le\frac{\zeta g}{4}+A(r)+\Phi(q)\le\frac\sigma\zeta,\\
 (1-y_L-y_N)\cdot V+\frac\nu\zeta(y_L+y_N)<
 x_L+\theta\left(\frac{\zeta g}{4}+x_N\right)+\theta(y_L+y_N)
 &\le\frac{\zeta\theta g}{4}+A(q)+\Phi(r)\le\frac\nu\zeta.
\end{align*}
Thus $U\le\sigma/\zeta$ and $V\le\nu/\zeta$.
Using $\zeta>1$ and \eqref{eq:reserve} gives:
\begin{align*}
 K_R(a,b)+\theta K_R(a,a)
 &=\left(x_T+\frac{\zeta e}4\right)\cdot U+\theta y_T
 \le\frac\sigma\zeta x_T+\frac{\sigma e}{4}+\theta y_T
 \le\sigma x_T+\frac{\sigma e}{4}+\theta y_T
 \le A(p+e),\\
 K_R(a,a)+\theta K_R(a,b)
 &=y_T+\left(x_T+\frac{\zeta e}4\right)\cdot V
 \le y_T+\theta x_T+\frac{\nu e}{4}
 \le B(p)+\frac{\nu e}{4}=B(p+e).
\end{align*}

Finally, \eqref{eq:hitting-root} proves the terminal-hitting estimate from $c$.
From a vertex in a child region, first stop at the child's two terminals.
By induction, their total hitting mass is at most $h$;
from either terminal, the $\zeta$-weighted mass toward $\{a,b\}$ is at most $1$.
This proves the hitting estimate and completes the induction.
\end{proof}

\subsection{The centroid recursion.}
 
We prove by induction on $N$ that there is an absolute constant $C\ge 1$,
independent of $N$, such that
\[
  \mathbb E_v[\zeta^\tau] \;\le\; CN^\kappa \qquad\text{for every starting vertex } v,
\]
Here $\tau$ is the absorption time of the stochastic chain of Theorem~\ref{thm:threshold-four}. The proof bounds the
$\zeta$-weighted cost of one round trip from centroid face $S$ and back by a fixed ratio $\rho<1$, and sums the resulting geometric series against an inductive bound on the half-sized sub-problem.

\smallskip
\noindent Define the following parameters:
\begin{align*}
 \rho:=\frac{\zeta(1+h)}2=\frac{37\zeta}{38}<1~,&&
 \Lambda:=\frac\zeta{1-\rho}~,&&
 \kappa:=\log_2 (1+\Lambda)~.
\end{align*}

\smallskip
\begin{lemma}[Centroid]
\label{lem:centroid}
Every tree on $N\ge 1$ nodes has a node $S$ whose removal leaves every
component with at most $\lceil N/2\rceil$ nodes.
\end{lemma}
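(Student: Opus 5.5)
This is the classical centroid argument, run by a descent walk that always moves toward the larger side.

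Start at an arbitrary node $u_0$ of the tree $\mathcal T$. If every component of $\mathcal T - u_0$ has at most $\lceil N/2\rceil$ nodes, we are done. Otherwise exactly one component $Q$ has more than $\lceil N/2\rceil \ge N/2$ nodes. It is unique because two such components would together exceed $N-1$ nodes. Let $u_1$ be the unique neighbor of $u_0$ in $Q$, and repeat from $u_1$.

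\textbf{Termination.} We show the walk never steps back. Consider the step from $u_0$ to $u_1$. The component of $\mathcal T-u_1$ that contains $u_0$ is exactly $\mathcal T\setminus Q$. This holds because removing the edge $u_0u_1$ splits the tree into $Q$ and $\mathcal T\setminus Q$. So this component has $N-|Q| < N - N/2 = N/2 \le \lceil N/2\rceil$ nodes. Hence it is never the oversized component, and the walk never returns to $u_0$ along the edge just used.

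Since $\mathcal T$ is a tree, a walk that never immediately backtracks traces a simple path. A simple path visits each node at most once, so the process stops after at most $N$ steps. It can only stop at a node $S$ all of whose components have at most $\lceil N/2\rceil$ nodes.

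\textbf{Case $N=1$.} Removing the single node leaves no components, so the condition holds vacuously.

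\textbf{Main obstacle.} The only point needing care is the uniqueness of the oversized component and the bound on the back-component when stepping from $u_0$ to $u_1$. Both follow from the size count $|Q| > N/2$, as shown above. An alternative route is to pick $S$ minimizing the size of the largest component of $\mathcal T - S$ and argue by contradiction with the same step.
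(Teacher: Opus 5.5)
Your argument is correct, and it is the standard centroid proof: the oversized component is unique, stepping into it leaves a back-component of fewer than $N/2$ nodes, and a non-backtracking walk in a tree is a simple path. The paper cites this lemma as a classical fact and gives no proof, so there is nothing to compare. As a side note, the same computation with threshold $\lfloor N/2\rfloor$ also goes through: the back-component then has at most $\lceil N/2\rceil-1\le\lfloor N/2\rfloor$ nodes, so you in fact get the classical sharper bound.
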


\smallskip
\noindent\textbf{Base case ($N\le 3$).} With at most $3$ transient states,
$\Pr(\tau>j)\le 2^{-j}$ for all $j$, so $\tau<\infty$ almost surely, and
\[
 \E_v[\zeta^\tau]
 =1+(\zeta-1)\sum_{j\ge0}\zeta^j\Prob_v(\tau>j)
 \le\frac\zeta{2-\zeta}.
\]
Since $\zeta<38/37<2$, this is a finite constant; fix $C$ to dominate it.
 
\smallskip
\noindent\textbf{Inductive step ($N\ge4$).} Assume the bound for every
instance with fewer than $N$ transient states. Let $S$ be a centroid face of
$D(G)$ (Lemma~\ref{lem:centroid}); removing $S$ leaves at most three
attached regions, each with at most $N/2$ vertices, to which the inductive
hypothesis applies.
 
Fix a vertex of $S$ and call an \emph{excursion} the walk's trajectory from
the moment it leaves $S$ until it either (a) is absorbed without returning to
$S$, or (b) returns to a vertex of $S$. Let $d\le2$ be the total weight of the
direct edges from the starting vertex of $S$ to $S$'s other two vertices; the
remaining weight $4-d$ enters the attached regions.
 
\begin{lemma}[One excursion]
\label{lem:excursion}
The $\zeta$-weighted mass of return (case (b)) on one excursion is at most
\[
  \rho \;:=\; \frac{\zeta}{4}(2+2h) \;=\; \frac{\zeta(1+h)}{2} \;<\;1.
\]
\end{lemma}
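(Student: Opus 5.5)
We decompose the first step of an excursion according to where it goes. Let $u$ be the fixed vertex of the centroid triangle $S$ from which the excursion starts. The face $S$ has three sides. Each side $ab$ either lies on the outer face or bounds an attached region $R(a,b)$ on the side away from $S$; by construction, the regions attached at different sides are disjoint apart from their terminals. The vertex $u$ is a terminal of exactly the two regions attached along the two sides of $S$ through $u$, namely $R(u,u')$ and $R(u,u'')$, where $u',u''$ are the other two vertices of $S$. The edges of $S$ at $u$ are not part of an excursion: the direct edges $u\to u'$ and $u\to u''$ land in $S$ immediately and correspond to the weight $d$. So the excursion must begin by stepping from $u$ into the interior of one of these two regions. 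In the scaled weights, the mass of entering $R(u,u')^\circ$ is $p_{R(u,u')}(u)/4$, and similarly for the other region, with $p_{R(u,u')}(u)+p_{R(u,u'')}(u)\le 4-d\le 4$.

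The key step is to bound the return mass per unit of entering weight. Once the walk sits at an interior vertex $v\in R(u,u')^\circ$, it cannot reach a vertex of $S$ without first passing through $u$ or $u'$. Every path out of the region goes through its two terminals, since the side $uu'$ separates the region from the rest of the outerplanar graph. Hence the $\zeta$-weighted mass of returning to $S$ from $v$ equals the mass of hitting $\{u,u'\}$ before absorption. By the second part of Lemma~\ref{lem:region}, this is at most $h$.

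The first step contributes a factor $\zeta w(u,v)/4$. Summing over all first steps gives
\[
\frac{\zeta}{4}\,\bigl(p_{R(u,u')}(u)+p_{R(u,u'')}(u)\bigr)\,h\le \frac{\zeta}{4}\cdot 4h=\zeta h .
\]
This is at most $\rho=\zeta(1+h)/2$, because $h\le 1$ gives $2h\le 1+h$. The statement's form $\frac{\zeta}{4}(2+2h)$ suggests that the author instead charges each of the at most $4$ units of outgoing weight by at most $\max(1,h)$ in some splitting. We can use the cruder bound $\frac{\zeta}{4}\cdot 4h$ and then absorb the slack, since $\rho\ge\zeta h$ in any case. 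Finally, $\rho<1$ because $h=18/19$ gives $\rho=37\zeta/38$, and $\zeta\le 1+1/50<38/37$.

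The main obstacle is the separation claim: we must verify that, in the augmented maximal outerplanar graph, the walk inside an attached region touches $S$ only at that region's terminals. This means the third vertex of $S$ cannot be reached directly from the region's interior. It follows from planarity of the embedding: an edge from the interior of $R(u,u')$ to $u''$ would cross the chord $uu'$. We also need the convention that the excursion's stopping at $\{u,u'\}$ coincides with the stopping time used in Lemma~\ref{lem:region}. The lemma's stopping time is the first entry to a terminal, which is exactly a return to $S$, so the two agree. Zero-weight augmented edges are harmless.
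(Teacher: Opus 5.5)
Your treatment of the region-entering part is correct and matches the paper. The chord through two vertices of $S$ separates each attached region, so from an interior vertex the return mass equals the terminal-hitting mass of Lemma~\ref{lem:region}, which is at most $h$. Summing over first steps then gives $\frac{\zeta}{4}(4-d)h$.

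The gap is your decision that ``the edges of $S$ at $u$ are not part of an excursion.'' In the paper's accounting, a step $u\to u'$ or $u\to u''$ along a side of the triangle is a trivial excursion that returns to $S$ immediately, contributing $\zeta d/4$ to the return mass. This is forced by how the lemma is used. The centroid recursion writes every history as $j$ type-(b) excursions followed by one type-(a) excursion, so every step taken out of a vertex of $S$ must begin an excursion. Under your definition the walk could move around the triangle between excursions, and the $\zeta$-weight of those moves would be accounted for nowhere. That weight is not negligible: the two triangle edges out of $u$ can carry scaled weight up to $2$, i.e.\ mass up to $\zeta/2$ per step. So you have proved a true but different statement, and it is insufficient for the recursion.

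The missing step is to include the direct edges and bound
\[
\frac{\zeta}{4}\bigl(d+(4-d)h\bigr)\le\frac{\zeta}{4}(2+2h)=\rho .
\]
This uses $h<1$, so the left side is increasing in $d$. It also uses $d\le 2$, which holds because $u$ has only two triangle edges and each has weight at most $1$ (i.e.\ $P\le\frac14$). This is exactly where the per-edge cap enters the lemma. The gap between $\zeta h$ and $\rho$ that you noticed and proposed to ``absorb'' is not free slack; it is what pays for the direct returns inside $S$. Your numerical check that $\rho=37\zeta/38<1$ for $\zeta\le 1+\frac1{50}$ is fine.
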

 
\begin{proof}
With weight $d$ the walk exits directly to $S$, contributing $\zeta d/4$ to
the return mass. With weight $4-d$ it enters an attached region; by
Lemma~\ref{lem:region} its $\zeta$-weighted mass of ever reaching that region's two
terminals (i.e.\ returning to $S$) is at most $h$, contributing at most
$\zeta(4-d)h/4$. Using $d\le2$, $h<1$, the $\zeta$-weighted mass of return on one excursion is at most
\begin{equation*}
  \frac{\zeta}{4}\bigl(d+(4-d)h\bigr)
  \;\le\; \frac{\zeta}{4}(2+2h) = \rho.\qedhere
\end{equation*}
\end{proof}
 
A history with exactly $j$ returns to $S$ consists of $j$ excursions of type (b) followed by one of type (a); the latter is absorbed inside one attached
region of size at most $N/2$, so by the inductive hypothesis it contributes $\zeta$-weighted mass at most $\zeta\,C(N/2)^\kappa$ (the factor $\zeta$ accounts for the edge crossing from $S$ into the region). By
Lemma~\ref{lem:excursion}, each type-(b) excursion contributes a factor at most $\rho$. Summing the geometric series over $j\ge0$,
\[
  \mathbb E_S[\zeta^\tau] \;\le\; \sum_{j\ge0}\rho^j\cdot\zeta\,C(N/2)^\kappa
  \;=\; \frac{\zeta}{1-\rho}\,C(N/2)^\kappa \;=\; \Lambda\,C(N/2)^\kappa. \tag{$\ast$}
\]

Recall $\Lambda:=\zeta/(1-\rho)$ and $\kappa:=\log_2 (1+\Lambda)$. Then
by $(\ast)$, for a walk started on $S$,
\[
  \mathbb E_S[\zeta^\tau] \;\le\; \Lambda\,C(N/2)^\kappa
  = C\cdot\frac{\Lambda}{2^\kappa}\cdot N^\kappa
  \;\le\; C\cdot\frac{1+\Lambda}{2^\kappa}\cdot N^\kappa = CN^\kappa,
\]

If instead the walk starts inside an attached region, Lemma~\ref{lem:region} bounds the $\zeta$-weighted mass of ever reaching $S$ by $h<1$, and conditioning on that event and applying the bound just proved gives
\[
  \mathbb E_v[\zeta^\tau] \;\le\; (1+h\Lambda)\,C(N/2)^\kappa
  \;\le\; (1+\Lambda)\,C(N/2)^\kappa = CN^\kappa.
\]
This holds for every starting vertex and closes the induction on $N$.

\begin{proof}[Proof of~\Cref{thm:threshold-four}]
Make every vertex of
outdegree at most $3$ absorbing.  At every remaining vertex $u$, each
outgoing edge is chosen with probability $1/\outdeg(u)\le1/4$.  The
resulting chain is exactly of the type just analyzed, and its absorption
time is $\tau_4$ (the claim is trivial if the starting vertex is already
low). Markov's inequality
gives the tail bound
\[
  \Pr(\tau_4>m) = \Pr(\zeta^{\tau_4}>\zeta^m) \le Cn^\kappa\zeta^{-m},
\]
and Jensen's inequality gives $\E_s[\tau_4]=O(\log n)$.
\end{proof}

\begin{corollary}[Dynamic outerplanar orientation]
\label{cor:dynamic-four}
Starting from the empty graph, a simple outerplanar graph can be
maintained under insertions and deletions that preserve outerplanarity
so that every outdegree is at most $4$.

For every update, conditioned on the complete state before that update,
the expected update time and expected number of flipped edges are
$O(\log n)$.  Moreover, for every $0<\delta<1$, both are
$O(\log(n/\delta))$ with probability at least $1-\delta$.
Consequently, over every polynomial-length update sequence, all update
times and numbers of flipped edges are $O(\log n)$ simultaneously with
high probability.
\end{corollary}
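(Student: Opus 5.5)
The plan is to mirror the proof of \Cref{cor:dynamic}, with \Cref{thm:threshold-four} playing the role of \Cref{thm:walk}. We run the repair rule of \Cref{sec:randomwalk} with $\Delta=4$.

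\textbf{Invariant and correctness.} We show by induction over the update sequence that before each update every outdegree is at most $4$. This holds initially because the graph is empty. A deletion only lowers outdegrees, so it needs no repair. An insertion $s\to v$ can raise only $\outdeg(s)$, and only to $5$. If it does, we run the walk from $s$ until it first reaches a vertex $z$ of outdegree at most $3$. We then extract the retained path via \Cref{lem:decomposition} and flip it. As argued in \Cref{sec:randomwalk}, $s$ drops back to $4$, $z$ rises to at most $4$, and internal vertices are unchanged. The walk stops almost surely, since $\tau_4<\infty$ a.s.\ by the tail bound in \Cref{thm:threshold-four}.

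\textbf{Applying the hitting-time bound.} The one step needing care is that during the walk the graph is the current outerplanar graph with $s\to v$ already inserted, and $\outdeg(s)=5$. \Cref{thm:threshold-four} is stated for every orientation of every $n$-vertex outerplanar graph, with no bound on outdegrees, so it applies directly. Internally, its substochastic chain needs $P(u,v)\le 1/4$ only at nonabsorbing vertices, and those have outdegree at least $4$, including $s$ with outdegree $5$. This gives $\E_s[\tau_4]=O(\log n)$ and $\tau_4=O(\log(n/\delta))$ with probability at least $1-\delta$.

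\textbf{Cost and adaptivity.} By the remark in \Cref{sec:randomwalk}, the work is linear in the walk length including repetitions, so the running time is $O(\tau_4+1)$. The retained path is no longer than the walk, so the number of flips is at most $\tau_4$. The bound of \Cref{thm:threshold-four} holds for every current orientation, and the walk uses fresh randomness. Hence it holds conditioned on the complete state before the update, and therefore against an adaptive adversary.

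\textbf{High-probability bound.} For the polynomial-length claim, take at most $n^c$ updates. Set $\delta=n^{-(c+\gamma)}$ for each update, so each update costs $O(\log n)$ except with conditional probability $\delta$. A union bound then gives failure probability at most $n^{-\gamma}$ overall. Here $n$ is an upper bound on the number of vertices ever present; this is harmless because the bound of \Cref{thm:threshold-four} is monotone in $n$.

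I expect no real obstacle. The only subtle point is the one above: the theorem must be applied to the post-insertion graph, in which $s$ has outdegree $5$. The unrestricted form of \Cref{thm:threshold-four} covers this.
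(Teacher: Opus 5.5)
Your proposal is correct and follows essentially the same route as the paper. Both run the repair rule with $\Delta=4$ and apply \Cref{thm:threshold-four} to the post-insertion orientation; since that theorem holds for every orientation, it remains valid under conditioning on the history. Both then bound time and flips by the sampled walk length and conclude with the conditional union bound from \Cref{cor:dynamic}. Your extra remarks are correct details that the paper leaves implicit: almost-sure termination, that the chain needs $P(u,v)\le 1/4$ only at non-absorbing vertices such as $s$ with outdegree $5$, and monotonicity of the bound in $n$.
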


\begin{proof}
Apply the repair rule of \Cref{sec:randomwalk} with $\Delta=4$.  A
deletion needs no repair.  After an insertion, only the chosen tail $s$
can have outdegree $5$.  If this happens, run the uniform directed walk
from $s$ until it first reaches a vertex of outdegree at most $3$.

By \Cref{thm:threshold-four}, the sampled walk has length $O(\log n)$
in expectation and $O(\log(n/\delta))$ with probability at least
$1-\delta$.  Its retained path is no longer than the sampled walk, and
flipping that path restores maximum outdegree $4$.  By the
implementation described in \Cref{sec:randomwalk}, recording the walk,
extracting its retained path, and flipping it take time linear in the
sampled length.

Since \Cref{thm:threshold-four} holds for every current orientation,
these bounds remain valid after conditioning on the complete preceding
history.  The simultaneous high-probability guarantee over a
polynomial-length update sequence follows from the same conditional
union-bound argument as in the proof of \Cref{cor:dynamic}.
\end{proof}

Thus \Cref{thm:threshold-four} proves the upper bound assertion of
\Cref{thm:intro-tight-threshold}, while
\Cref{cor:dynamic-four} proves \Cref{thm:intro-outerplanar}.

\section{Extension to \texorpdfstring{$K_{2,t}$}{K\_2,t}-minor-free graphs}
\label{sec:k2t-extension}
Recall that outerplanar graphs are $K_{2,3}$-minor-free, but the converse is false because outerplanarity additionally excludes $K_4$ as a minor. Thus, even the case $t=3$ below is already a genuine extension of the graph family.

The extension in this section is based on Layer I. That argument uses outerplanarity only through \Cref{lem:path-count}, which bounds the number of simple paths of a given length between two fixed vertices. We prove a corresponding fixed-endpoint path bound for $K_{2,t}$-minor-free graphs and then apply the same random-walk analysis with this new bound. This yields an $O(\log n)$ hitting-time bound with an outdegree threshold depending only on $t$. The sharper Layer-II analysis uses more detailed two-terminal structure and is not needed for this extension. Throughout this section, $t\ge2$ denotes the excluded-minor parameter; we use $k$ for a path length, $m$ for a random walk length, and $L$ for a closed-walk length bound.

Related global copy-counting results for fixed forests in
$K_{s,t}$-minor-free graphs are due to Huynh and Wood
\cite{HW22}. Here we need an explicit fixed-endpoint bound; the
following direct argument derives one from Menger's theorem and the
Catalan recurrence. Let $C_j:=\frac{1}{j+1}\binom{2j}{j}$ be the $j$th
{\em Catalan number}. We use the recurrence
\begin{equation} \label{eq:catalan}
C_{k-1}=\sum_{i=1}^{k-1}C_{i-1}C_{k-i-1}
\qquad\text{for every }k\ge2.
\end{equation}

\begin{lemma}[Path count for $K_{2,t}$-minor-free graphs]
\label{lem:k2t-path-count}
Let $t\ge2$, and let $G$ be a simple $K_{2,t}$-minor-free graph.
For any two distinct vertices $a,b$ and every integer $k\ge1$, the
number of simple $a$--$b$ paths with exactly $k$ edges is at most $C_{k-1}(t-1)^{k-1}$.
In particular, this number is at most $\bigl(4(t-1)\bigr)^{k-1}$.
The same bounds hold for {\em directed} simple $a$--$b$ paths in every
orientation of $G$.
\end{lemma}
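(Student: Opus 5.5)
The plan is induction on $k$, with a small vertex separator supplying the factor $(t-1)$ at each step and the Catalan recurrence \eqref{eq:catalan} closing the sum. Let $f(k)$ denote the maximum, over all simple $K_{2,t}$-minor-free graphs $G$ and distinct vertices $a,b$, of the number of simple $a$--$b$ paths with exactly $k$ edges. The claim is $f(k)\le C_{k-1}(t-1)^{k-1}$.

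\emph{Base case and separator.} For $k=1$ the only candidate path is the edge $ab$, so $f(1)\le 1=C_0$, using that $G$ is simple. Now let $k\ge2$, and fix $G$ and $a,b$. A simple $a$--$b$ path with at least two edges never uses the edge $ab$, since that would revisit an endpoint. So every such path lives in $G':=G-ab$.

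I claim that $G'$ contains no $t$ internally vertex-disjoint $a$--$b$ paths. Each such path has length at least $2$, so its interior is a nonempty connected set. Contracting each interior to a single vertex gives $t$ distinct vertices, each adjacent to both $a$ and $b$. That is a $K_{2,t}$ minor of $G$, a contradiction.

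By Menger's theorem (vertex version, $a,b$ nonadjacent in $G'$), there is a set $X\subseteq V\setminus\{a,b\}$ with $|X|\le t-1$ meeting every $a$--$b$ path of $G'$. Getting this separator cleanly, including removing the edge $ab$ so that Menger applies, is the only step needing care. Everything afterwards is bookkeeping.

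\emph{Splitting and summing.} Take a simple $a$--$b$ path $P=(v_0,\dots,v_k)$ with $k\ge2$. Let $i$ be the least index with $v_i\in X$; then $1\le i\le k-1$. Splitting $P$ at $x=v_i$ gives a simple $a$--$x$ path with $i$ edges and a simple $x$--$b$ path with $k-i$ edges, both in $G$. The map sending $P$ to the triple (vertex $x$, first segment, second segment) is injective, because $P$ is recovered by concatenation.

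Each segment is a simple path between two distinct vertices of the $K_{2,t}$-minor-free graph $G$. So by induction the number of choices is at most $f(i)f(k-i)$ for each $x$ and $i$. Therefore
\[
f(k)\le (t-1)\sum_{i=1}^{k-1} C_{i-1}(t-1)^{i-1}\,C_{k-i-1}(t-1)^{k-i-1}=(t-1)^{k-1}\sum_{i=1}^{k-1}C_{i-1}C_{k-i-1}=C_{k-1}(t-1)^{k-1},
\]
where the last equality is \eqref{eq:catalan}.

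\emph{Remaining claims.} The bound $(4(t-1))^{k-1}$ follows from $C_j\le\binom{2j}{j}\le 4^j$. For directed paths, forgetting the edge directions maps distinct directed simple $a$--$b$ paths in an orientation of $G$ injectively to distinct undirected simple $a$--$b$ paths. This is the same observation used before \Cref{lem:path-count}.
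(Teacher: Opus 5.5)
Your proposal is correct and follows essentially the same route as the paper: remove the edge $ab$, use the excluded $K_{2,t}$ minor together with Menger's theorem to obtain a separator of size at most $t-1$, split each path at its first separator vertex, and close the induction with the Catalan recurrence \eqref{eq:catalan}. The only differences are cosmetic. You count the two segments in $G$ rather than $G'$, and you absorb the disconnected case into $X=\emptyset$ instead of treating it separately; both are harmless.
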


\begin{proof}
For a graph $H$ and vertices $u,v\in V(H)$, let $N_j^H(u,v)$ denote the number of simple $u$--$v$ paths in $H$ with exactly $j$ edges. We prove that $N_k^G(a,b)\le C_{k-1}(t-1)^{k-1}$ by induction on $k$, simultaneously for all simple $K_{2,t}$-minor-free graphs and all pairs of distinct endpoints.

The argument has one idea at its core: since $G$ excludes a $K_{2,t}$ minor, it cannot contain $t$ internally disjoint $a$--$b$ paths, so Menger's theorem supplies a small separator of at most $t-1$ vertices meeting every $a$--$b$ path. Splitting each path at the first separator vertex it reaches turns the count of long paths into a sum of products of counts of shorter paths; and that sum, remarkably, is exactly the Catalan recurrence. We now carry this out in detail.

\smallskip
\noindent\textit{Base case.} For $k=1$, the simplicity of $G$ allows for at most one edge $ab$, hence at most one path, matching $C_0(t-1)^0 = 1$.
 
\smallskip
\noindent\textit{Inductive step.} Let $k \ge 2$, and assume the bound for every smaller positive path length. If $ab \notin E(G)$, set $G' := G$; otherwise let $G'$ be $G$ with the edge $ab$ deleted. A simple $a$--$b$ path with more than one edge cannot use $ab$, so
$N^G_k(a,b) = N^{G'}_k(a,b)$, and $G'$ remains $K_{2,t}$-minor-free.
 
\emph{No $t$ disjoint paths.} We claim $G'$ has no $t$ internally vertex-disjoint $a$--$b$ paths. If it did, say $P_1,\dots,P_t$, then since $a,b$ are non-adjacent in $G'$ each $P_j$ has at least one internal vertex; contracting the internal vertices of each $P_j$ to a single point produces $t$ distinct vertices, each adjacent to both $a$
and $b$, i.e.\ a $K_{2,t}$ minor --- a contradiction.
 
\emph{A small separator.} If $a,b$ lie in different components of $G'$, then $N^{G'}_k(a,b)=0$ and we are done. Otherwise, by Menger's theorem there is a set $X \subseteq V(G')\setminus\{a,b\}$ with $|X|\le t-1$ that meets every $a$--$b$ path in $G'$.
 
\emph{Splitting at the first hit.} Assign each simple $a$--$b$ path $P$ with $k$ edges to the first vertex $x \in X$ it meets, traversing $P$ from $a$ to $b$; say $x$ is reached after $i$ edges, where $1 \le i \le k-1$. Splitting $P$ at $x$ yields an $i$-edge simple $a$--$x$ path together with a $(k-i)$-edge simple $x$--$b$ path, and this pair determines $P$ uniquely. Hence the number of paths assigned to $x$ with split position $i$ is at most $N^{G'}_i(a,x)\, N^{G'}_{k-i}(x,b)$ --- possibly an overcount, since some such pairs may fail to glue into a simple path or may reach an earlier vertex of $X$ first, but an overcount only preserves the direction of the inequality we need. Summing over $x \in X$ and $1 \le i \le k-1$, and applying the inductive hypothesis to each factor:
\begin{align*}
  N_k^{G'}(a,b)
  &\le
  \sum_{x\in X}\sum_{i=1}^{k-1}
    N_i^{G'}(a,x)\,N_{k-i}^{G'}(x,b) 
  ~\le~
  |X|\sum_{i=1}^{k-1}
    C_{i-1}(t-1)^{i-1}
    C_{k-i-1}(t-1)^{k-i-1} \\
  &\le
  (t-1)^{k-1}
  \sum_{i=1}^{k-1}C_{i-1}C_{k-i-1} 
  ~=~
  C_{k-1}(t-1)^{k-1},
\end{align*}
The second inequality uses the induction hypothesis, and the final
equality is the Catalan recurrence (\Cref{eq:catalan}). This closes the induction.

\smallskip
Since $C_j \le 4^j$ for every $j \ge 0$, we obtain $N^G_k(a,b) \le \big(4(t-1)\big)^{k-1}$.
Forgetting edge directions maps every directed simple $a$--$b$ path to a distinct undirected simple $a$--$b$ path, so the same bounds hold for directed simple paths in every orientation of $G$.

\end{proof}

We next isolate the part of the random walk argument that uses the path
count.  

\begin{lemma}[From path counts to random walks]
\label{lem:path-count-transfer}
Let $B\ge1$ be an integer, and let $G$ be an $n$-vertex simple graph such
that, for every two distinct vertices $a,b$ and every $k\ge1$, there are
at most $B^k$ simple $a$--$b$ paths with exactly $k$ edges.  Orient the
edges of $G$, set $\Delta:=12B$, and call a vertex {\em high} if its outdegree is at least $\Delta$.  
Starting at any vertex $s$, run the ordinary directed random walk until it reaches a vertex of outdegree less than $\Delta$, and let $\tau_\Delta$ be the number of edges traversed. 
Then, for every integer $m\ge0$, 
\[
\Prob(\tau_\Delta>m)\le3n\,2^{-m}.
\]
\end{lemma}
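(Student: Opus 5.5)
The plan is to rerun the Layer~I argument of \Cref{thm:walk} verbatim, with the constant $4$ from \Cref{lem:path-count} replaced by $B$ and the threshold $48$ replaced by $\Delta=12B$. For every edge $u\to v$ with high tail we keep the weight $\wt(u\to v)=2/\outdeg(u)$. The three properties of \Cref{ob:basic} then hold unchanged, except that item~1 becomes $\wt(e)\le 2/(12B)=1/(6B)$. In particular, the total weight of high walks of length $m$ from $s$ equals $2^m\Prob(\tau_\Delta>m)$. The decomposition \Cref{lem:decomposition} is purely combinatorial and applies as is.

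First, I would reprove \Cref{lem:closed} in this setting: for every high $x$ and every $L$, the total weight of high closed walks at $x$ with at most $L$ edges is less than $2$. The induction is identical. A nonempty closed walk starts with an edge $x\to y$, and its remainder is decomposed into a retained $y$--$x$ path with $\ell\ge 2$ edges (by simplicity of $G$) together with $\ell+1$ closed pieces. Using the hypothesis with $a=y$, $b=x$ (which are distinct), there are at most $B^\ell$ retained paths. The retained edges contribute at most $(1/(6B))^\ell$, and the closed pieces contribute less than $2^{\ell+1}$ by induction. Hence the remainder mass is at most
\[
\sum_{\ell\ge2}B^\ell(6B)^{-\ell}2^{\ell+1}=2\sum_{\ell\ge2}3^{-\ell}=\tfrac13 .
\]
Multiplying by the total first-edge weight $\le 2$ and adding the empty walk gives $\le 5/3<2$.

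Second, I would sum over retained paths from $s$, where $v_0=s$ is fixed and $s$ is assumed high (otherwise $\tau_\Delta=0$ and the bound is trivial). For $k=0$ only the endpoint $s$ is possible. For $k\ge1$ there are at most $n$ endpoints $v_k\neq s$ and at most $B^k$ paths to each, giving at most $nB^k$ paths in total; the hypothesis is stated only for distinct endpoints, so the $k=0$ case is handled separately, but it is still covered by $nB^0$. This gives the analogue of \eqref{eq:fixed-k}:
\[
\sum_{W\in\mathcal W_{m,k}}\wt(W)< nB^k(6B)^{-k}2^{k+1}=2n3^{-k}.
\]
Summing over $k$ yields $2^m\Prob(\tau_\Delta>m)<3n$, which is the claimed bound.

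I do not expect a real obstacle here. The only points to check are that the path-count hypothesis is invoked only for distinct endpoints, which holds in both the closed-walk remainder ($y\neq x$) and the $k\ge1$ case, and that $\ell\ge2$ continues to follow from simplicity.
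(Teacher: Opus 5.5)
Your proposal is correct and follows the paper's proof essentially verbatim. It keeps the weights $2/\outdeg(u)$, reruns the closed-walk induction of \Cref{lem:closed} with $B^\ell$ in place of $4^\ell$ to get the ratio $B\cdot\frac{2}{12B}\cdot 2=\frac13$, and sums over retained paths from $s$, treating $k=0$ separately, to obtain $2^m\Prob(\tau_\Delta>m)<3n$.
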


\begin{proof}

The proof is essentially a parameterized repetition of the arguments of
\Cref{lem:closed,thm:walk}, with the outerplanar path-count bound $4^k$ (Lemma~\ref{lem:path-count}) replaced throughout by the hypothesized bound $B^k$, and the threshold $48$ replaced by $\Delta = 12B$. We record only where the argument uses these two quantities and refer to the proof of Theorem~\ref{thm:walk} for unchanged steps.
 
Define the edge weight $\mathrm{wt}(u\to v) := 2/d^+(u)$ for edges with a high tail. Every property of these weights used there --- outgoing weights at a high vertex sum to $2$, and each edge of a high walk has weight at most $2/\Delta$.

The one change is in bounding the total weight of high closed walks (Lemma~\ref{lem:closed}): each retained sub-path of length $\ell$ was bounded by $4^\ell$ possible choices in the outerplanar setting; here it is bounded by $B^\ell$ by hypothesis. For a high vertex $x$, let $S_L(x)$ be the total weight of all high closed walks at $x$ having at most $L$ edges. We prove by retracting the same induction on $L$ that $S_L(x)<2$ for every high vertex $x$. 
Thus, after one fixed first edge, all possible remainders have total weight at most
\[
  \sum_{\ell \ge 2} B^\ell \left(\frac{2}{\Delta}\right)^{\!\ell} 2^{\ell+1}
  \;=\; 2\sum_{\ell \ge 2} \left(\frac{4B}{\Delta}\right)^{\!\ell}
  \;=\; 2\sum_{\ell \ge 2} \left(\frac{1}{3}\right)^{\!\ell} \;=\; \frac{1}{3},
\]
using $\Delta = 12B$ so that $4B/\Delta = 1/3$. The rest of the induction is unchanged. This completes the analogue of \Cref{lem:closed}.

Now fix $m\ge0$.  For $k=0$, there is exactly one retained path, namely
the empty path at $s$.  For $k\ge1$, there are at most $nB^k$ retained
paths with $k$ edges: at most $n$ choices for the last vertex and at most $B^k$ paths to each fixed last vertex. The total weight of all high walks of length $m$ starting at $s$ is therefore less than
\[
  \sum_{k=0}^{m} n B^k \left(\frac{2}{\Delta}\right)^{\!k} 2^{k+1}
  \;<\; 2n \sum_{k \ge 0} \left(\frac{4B}{\Delta}\right)^{\!k}
  \;=\; 2n \sum_{k \ge 0} \left(\frac{1}{3}\right)^{\!k} \;=\; 3n.
\]

By the same calculation as in Observation~\ref{ob:basic}(3), this total weight is $2^m\Prob(\tau_\Delta>m)$.  Hence $\Prob(\tau_\Delta>m)\le 3n\,2^{-m}$, and the tail bound follows.

\end{proof}

Set $B_t:=4(t-1)$ and $\Delta_t:=12B_t=48(t-1)$.
By \Cref{lem:k2t-path-count}, between two fixed vertices of a
$K_{2,t}$-minor-free graph there are at most
$B_t^{k-1}\le B_t^k$ 
simple paths with exactly $k$ edges.  Applying
\Cref{lem:path-count-transfer} gives the following theorem.

\begin{theorem}[Random walk in $K_{2,t}$-minor-free graphs]
\label{thm:k2t-walk}
Let $t\ge2$, and let $G$ be an $n$-vertex simple
$K_{2,t}$-minor-free graph whose edges have been oriented.  Starting at
any vertex, the ordinary directed random walk reaches a vertex of
outdegree less than $\Delta_t=48(t-1)$
after $O(\log n)$ steps in expectation.
More precisely, if $\tau_{\Delta_t}$ is
the number of steps, then $\E[\tau_{\Delta_t}]\le\left\lceil\log_2(3n)\right\rceil+2$, and, for every integer $m\ge0$, $\Prob(\tau_{\Delta_t}>m)\le3n\,2^{-m}$, and, for every $0<\delta<1$,
$\Prob(\tau_{\Delta_t}>
    \left\lceil\log_2\frac{3n}{\delta}\right\rceil)
  \le\delta$.
\end{theorem}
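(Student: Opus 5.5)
The plan is to derive \Cref{thm:k2t-walk} directly from \Cref{lem:path-count-transfer}, with \Cref{lem:k2t-path-count} supplying its hypothesis. Set $B:=B_t=4(t-1)$. Since $t\ge2$, $B\ge4\ge1$ is an integer, as \Cref{lem:path-count-transfer} requires. By \Cref{lem:k2t-path-count}, for any distinct $a,b$ and every $k\ge1$, the number of simple $a$--$b$ paths with exactly $k$ edges is at most $B^{k-1}\le B^k$. This also covers directed paths in the given orientation, since they inject into undirected paths. So the hypothesis of \Cref{lem:path-count-transfer} holds, and with $\Delta=12B=48(t-1)=\Delta_t$ we get $\Prob(\tau_{\Delta_t}>m)\le 3n\,2^{-m}$ for every integer $m\ge0$.

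The remaining two statements follow from this tail bound exactly as in the proof of \Cref{thm:walk}.

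\emph{Expectation.} Put $T:=\lceil\log_2(3n)\rceil$ and write $\E[\tau_{\Delta_t}]=\sum_{r\ge0}\Prob(\tau_{\Delta_t}>r)$. Bound each of the first $T$ terms by $1$. For $j\ge0$ the tail bound gives $\Prob(\tau_{\Delta_t}>T+j)\le 3n\,2^{-T-j}\le 2^{-j}$, so the remaining terms sum to at most $2$. Hence $\E[\tau_{\Delta_t}]\le T+2$.

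\emph{High probability.} Take $m=\lceil\log_2(3n/\delta)\rceil$. Then $3n\,2^{-m}\le\delta$, which gives the bound with probability at least $1-\delta$.

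There is no real obstacle here. The only points to check are small technicalities. First, \Cref{lem:path-count-transfer} is stated for distinct endpoints and $k\ge1$. Retained paths with $k\ge1$ have distinct endpoints, and the $k=0$ case is handled separately inside that lemma, so no extra hypothesis is needed. Second, the inequality $B^{k-1}\le B^k$ uses $B\ge1$.
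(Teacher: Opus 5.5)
Your proposal is correct and follows essentially the same route as the paper. You invoke \Cref{lem:k2t-path-count} to get at most $B_t^{k-1}\le B_t^k$ simple paths, then apply \Cref{lem:path-count-transfer} with $B=B_t=4(t-1)$ and $\Delta=12B_t=48(t-1)$. The expectation and high-probability bounds then come from the same tail-sum calculation as in the proof of \Cref{thm:walk}, which you spell out correctly.
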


\begin{proof}
By \Cref{lem:k2t-path-count}, for every $k\ge1$, the number of
$k$-edge simple paths between two fixed vertices is at most
$B_t^{k-1}\le B_t^k$, where $B_t=4(t-1)$. Applying
\Cref{lem:path-count-transfer} with $B=B_t$ and
$\Delta=\Delta_t=12B_t=48(t-1)$ gives the tail bound. The expectation and
high-probability bounds follow from the same tail calculations as in the
proof of \Cref{thm:walk}.
\end{proof}

\begin{corollary}[Dynamic orientation of $K_{2,t}$-minor-free graphs; Restatement of~\Cref{thm:intro-k2t}]
\label{cor:k2t-dynamic}
For every fixed integer $t\ge2$, starting from the empty graph, a simple
$K_{2,t}$-minor-free graph can be maintained under insertions and
deletions that preserve $K_{2,t}$-minor-freeness so that every outdegree
is at most $48(t-1)$.

For every update, conditioned on the complete state before that update,
the expected update time and expected number of flipped edges are
$O(\log n)$.  Moreover, for every $0<\delta<1$, both are
$O(\log(n/\delta))$ with probability at least $1-\delta$.  Thus,
over every polynomial-length update sequence, all update times and
numbers of flipped edges are $O(\log n)$ simultaneously with high
probability.
\end{corollary}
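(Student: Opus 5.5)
The plan is to repeat the proof of \Cref{cor:dynamic} word for word, with \Cref{thm:walk} replaced by \Cref{thm:k2t-walk} and the threshold $48$ replaced by $\Delta_t=48(t-1)$.

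\emph{Invariant and repair.} We run the repair rule of \Cref{sec:randomwalk} with $\Delta=\Delta_t$, starting from the empty graph.
\begin{itemize}
\item A deletion only decreases outdegrees, so the invariant that every outdegree is at most $\Delta_t$ survives and no repair is needed.
\item On an insertion, orient the new edge as $s\to v$. Only $s$ can exceed the bound, and then with outdegree exactly $\Delta_t+1$.
\item In that case, run the uniform directed walk from $s$ until it first hits a vertex of outdegree below $\Delta_t$. Extract the retained path via \Cref{lem:decomposition} and flip it.
\item The accounting of \Cref{sec:randomwalk} then restores maximum outdegree $\Delta_t$: $s$ loses one out-edge, the endpoint gains one, and internal vertices are unchanged.
\end{itemize}
The graph after each insertion is $K_{2,t}$-minor-free by assumption, so \Cref{thm:k2t-walk} applies to the current oriented graph. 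The walk's behavior depends only on the orientation, not on the outdegree invariant. The single vertex $s$ of outdegree $\Delta_t+1$ is simply high, which is allowed.

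\emph{Cost bounds.} For fixed $t$, $\Delta_t$ is a constant. Every outdegree during a repair is at most $\Delta_t+1$, so out-lists fit in constant-size arrays, and each walk step and each flip costs $O(1)$ in the model of the Remark in \Cref{sec:randomwalk}. By \Cref{lem:decomposition}, recording the walk, extracting the retained path, and flipping it take time linear in the sampled walk length $\tau_{\Delta_t}$. The number of flips is at most $\tau_{\Delta_t}$. \Cref{thm:k2t-walk} then gives:
\begin{itemize}
\item expected update time and recourse $O(\lceil\log_2(3n)\rceil+2)=O(\log n)$;
\item update time and recourse $O(\log(3n/\delta))=O(\log(n/\delta))$ with probability at least $1-\delta$.
\end{itemize}
The theorem holds for every orientation. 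The future randomness of the walk is fresh, and an adaptive adversary fixes the graph and orientation before the walk starts. Hence both bounds hold conditioned on the complete preceding history.

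\emph{Whole sequence.} For a sequence of at most $n^c$ updates and any $\gamma>0$, set $\delta=n^{-(c+\gamma)}$. Each update exceeds $O(\log n)$ with conditional probability at most $\delta$. A union bound over all updates (valid since each conditional bound holds regardless of history) gives failure probability at most $n^{-\gamma}$.

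One point needs care: $n$ should be an upper bound on the number of vertices over the whole sequence. This makes the $O(\log n)$ bounds uniform across updates and introduces no further difficulty.
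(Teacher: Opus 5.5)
Your proposal is correct and follows the paper's proof essentially step for step: run the repair rule of \Cref{sec:randomwalk} with $\Delta_t=48(t-1)$, apply \Cref{thm:k2t-walk} to the current orientation to bound the walk length (and hence, via \Cref{lem:decomposition} and constant-size out-lists, both the time and the number of flips), and finish with the same conditional union bound as in \Cref{cor:dynamic}. Your extra remarks, that the overloaded vertex $s$ is simply high and that $n$ should bound the vertex count over the whole sequence, are harmless clarifications that the paper leaves implicit.
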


\begin{proof}
Apply the repair rule of \Cref{sec:randomwalk} with
$\Delta=\Delta_t=48(t-1)$.  A deletion needs no repair.  After an insertion, only the
chosen tail can have outdegree $\Delta_t+1$.  If this happens, run the directed
random walk until it reaches a vertex of outdegree less than $\Delta_t$,
extract the retained path, and flip it.  The first outdegree decreases
to $\Delta_t$, the last increases to at most $\Delta_t$, and every internal
outdegree is unchanged.

By \Cref{thm:k2t-walk}, the sampled walk has length $O(\log n)$ in
expectation and $O(\log(n/\delta))$ with probability at least
$1-\delta$.  The retained path is no longer than the walk.  Since $t$ is
fixed, $\Delta_t$ is a constant, so the representation from
\Cref{sec:randomwalk} supports every walk step and edge flip in
constant time.  This proves the per-update bounds, even after conditioning
on the complete preceding history.
The high-probability guarantee over a polynomial-length
update sequence follows from the same conditional union-bound argument as
in the proof of \Cref{cor:dynamic}.
\end{proof}
This proves \Cref{thm:intro-k2t}.

\noindent\textbf{Remark.}
For $t=3$, the general argument gives the outdegree bound $96$.  For
outerplanar graphs, the sharper path-counting bound in
\Cref{lem:path-count} gives $48$. However, we did not try to optimize the constants, and so the general extension
can replace the specialized outerplanar analysis, if one is willing to sacrifice a small constant factor (namely, 2) in the outdegree bound.

\paragraph{Optimizing constants.}
We can change the parameters to achieve $\Delta=\ceil{7B/2}$ 
with probability $\Prob(\tau_\Delta>m)\le3n\,\left(1+\frac{1}{125}\right)^{-m}$
using the following changes.
First, define $\wt(u\to v):=(1+\frac{1}{125})\cdot\frac{1}{\outdeg(u)}$.
Second, show that the total weight of all high closed walks at $x$ with at most $L$ edges is less than $3/2$.
As before, the total weight after one fixed edge is at most
\begin{align*}
  \sum_{\ell\ge2}
    B^\ell
    \left(\frac{126/125}{\Delta}\right)^\ell
    \left(\frac{3}{2}\right)^{\ell+1}
  &=
  \frac{3}{2}\sum_{\ell\ge2}\left(\frac{54}{125}\right)^\ell < \frac{1}{2\left(1+\frac{1}{125}\right)}~.
\end{align*}
Therefore, the total weight for all closed walks starting at $x$ with at most $L$ edges is less than $1+\frac{1}{2(1+\frac{1}{125})}\cdot \left(1+\frac{1}{125}\right)=\frac{3}{2}$.
Hence, the total weight of all high walks of length $m$ is bounded by
\begin{align*}
  \sum_{k=0}^{m}
    nB^k
    \left(\frac{126/125}{\Delta}\right)^k
    \left(\frac{3}{2}\right)^{k+1}
  &<
  \frac{3}{2}n\sum_{k\ge0}\left(\frac{54}{125}\right)^k
  <
  3n.
\end{align*}
With the appropriate modification to \Cref{ob:basic}(3), the total weight is
$\left(1+\frac{1}{125}\right)^m\Prob(\tau_\Delta>m)$ and the improvement follows.

\section{Dynamic Distributed Networks}
\label{sec:dynamic-distributed}

The random-walk repair rule is not just fast, it is also \EMPH{local}. In this section we show that it naturally transfers to a basic dynamic distributed message-passing model.
 
\paragraph{Model.} We use the standard \EMPH{local-wakeup CONGEST model}; see, e.g., \cite{PPS16,CHK16,KS18,AOSS18,ALS22}.
There is a fixed set $V$ of $n$ processors, and the undirected communication graph starts with no edges. Each update inserts or deletes a single edge, and ---
critically --- only its two endpoints wake up; every other processor stays asleep until a message reaches it. Awake processors communicate in synchronous rounds, and in each round every awake processor may send an $O(\log n)$-bit message to any subset of its current neighbors, with possibly different messages sent to different
neighbors. Each processor stores the current orientations of its incident edges and
its list of outgoing edges.

Updates occur sequentially and are sufficiently spaced that the repair
following one update finishes before the next update occurs.  This makes
worst-case bounds particularly relevant: they give a uniform bound on the
recovery time, and hence on the required spacing between consecutive
updates.  The \emph{round complexity} is the number of rounds until a
valid orientation is restored, and the \emph{message complexity} is the
total number of messages sent during the update procedure.  Since deleting
an edge cannot violate the outdegree bound, deletions require no repair;
thus our algorithm applies to both graceful and abrupt edge deletions.

\bigskip
\noindent A centralized algorithm can employ global data structures, such as a queue, counters, tables indexed by vertex IDs, etc. None of this exists in this model. Thus, a repair has to operate using only messages passed between neighbors.
The following lemma is the key assertion behind
\Cref{cor:intro-distributed}.

\begin{lemma}[Distributed implementation]
\label{lem:distributed-implementation}
Fix an outdegree threshold $\Delta$.  If a random walk repair samples a
walk of length $L$, counting repetitions, then the repair can be
implemented in the local-wakeup CONGEST model using $O(L+1)$ rounds and
$O(L+1)$ messages.
\end{lemma}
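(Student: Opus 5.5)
We implement the repair with a single token that carries the whole procedure, so that in every round exactly one message is in flight. The token carries only $O(\log n)$ bits: the ID of the starting vertex $s$ and a repair identifier (e.g.\ $s$ together with a per-processor update counter). The vertex currently holding the token performs one step of the centralized procedure of \Cref{sec:randomwalk} and forwards the token to one neighbor. Each round therefore costs one message, and it suffices to show that the sampled walk, the loop erasure of \Cref{lem:decomposition}, and the path flip together take $O(L+1)$ token moves.

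\emph{Step 1: wakeup and the walk.} When edge $\{s,v\}$ is inserted, both endpoints wake up and agree in one round (say by comparing IDs) on the orientation $s\to v$. If $\outdeg(s)\le\Delta$, we are done in $O(1)$ rounds and messages. Otherwise $s$ creates the token. A holder $u$ with $\outdeg(u)<\Delta$ stops the walk. Any other holder samples a uniform out-neighbor from its local list and sends the token along that edge. This is the same distribution as the centralized walk, so the number of forward moves is exactly $L$.

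\emph{Step 2: online loop erasure.} Storing the whole walk is impossible with $O(\log n)$-bit messages, so instead we maintain the loop-erased path by local marks. Each vertex $u$ holds a mark for the current repair ID together with a predecessor pointer $\mathrm{pred}(u)$. When the token arrives at $u$ from $w$, there are two cases.
\begin{itemize}
\item If $u$ is unmarked, it marks itself and sets $\mathrm{pred}(u)=w$.
\item If $u$ is already marked, a cycle has closed. The vertices on the current path after $u$ must be unmarked. This is where we must be careful, because walking the token back along the cycle could cost more than $O(L)$ in total. Instead, we charge each erased vertex to the forward step that marked it: each marking step is erased at most once, so the total erasure work is $O(L)$. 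Concretely, the token travels backward from $w$ along pred pointers (which are edges, usable in either direction for communication), unmarking each vertex until it reaches $u$, and then resumes walking from $u$.
\end{itemize}
Online loop erasure (chronological) agrees with the last-exit decomposition of \Cref{lem:decomposition} for walks; in either case the retained path is a simple $s$--$z$ path. Any simple directed $s$--$z$ path with all internal vertices unchanged in outdegree suffices for correctness, so we do not need the two erasures to coincide exactly.

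\emph{Step 3: flipping.} At $z$, the token backtracks along the pred pointers to $s$. At each hop, both endpoints flip their local records of the edge $\mathrm{pred}(u)\to u$, and the receiving endpoint clears its mark. The retained path has at most $L$ edges, so this phase costs $O(L)$ rounds and messages.

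\emph{Main obstacle: stale marks.} Marks left over from an earlier repair could corrupt later ones. Two measures handle this. First, the flip phase clears the marks on the retained path, and the erasure phase clears the marks of erased vertices, so every mark set during a repair is removed before the repair ends. Second, marks are tagged with the repair ID as a safeguard. The remaining point to check is that every backward hop is paid for by an earlier forward step, which gives a total of $O(L+1)$ rounds and messages.
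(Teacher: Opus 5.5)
Your proposal is correct and takes essentially the same route as the paper: a single token runs the walk, keeps the loop-erased path via marks and predecessor pointers, erases each loop by backtracking with every erased vertex charged to the forward step that marked it, and finally backtracks along the retained path flipping edges. Two details are left implicit: $s$ must mark itself when it creates the token, and each erasure costs one extra hop back across the closing edge from $u$ to $w$, which the paper charges to the forward step that closed the loop. Your remarks on stale marks and on chronological loop erasure coinciding with the last-exit decomposition are sound additions.
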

    
\begin{proof}
Consider an edge insertion.  The two endpoints orient the new edge
according to any fixed local rule, say from the smaller-ID endpoint to
the larger-ID endpoint.  If its tail does not exceed outdegree $\Delta$,
there is nothing more to do.

\emph{The idea:} When an insertion overloads a vertex
$s$, a single token is spawned at $s$ and walks the graph exactly like the centralized random walk --- sampling a uniformly random outgoing edge at each step.
As the walk proceeds, the token maintains its current loop-erased path
from $s$.  Every processor on this path stores a mark and a predecessor
pointer:
 
\begin{itemize}
  \item Each vertex the token visits marks itself and records which vertex sent the
  token (its predecessor on the path).
  \item If the token arrives at an already-marked vertex, it has just closed a loop. It backtracks along the predecessor pointers to erase that loop --- clearing marks as it goes --- then resumes the walk from there.
  \item Once the token reaches a vertex $z$ with outdegree below the threshold, the
  marked trail is exactly the retained path from $s$ to $z$. The token walks back along it one last time, flipping every edge as it passes. 
  Flipping the path in reverse order gives $z$ one more outgoing edge, $s$ one fewer, and leaves every internal outdegree unchanged.
\end{itemize}

\emph{Cost.} If the sampled walk has length $L$ --- counting every repeated vertex
--- the entire repair costs $O(L)$ rounds and $O(L)$ messages: Every path edge traversed while erasing a loop can be charged to the earlier random walk step that first placed that edge on the current path, and each loop erasure incurs only one additional traversal of its closing edge.  Hence all loop erasures together use
$O(L)$ transmissions.  Finally, the remaining path has length at most
$L$, so the final backtracking also costs $O(L)$ transmissions.  Since
the token moves by at most one edge per round and every message has
$O(\log n)$ bits, the total cost is $O(L+1)$ rounds and $O(L+1)$ messages.
\end{proof}

Combining \Cref{lem:distributed-implementation} with the random walk
bounds of \Cref{thm:threshold-four,thm:k2t-walk} yields
\Cref{cor:intro-distributed}.  In particular, for every update and
conditioned on the complete preceding history, both the expected round
complexity and the expected message complexity are $O(\log n)$.  Since
the corresponding tail bounds also hold after conditioning on the
preceding history, a union bound implies that, throughout every
polynomial-length execution, all updates use $O(\log n)$ rounds and
messages simultaneously w.h.p.  Thus the guarantees hold against an
adaptive adversary.

\paragraph{Relation to previous work.}
A small centralized update time or recourse bound does not by itself
yield an efficient implementation in the local-wakeup model.  Even the
BF algorithm coordinates its reset cascade using a global queue, and
centralized dynamic algorithms often rely on heavier global data structures  or rebuilding procedures. Such \EMPH{global
coordination} can be considerably more expensive in a distributed
network, particularly in terms of the number of messages; this general
difficulty is discussed in~\cite{ALS22}.  Our algorithm avoids it
altogether: there is only one active token, and every message can be
charged to a traversal made by that token.  We therefore restrict the
comparison below to the two previous works that explicitly studied
dynamic distributed edge orientation~\cite{PPS16,KS18}.

Parter, Peleg, and Solomon~\cite{PPS16} work in the LOCAL local-wakeup model and
maintain an $O(\alpha+\log^* n)$-orientation using $O(\log^* n)$
amortized rounds.  Although their main focus is on amortized complexity,
their algorithm also has $O(\log n)$ worst-case round complexity.  They
do not analyze message complexity; moreover, the LOCAL model permits messages
of unbounded size.

Kaplan and Solomon~\cite{KS18} work in the CONGEST local-wakeup model and, for
$\Delta=O(\alpha)$, obtain $O(\log n)$ amortized round and message
complexities.  The remark following their Theorem~2.2 sketches how a
carefully truncated exploration could yield $O(\log n)$ worst-case
round complexity, but the details are omitted.  Moreover, this
truncation does not provide a worst-case message-complexity bound.
In contrast, our random walk implementation gives per-update tail bounds
for both measures: in the graph families considered here, every update
uses $O(\log n)$ rounds and messages simultaneously w.h.p.\ over every
polynomial-length execution.

\section{Some Limitations}
\label{sec:limitations}

For a threshold $\Delta$, call a vertex \emph{low} if its outdegree is less
than $\Delta$.  For a uniform directed random walk $X_0=s,X_1,\ldots$, write
\[
  \tau_\Delta:=\min\{j\ge0:\outdeg(X_j)<\Delta\}
\]
for the first time the walk reaches a low vertex.

\subsection{A Polynomial Lower Bound for
\texorpdfstring{$\Delta=3$}{Delta=3} in Outerplanar Graphs}
\label{sec:outerplanar-three-lower}

We first consider $\Delta=3$.  Even in an outerplanar graph, a walk started at
the unique overloaded vertex may need polynomially many steps to find a
low vertex.
We now show the first lower bound of~\Cref{thm:intro-tight-threshold}.

\begin{theorem}[Outerplanar lower bound for $\Delta=3$]
\label{thm:outerplanar-three-lower}
Let $\gamma:=\log_2(7/6)$.  For every $n$ that is sufficiently large, there is an
orientation $\vec G$ of a connected $n$-vertex outerplanar graph and a
vertex $s$ such that $\outdeg(s)=4$, every other vertex has outdegree $1$
or $3$.
Then,
$\E_s[\tau_3]=\Omega(n^{\gamma})$, and
$\tau_3=\Omega(n^{\gamma/2})$ with probability
$1-O(n^{-\gamma/2})$.
Moreover, deleting one outgoing edge of $s$ leaves a connected outerplanar
graph of maximum outdegree $3$.
\end{theorem}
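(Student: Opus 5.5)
The plan is to build a recursive family of oriented outerplanar gadgets. In each gadget the walk keeps getting sent back to a designated root, and its chance of escaping to a low vertex before returning shrinks by a factor $6/7$ per level of recursion. With depth $d\approx\log_2 n$ this gives an escape probability per excursion of at most $(6/7)^{d}=n^{-\gamma}$, where $\gamma=\log_2(7/6)$. Low vertices (outdegree $1$) will appear only as the "leaves" of the recursion. All other vertices will have outdegree exactly $3$, except $s$, which is the root of the top gadget with one extra outgoing edge.

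\textbf{Construction.} I would define $G_d$ with two terminals, a root $r_d$ and a return vertex, glued along an outer edge so that the result stays outerplanar. $G_0$ is a single edge ending in a vertex of outdegree $1$. $G_d$ is a triangle on $r_d$ and two new vertices, with a copy of $G_{d-1}$ glued on each of the two non-root sides, so the size doubles and $|V(G_d)|=\Theta(2^d)$. The orientation is chosen so that every internal vertex has outdegree $3$ and two of its three out-edges lead back toward the root of its enclosing gadget (edges inside the triangle, or edges back to $r_d$). Only one out-edge leads deeper into a child copy. Outerplanarity is then immediate from the triangle-gluing structure, since every gluing is along an outer edge and the weak dual is a tree. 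Finally I add $s$ by giving the top root a fourth outgoing edge; deleting that edge restores maximum outdegree $3$ and keeps the graph connected.

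\textbf{Recursion.} Let $p_d$ be the probability that the walk from $r_d$ reaches a low vertex before returning to $r_d$. I would write the one-level first-step equations for the triangle, treating each child gadget as a black box that either escapes (probability $p_{d-1}$) or returns to its own root. Solving this finite linear system should give $p_d\le \tfrac67 p_{d-1}$, and I would tune the orientation of the triangle edges until this holds exactly. This calculation is the main obstacle. The gadget must simultaneously keep all outdegrees in $\{1,3\}$, keep the graph outerplanar, and make the return bias strong enough to produce the constant $6/7$. My first attempt would be the orientation in which the two triangle vertices point to each other and to $r_d$; I would then iterate on this guided by the linear system.

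\textbf{From escape probability to hitting time.} Each visit of the walk to $s$ (the top root) starts an independent excursion that escapes with probability $p\le C(6/7)^{\log_2 n}=O(n^{-\gamma})$, by the strong Markov property. Hence the number of excursions before escape dominates a geometric random variable with mean $\Omega(n^{\gamma})$. Since every excursion takes at least one step, $\E_s[\tau_3]=\Omega(n^\gamma)$. For the high-probability bound, a union bound over the first $n^{\gamma/2}$ excursions gives
\[
\Prob_s\bigl(\tau_3<n^{\gamma/2}\bigr)\le n^{\gamma/2}p=O(n^{-\gamma/2}).
\]
To get exactly $n$ vertices, I would pad with a pendant path whose vertices have outdegree $1$. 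This path must be attached so that it is unreachable from $s$, or reachable only through a low vertex, so that it does not change $p$.
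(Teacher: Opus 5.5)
There is a genuine gap. Everything quantitative sits in the step you leave open: you never fix an orientation, and the factor $\frac{6}{7}$ is read off from $\gamma$ rather than derived (``should give'', ``tune until this holds''). The design rules you do state cannot be realized.

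\begin{itemize}
\item ``The two triangle vertices point to each other and to $r_d$'' is not an orientation of a simple triangle. If both point to $r_d$, then $r_d$ has no out-edge into its own triangle.
\item The outdegree budget rules out the recursion as written. With one vertex of outdegree $4$, $H$ of outdegree $3$ and $L$ of outdegree $1$, outerplanarity ($|E|\le 2n-3$) forces $H\le L-5$. A binary recursion whose only low vertices are its $2^d$ leaves can therefore afford about one new outdegree-$3$ vertex per triangle, not two.
\item Your two-outcome black box (escape, or return to its own root) is wrong for a gadget glued along an edge. The walk can also leave through the child's other terminal.
\end{itemize}

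The paper's wedge $W_k(a,b)$ fixes all three. It adds a single top $x$, the path $a\to x\to b$, and subwedges $W_{k-1}(x,a)$ and $W_{k-1}(x,b)$. So $x$ sends two of its three out-edges deeper, the opposite of your ``one deeper, two back'' heuristic. The analysis tracks the third outcome explicitly: $u_k$ is the probability of exiting at the back terminal and $r_k$ of hitting a low vertex. Proving $u_k\le 1/3$ gives $r_k=2r_{k-1}/(3-2u_{k-1})\le\frac{6}{7}r_{k-1}$, and that bound on $u$ is the actual source of $\frac{6}{7}$.

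You are also missing the mechanism that brings the walk back to $s$. A wedge eventually ejects the walk through its terminals rather than returning it to a root. Those terminals must themselves have outdegree $3$, not $1$, so further structure must hang off them, and that structure offers the walk shallower, less protected gadgets.

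The paper handles this with four arms at $s$. Each arm is a fan $s,v_1,\dots,v_h$ with:
\begin{itemize}
\item spokes $v_i\to s$ and rim edges $v_i\to v_{i+1}$;
\item an entry wedge $W_h(s,v_1)$;
\item forward wedges $W_{h-i}(v_i,v_{i+1})$ of \emph{decreasing} depth.
\end{itemize}
From each $v_i$ the walk advances with probability at most $2/3$. Hence $(2/3)^{i-1}(6/7)^{h-i-1}=(6/7)^{h-2}(7/9)^{i-1}$ sums to $O((6/7)^h)$, giving $p_h<5(6/7)^{h-1}$ with $n_h=2^{h+3}-7$. The triangle $s\,x\,v_1$ keeps the graph connected after deleting $s\to x$. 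Your ``give the top root a fourth outgoing edge'' says neither where that edge goes nor why excursions end at $s$.

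Your final step matches the paper and is fine once a valid gadget with $p=O(n^{-\gamma})$ exists: a geometric number of excursions, $\E_s[\tau_3]\ge 1/p$, $\Prob_s(\tau_3<M)\le Mp$, and padding with unreachable outdegree-$1$ vertices.
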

\noindent {\bf Remark.} There always (with probability 1) \EMPH{exists} in the orientation $\vec G$ an $O(\log n)$-length path from $s$ (as well as from any other vertex) to a low-outdegree vertex; see \Cref{outdeg3offline}. However, \Cref{thm:outerplanar-three-lower} shows that the random walk will not provide such a path  with probability $1-O(n^{-\gamma/2})$.

The construction uses a single gadget.  Let $ab$ be an existing edge,
and distinguish $a$ as the \emph{back terminal} and $b$ as the
\emph{forward terminal}, denoting it by $(a,b)$.
The wedge $W_1(a,b)$ adds a new vertex $x$,
called its \emph{top}, and the directed path $a\to x\to b$.  For
$k\ge2$, the wedge $W_k(a,b)$ adds a fresh top $x$ and the same directed
path, and then attaches fresh copies of $W_{k-1}(x,a)$ and
$W_{k-1}(x,b)$ along the edges $(x,a)$ and $(x,b)$.  The wedge does not add
another copy of its base edge $(a,b)$.
See~\Cref{fig:wedge}.

\begin{figure}[htbp]
\centering
\begin{tikzpicture}[
scale=1.5,
>=Stealth,
font=\small,
vertex/.style={
circle,
draw,
fill=white,
inner sep=2pt
}
]

\begin{scope}[xshift=-3cm]
\node[vertex] (a) at (0,0) {${\;}a{\;}$};
\node[vertex] (b) at (1,0) {${\;}b{\;}$};
\node[vertex] (x) at (1/2, 1) {${\;}x{\;}$};

\draw (a) edge[-] (b);
\draw (a) edge[->] (x) (x) edge[->] (b);

\draw (1/2,-1/4) node[anchor=north] {$W_1(a,b)$};
\end{scope}

\begin{scope}[xshift=0cm]
\node[vertex] (a) at (0,0) {${\;}a{\;}$};
\node[vertex] (b) at (1,0) {${\;}b{\;}$};
\node[vertex] (x) at (1/2, 1) {${\;}x{\;}$};
\node[vertex] (x1a) at (-1+1/2, 1) {${\;\;\;\;}$};
\node[vertex] (x1b) at (+1+1/2, 1) {${\;\;\;\;}$};

\draw (a) edge[-] (b);
\draw (a) edge[->] (x) (x) edge[->] (b);
\draw (x) edge[->] (x1a) (x1a) edge[->] (a);
\draw (x) edge[->] (x1b) (x1b) edge[->] (b);
\draw (1/2,-1/4) node[anchor=north] {$W_2(a,b)$};
\end{scope}

\begin{scope}[xshift=4cm]
\node[vertex] (a) at (0,0) {${\;}a{\;}$};
\node[vertex] (b) at (1,0) {${\;}b{\;}$};
\node[vertex] (x) at (1/2, 1) {${\;}x{\;}$};
\node[vertex] (x1a) at (-1+1/2, 1) {${\;\;\;\;}$};
\node[vertex] (x1b) at (+1+1/2, 1) {${\;\;\;\;}$};
\node[vertex] (x2a) at (0, 2) {${\;\;\;\;}$};
\node[vertex] (x2b) at (1, 2) {${\;\;\;\;}$};
\node[vertex] (x3a) at (-1, 0) {${\;\;\;\;}$};
\node[vertex] (x3b) at (2, 0) {${\;\;\;\;}$};

\draw (a) edge[-] (b);
\draw (a) edge[->] (x) (x) edge[->] (b);
\draw (x) edge[->] (x1a) (x1a) edge[->] (a);
\draw (x) edge[->] (x1b) (x1b) edge[->] (b);
\draw (x1a) edge[->] (x3a) (x3a) edge[->] (a);
\draw (x1a) edge[->] (x2a) (x2a) edge[->] (x);
\draw (x1b) edge[->] (x3b) (x3b) edge[->] (b);
\draw (x1b) edge[->] (x2b) (x2b) edge[->] (x);
\draw (1/2,-1/4) node[anchor=north] {$W_3(a,b)$};
\end{scope}

\end{tikzpicture}
\caption{A depiction of $W_k(a,b)$ for $k\in\{1,2,3\}$.}
\label{fig:wedge}
\end{figure}

A direct induction shows that $W_k(a,b)$ adds $2^k-1$ vertices, adds one
outgoing edge at $a$ and none at $b$, and gives every new vertex
outdegree $1$ or $3$.  The same induction shows that, whenever $ab$ lies
on the outer face, the wedge can be drawn outside $ab$ while preserving
an outerplanar drawing.

\begin{lemma}[Wedge estimate]
\label{lem:three-wedge}
Start at the top of $W_k(a,b)$ and stop on reaching $a$, $b$, or a low
internal vertex.  Let $u_k$ be the probability of reaching $a$, and let
$r_k$ be the probability of reaching a low internal vertex.  The walk
stops almost surely and, with $\rho:=6/7$,
\[
  u_k\le\frac13,
  \qquad
  r_k\le\rho^{k-1}.
\]
\end{lemma}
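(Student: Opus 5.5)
We will prove both bounds simultaneously by induction on $k$. The recursion comes from looking at the first steps out of the top $x$ of $W_k(a,b)$. Besides $u_k$ and $r_k$, we also track $f_k$, the probability of stopping at the forward terminal $b$. Two structural facts drive everything, and both follow from the recursive construction, since every edge added by a wedge joins two of its new vertices or a new vertex to one of its terminals. First, every new vertex of $W_k(a,b)$ has all its out-neighbours among the new vertices of $W_k(a,b)$ together with $\{a,b\}$. So, until it stops, a walk started inside the wedge stays inside it, and the same holds for every sub-wedge. Second, the out-edges of the top $x$ are known exactly. For $k=1$, $x$ has the single out-edge $x\to b$, so $\outdeg(x)=1<3$. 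For $k\ge2$, $x$ has exactly three out-edges: $x\to b$, and one edge into the top of each of the sub-wedges $W_{k-1}(x,a)$ and $W_{k-1}(x,b)$.

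\emph{Base case $k=1$.} The top is itself a low internal vertex, so the walk stops at once. Hence $u_1=0\le\frac13$, $r_1=1=\rho^0$ and $f_1=0$.

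\emph{Inductive step, $k\ge2$.} From $x$, with probability $\frac13$ each, the walk steps to $b$ (and stops), or enters the top of $W_{k-1}(x,a)$, or enters the top of $W_{k-1}(x,b)$. By the confinement fact and the inductive hypothesis (including almost-sure stopping), a walk that enters $W_{k-1}(x,a)$ then does exactly one of the following:
\begin{itemize}
\item it returns to its back terminal $x$, with probability $u_{k-1}$;
\item it reaches its forward terminal $a$, with probability $f_{k-1}$;
\item it is absorbed at a low internal vertex, with probability $r_{k-1}$;
\end{itemize}
and $u_{k-1}+f_{k-1}+r_{k-1}=1$. Entering $W_{k-1}(x,b)$ gives the same three outcomes with the same probabilities, except that the forward exit now leads to $b$. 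Every return to $x$ restarts the process afresh. So the probability of returning to $x$ in one round is $\frac23u_{k-1}\le\frac29<1$, which already shows that the walk stops almost surely. Summing the geometric series over rounds gives
\[
u_k=\frac{\tfrac13 f_{k-1}}{1-\tfrac23u_{k-1}}
   =\frac13\cdot\frac{1-u_{k-1}-r_{k-1}}{1-\tfrac23u_{k-1}},
\qquad
r_k=\frac{\tfrac23 r_{k-1}}{1-\tfrac23u_{k-1}}.
\]

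\emph{Bounding the two quantities.} For $u_k$, drop $r_{k-1}\ge0$ and note that $u\mapsto(1-u)/(1-\frac23u)$ is decreasing on $[0,1]$ with value $1$ at $u=0$. Hence $u_k\le\frac13$. For $r_k$, use $u_{k-1}\le\frac13$ to get $1-\frac23u_{k-1}\ge\frac79$, and therefore
\[
r_k\le\frac23\cdot\frac97\,r_{k-1}=\frac67\,r_{k-1}\le\rho^{k-1}.
\]
This closes the induction.

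The step that needs real care is the structural bookkeeping rather than the arithmetic. We must check that the sub-wedges do not share vertices or edges in a way that lets the walk leak out other than through their terminals. We must also check that the back terminal of each sub-wedge is $x$ itself; this is exactly why a return to $x$ is a genuine renewal, so that the geometric-series decomposition is valid. Both checks follow by induction from the construction as stated.
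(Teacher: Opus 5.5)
Your proof is correct and follows essentially the same route as the paper: induction on $k$, a renewal decomposition over returns to the top giving $u_k=(1-u_{k-1}-r_{k-1})/(3-2u_{k-1})$ and $r_k=2r_{k-1}/(3-2u_{k-1})$, and then the bounds $u_k\le\frac13$ and $3-2u_{k-1}\ge\frac73$. Your explicit confinement and out-edge bookkeeping (and the auxiliary quantity $f_k$) simply spell out details that the paper's proof leaves implicit.
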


\begin{proof}
For $k=1$, the top is low, so $(u_1,r_1)=(0,1)$.  Let $k\ge2$ and write
$u:=u_{k-1}$ and $r:=r_{k-1}$.  By induction, a walk entering either
subwedge eventually reaches one of its terminals or a low internal
vertex.

Consider one visit to the top of $W_k(a,b)$.  The walk returns to the top
with probability $2u/3$, reaches $a$ with probability $(1-u-r)/3$, and
reaches a low vertex with probability $2r/3$; all remaining outcomes
reach $b$.  By induction, $u\le1/3$, so the return probability is at most
$2/9$.  Hence the probability of returning to the top $m$ consecutive
times is at most $(2/9)^m$, and a non-returning visit occurs almost
surely.  Conditioning on the first such visit gives
\[
  u_k=\frac{1-u-r}{3-2u},
  \qquad
  r_k=\frac{2r}{3-2u}.
\]
The first fraction is at most $1/3$, since
$3(1-u-r)\le3-2u$.  Also $3-2u\ge7/3$, and therefore
\[
  r_k\le\frac67r_{k-1}\le\rho^{k-1}.
\]
This completes the induction.
\end{proof}

\begin{proof}[Proof of \Cref{thm:outerplanar-three-lower}]
Fix $h\ge2$.  Build one arm from a fan on $s,v_1,\ldots,v_h$, with spokes
$sv_i$ and rim edges $v_iv_{i+1}$.  Orient every spoke as $v_i\to s$ and
every rim edge as $v_i\to v_{i+1}$.
Attach a wedge $W_h(s,v_1)$, which we call an ``entry wedge'',
and, for each $i<h$, a wedge $W_{h-i}(v_i,v_{i+1})$, which we call a ``forward wedge''.
Take four
copies of the arm and identify their copies of $s$; call the resulting
orientation $\vec G_h$, with underlying graph $G_h$.

\begin{figure}[htbp]
\centering
\begin{tikzpicture}[
scale=0.9,
>=Stealth,
font=\small,
vertex/.style={
circle,
draw,
fill=white,
inner sep=2pt
}
]

\node[vertex] (s) at (0,1) {${\;}s{\;}$};
\node[vertex] (v1) at (2,3) {$v_1$};
\node[vertex] (v2) at (5,4) {$v_2$};
\node[vertex] (v3) at (8,4) {$v_3$};
\node[] (d) at (10,3.5) {$\ddots$};
\node[vertex] (vh) at (12,2) {$v_h$};

\node[] (w0) at (1-0.75,2+0.5) {$W_{h}(s,v_1)$};
\node[] (w1) at (7/2-0.75,7/2+0.5) {$W_{h-1}(v_1,v_2)$};
\node[] (w2) at (13/2+0.25,4+0.75) {$W_{h-2}(v_2,v_3)$};
\node[] (w3) at (9+0.5,3.75+1) {$W_{h-3}(v_3,v_4)$};
\node[] (wh) at (11+1,2.75+0.75) {$W_{1}(v_{h-1},v_h)$};

\draw (v1) edge[->] (s);
\draw (v2) edge[->] (s);
\draw (v3) edge[->] (s);
\draw (d) edge[->] (s);
\draw (vh) edge[->] (s);
\draw (v1) edge[->] (v2);
\draw (v2) edge[->] (v3);
\draw (v3) edge[->] (d);
\draw (d) edge[->] (vh);

\draw (s) edge[->] (w0) (w0) edge[->] (v1);
\draw (v1) edge[->] (w1) (w1) edge[->] (v2);
\draw (v2) edge[->] (w2) (w2) edge[->] (v3);
\draw (v3) edge[->] (w3) (w3) edge[->] (d);
\draw (d) edge[->] (wh) (wh) edge[->] (vh);

\end{tikzpicture}
\caption{Illustration of one arm of the bad instance for the proof of~\Cref{thm:outerplanar-three-lower}.}
\label{fig:fanwedge}
\end{figure}

Place $s,v_1,\ldots,v_h$ on the outer face in this order.  Then $sv_1$
and all rim edges are boundary edges.  Draw the wedges outside them and
the four arms in disjoint sectors around $s$.  Thus $G_h$ is outerplanar.
The vertex $s$ has one outgoing edge into each entry wedge.  For $i<h$,
the three outgoing edges of $v_i$ go to $s$, to $v_{i+1}$, and into its
forward wedge, while $v_h$ has only the edge to $s$.  Hence
$\outdeg(s)=4$, and every other outdegree is $1$ or $3$.

Let $x$ be the top of the entry wedge of one arm.  The edge $s\to x$ lies in the
triangle $sxv_1$.  Deleting it leaves $x$ connected to $s$ through $v_1$
and lowers only the outdegree of $s$, from $4$ to $3$.  This is the edge
whose insertion creates the overloaded vertex.

An \emph{excursion} starts when the random walk leaves $s$ and ends when it first
returns to $s$ or reaches a low vertex.  Let $p_h$ be its success
probability.  The four arms are identical, so consider one arm.

For each $i<h$, define $k:=h-i$. During one visit to $v_i$, the walk returns
directly to $s$ with probability $1/3$, returns to $v_i$ through
$W_k(v_i,v_{i+1})$ with probability $u_k/3$, reaches $v_{i+1}$ with
probability $(2-u_k-r_k)/3$, and finds a low vertex in the wedge with
probability $r_k/3$.  Conditioning on the first visit that does not
return to $v_i$, the probabilities of advancing to $v_{i+1}$ and of
finding a low vertex in the wedge are at most
\[
  \frac{2-u_k-r_k}{3-u_k}\le\frac23,
  \qquad
  \frac{r_k}{3-u_k}\le\frac12r_k.
\]
The first bound uses $u_k,r_k\ge0$, while the second uses only
$u_k\le1$.

Thus the probability of reaching $v_i$ from $v_1$ is at most
$(2/3)^{i-1}$.
This is why we constructed the wedge depths to decrease along the arm.  A
successful excursion either finds $v_h$ or finds a low vertex either in the
entry wedge, or in a forward wedge.  By
\Cref{lem:three-wedge}, using $\rho=6/7$,
\begin{align*}
  p_h
  &\le \left(\frac23\right)^{h-1}+r_h
      +\frac12\sum_{i=1}^{h-1}
        \left(\frac23\right)^{i-1}r_{h-i} \\
  &\le 2\rho^{h-1}
      +\frac12\rho^{h-2}\sum_{i=1}^{h-1}\left(\frac79\right)^{i-1} \\
  &\le 2\rho^{h-1}+\frac94\rho^{h-2}
   <5\rho^{h-1}.
\end{align*}
Here $2/3<\rho$ and $(2/3)/\rho=7/9$.

Each arm has $h$ fan vertices and one wedge of every depth $1,\ldots,h$,
so
\[
  n_h:=|V(G_h)|
  =1+4\left(h+\sum_{j=1}^h(2^j-1)\right)
  =2^{h+3}-7.
\]
Since $\rho=2^{-\gamma}$, we have $p_h=O(n_h^{-\gamma})$.

After each failed excursion the walk is back at $s$.  Hence the number
$N$ of excursions up to and including the first successful one is
geometric with parameter $p_h$.  Since $\tau_3\ge N$, for every $M\ge1$,
\[
  \E_s[\tau_3]\ge\frac1{p_h}=\Omega(n_h^\gamma),
  \qquad
  \Prob_s(\tau_3<M)\le\Prob(N<M)
  =\sum_{i=1}^{M-1}(1-p_h)^{i-1}p_h
  \le Mp_h.
\]
For $M=\lfloor n_h^{\gamma/2}\rfloor$, we have
$M=\Theta(n_h^{\gamma/2})$ and $Mp_h=O(n_h^{-\gamma/2})$, giving the
claimed tail bound.  For arbitrary $n$, choose $h$ maximal with
$n_h\le n$.  Then $n<n_{h+1}=2n_h+7$, so $n_h=\Theta(n)$.  Add $n-n_h$
leaves at $s$ and orient their edges toward $s$.  The graph remains
outerplanar, and the walk cannot enter the new leaves, so the same bounds
hold for $n$.
We also note that the path from $s$ to a vertex of outdegree $1$ in $W_h(s, v_1)$ has length $h=O(\log n)$.
\end{proof}

\subsection{A Tree-Based Lower-Bound Construction}
\label{sec:sp-walk-lower}

We now give a simpler tree-based construction.  For every fixed $\Delta\ge3$,
it yields a polynomial lower bound in a series-parallel graph,
that is, graphs that have treewidth at most two.
Importantly, series-parallel graphs are planar and further, are $K_4$-minor-free.
When $\Delta=2$, the trees in our construction degenerate into paths,
and the same excursion calculation yields an exponential lower bound in an outerplanar graph.
We state this consequence separately after the proof.

The construction has one distinguished ``reset'' vertex $s$
and several rooted trees.
As before, an \emph{excursion} starts when the walk leaves
$s$ and ends when it next returns to $s$ or first reaches a low vertex.
After the first move from a root to one of its children, the walk
repeatedly faces the same choice: move one level deeper into the tree or
return directly to $s$.
Thus, reaching a low vertex requires one
successful excursion in which no reset edge is chosen.

\begin{theorem}[Long hitting time in series-parallel graphs;
Restatement of~\Cref{thm:intro-series-parallel}]
\label{thm:sp-walk-lower}
For all integers $\Delta\ge3$ and $n\ge2\Delta^3$, there is a simple connected
$n$-vertex graph $G$ of treewidth at most two, an orientation $\vec{G}$ of
$G$, and a designated vertex $s$, such that every outdegree in $\vec{G}$
is either $\Delta$ or $1$. 
Define $\beta_\Delta:=\log_{\Delta-1}\left({\Delta}/\left({\Delta-1}\right)\right)$.
Then,
$\E_s[\tau_\Delta]=\Omega(n^{\beta_\Delta})$, and
$\tau_\Delta=\Omega(n^{\beta_\Delta/2})$ with probability
$1-O(n^{-\beta_\Delta/2})$.
\end{theorem}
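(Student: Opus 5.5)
We construct a reset vertex $s$ together with $\Delta$ rooted trees, and we arrange that the walk can only reach a low vertex by descending an entire tree without once taking a ``reset'' edge back to $s$. The construction is as follows. Let $T_h$ be the complete $(\Delta-1)$-ary tree of depth $h$, and orient every tree edge from parent to child. For every internal node $u$, add the edge $u\to s$. Then each internal node has $\Delta-1$ children plus one reset edge, so its outdegree is $\Delta$. Each leaf $\ell$ gets the single edge $\ell\to s$, so it has outdegree $1$ and is the only kind of low vertex. Take $\Delta$ disjoint copies of $T_h$ and add an edge $s\to r$ to each root $r$, so that $\outdeg(s)=\Delta$. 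Pad with pendant vertices pointing into $s$ to reach exactly $n$ vertices; these have outdegree $1$, but the walk cannot enter them.

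The structural checks come first. The graph is connected and simple. For treewidth at most two, delete $s$: what remains is a forest, so $G$ is a forest plus one apex vertex. I would verify series-parallelity directly. Each edge $us$ together with a tree edge forms a fan-like structure, and concretely, for each tree the graph $T_h+s$ is obtained by adding each subtree through series and parallel compositions at the terminals (parent, $s$). Equivalently, contracting any $K_4$ minor model would have to use three branch sets in the forest that are pairwise adjacent, which is impossible in a tree plus one apex. So $G$ is $K_4$-minor-free and has treewidth at most two. The tree edges plus the reset edges are the whole edge set, so every outdegree is $\Delta$ (at $s$ and at internal nodes) or $1$ (at leaves and padding vertices).

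The hitting estimate uses the same excursion argument as in \Cref{thm:outerplanar-three-lower}. Each excursion from $s$ enters some root. At every internal node the walk goes one level deeper with probability $(\Delta-1)/\Delta$ and otherwise returns to $s$. A leaf sits at depth $h$, so reaching it requires $h$ consecutive non-reset choices, and the success probability of one excursion is $p=((\Delta-1)/\Delta)^h$. Failed excursions end back at $s$, so the number $N$ of excursions is geometric with parameter $p$, and $\tau_\Delta\ge N$. This gives $\E_s[\tau_\Delta]\ge 1/p$ and $\Prob(\tau_\Delta<M)\le Mp$.

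Finally we relate $h$ to $n$. We have $n_h=1+\Delta\sum_{j=0}^{h}(\Delta-1)^j=\Theta_\Delta((\Delta-1)^h)$, so $(\Delta-1)^h=\Theta(n_h)$. This yields $1/p=(\Delta/(\Delta-1))^h=\Theta(n_h^{\beta_\Delta})$, with $\beta_\Delta$ as defined in the statement. Setting $M=\lfloor n^{\beta_\Delta/2}\rfloor$ gives the tail bound. For general $n\ge2\Delta^3$, choose the maximal $h\ge1$ with $n_h\le n$; since $n_{h+1}=O(\Delta n_h)$, we get $n_h=\Theta_\Delta(n)$, and we pad the remaining vertices with leaves at $s$. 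The main obstacle is the rigorous treewidth-two claim; I would argue it through $K_4$-minor-freeness via the apex-forest structure, carefully checking that the branch sets of a $K_4$ model cannot avoid needing a cycle in $G-s$.
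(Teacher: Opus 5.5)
\textbf{The construction is not a simple graph.} Each root $r$ is an internal node of $T_h$. Your rule ``for every internal node $u$, add $u\to s$'' therefore gives $r$ a reset edge $r\to s$, and you then also add $s\to r$. So $s$ and every root are joined by two antiparallel edges. This contradicts the statement, which asks for a simple $G$, and your own claim that the graph is simple. Your excursion count relies on this double edge, since it lets the walk reset already at the root (``$h$ consecutive non-reset choices''). You also cannot just merge the two edges. In a simple graph the single edge $rs$ has one orientation, and if it is $s\to r$ then $r$ has outdegree $\Delta-1<\Delta$. The root is then low, and the walk stops after one step.

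\textbf{The fix.} Use the paper's construction: give each root $\Delta$ children and no reset edge, and keep $\Delta-1$ children plus a reset edge at every other internal node. Then the following hold.
\begin{itemize}
\item $s$ and each root have outdegree $\Delta$, and the graph is simple.
\item An excursion succeeds with probability $p_h=((\Delta-1)/\Delta)^{h-1}$, because only the $h-1$ non-root internal levels offer a reset.
\item $n_h=1+\Delta+\Delta^2((\Delta-1)^h-1)/(\Delta-2)$, which is still $\Theta_\Delta((\Delta-1)^h)$, with $n_{h+1}\le\Delta n_h$.
\end{itemize}
The rest of your argument then goes through exactly as in the paper: the geometric number $N$ of excursions, $\tau_\Delta\ge N$, $\Prob(\tau_\Delta<M)\le Mp_h$, choosing $h$ maximal, and padding with leaves pointing into $s$.

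\textbf{Treewidth is not the obstacle you expect.} After the fix, $G-s$ is still a forest. Adding $s$ to every bag of a width-one tree decomposition of that forest gives width two. This is exactly the paper's decomposition with bags $\{s,p(v),v\}$. So the $K_4$-minor detour is unnecessary, although your apex-forest argument for $K_4$-minor-freeness is also correct.
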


\noindent\textbf{Remark.}
We have
$\beta_\Delta=\frac{\ln(1+1/(\Delta-1))}{\ln(\Delta-1)}
\sim\frac{1}{(\Delta-1)\ln(\Delta-1)}$ as $\Delta\to\infty$.
For every constant $\Delta$, the theorem therefore gives a polynomial lower bound
with inverse-polynomial failure probability.

\begin{figure}[htbp]
\centering
\resizebox{\textwidth}{!}{%
\begin{tikzpicture}[
    >=Stealth,
    node/.style={circle, draw, fill=white, minimum size=4.2mm, inner sep=0pt},
    treeedge/.style={-{Stealth[length=2mm]}, black, thick},
    backedge/.style={-{Stealth[length=2mm]}, red, thick},
    sedge/.style={-{Stealth[length=2mm]}, blue, thick},
]

\newcommand{\drawtree}[4]{

    \node[node] (#2) at (#1, 0) {\Large ${\,}#3{\,}$};

    \node[node] (#2a) at ($(#1,0)+(-1.3,-1.4)$) {};
    \node[node] (#2b) at ($(#1,0)+(0.0,-1.4)$) {};
    \node[node] (#2c) at ($(#1,0)+(1.3,-1.4)$) {};
    \draw[treeedge] (#2) -- (#2a);
    \draw[treeedge] (#2) -- (#2b);
    \draw[treeedge] (#2) -- (#2c);
    \node at ($(#1,0)+(0.75,-1.4)$) {$\cdots$};

    \node[node] (#2a2) at ($(#2a)+(0,-1.2)$) {};
    \node[node] (#2b2) at ($(#2b)+(0,-1.2)$) {};
    \node[node] (#2c2) at ($(#2c)+(0,-1.2)$) {};
    \draw[treeedge] (#2a) -- (#2a2);
    \draw[treeedge] (#2b) -- (#2b2);
    \draw[treeedge] (#2c) -- (#2c2);
    \node at ($(#2b2)+(0.75,0)$) {$\cdots$};

    \node at ($(#2a2)+(0,-0.7)$) {$\vdots$};
    \node at ($(#2b2)+(0,-0.7)$) {$\vdots$};
    \node at ($(#2c2)+(0,-0.7)$) {$\vdots$};

    \node[node] (#2a3) at ($(#2a2)+(0,-2.2)$) {};
    \node[node] (#2b3) at ($(#2b2)+(0,-2.2)$) {};
    \node[node] (#2c3) at ($(#2c2)+(0,-2.2)$) {};
    \node at ($(#2b3)+(0.75,0)$) {$\cdots$};

    \node[node] (#2a4l) at ($(#2a3)+(-0.3,-1.1)$) {};
    \node[node] (#2a4r) at ($(#2a3)+(0.3,-1.1)$) {};
    \node[node] (#2b4l) at ($(#2b3)+(-0.3,-1.1)$) {};
    \node[node] (#2b4r) at ($(#2b3)+(0.3,-1.1)$) {};
    \node[node] (#2c4l) at ($(#2c3)+(-0.3,-1.1)$) {};
    \node[node] (#2c4r) at ($(#2c3)+(0.3,-1.1)$) {};

    \draw[treeedge] (#2a3) -- (#2a4l);
    \draw[treeedge] (#2a3) -- (#2a4r);
    \draw[treeedge] (#2b3) -- (#2b4l);
    \draw[treeedge] (#2b3) -- (#2b4r);
    \draw[treeedge] (#2c3) -- (#2c4l);
    \draw[treeedge] (#2c3) -- (#2c4r);


    \draw[backedge] (#2a4l) to[out=-90,in=#4] (s);
    \draw[backedge] (#2a4r) to[out=-90,in=#4] (s);
    \draw[backedge] (#2b4l) to[out=-90,in=#4] (s);
    \draw[backedge] (#2b4r) to[out=-90,in=#4] (s);
    \draw[backedge] (#2c4l) to[out=-90,in=#4] (s);
    \draw[backedge] (#2c4r) to[out=-90,in=#4] (s);

}

\node[node] (s) at (6.5, -10) {\Large${\;\;}s{\;\;}$};

\drawtree{0}{r1}{r_1}{180}
\drawtree{6.5}{ri}{r_i}{90}
\drawtree{13}{rD}{r_\Delta}{0}

\draw[sedge,<-] (r1) to[out=0,in=170] (s);
\draw[sedge,<-] (ri) to[out=200,in=130] (s);
\draw[sedge,<-] (rD) to[out=180,in=20] (s);

\node at (0, 1.0) {\Large$T_1$};
\node at (6.5, 1.0) {\Large$T_i$};
\node at (13, 1.0) {\Large$T_\Delta$};

\node at (9.75, -3.0) {$\cdots$};

\draw[decorate, decoration={brace, amplitude=6pt}] (-2.0,-5.9) -- (-2.0,0) node[midway, xshift=-14pt] {\Large $h$};

\begin{scope}[shift={(14.7,0)}]
    \draw[dashed] (0,0) -- (1.1,0) node[right, align=left] {Root: $\Delta$ children};
    \draw[dashed] (0,-2.9) -- (1.1,-2.9) node[right, align=left] {Other nonleaf vertices:\\ $\Delta-1$ children};

    \draw[sedge] (0,-6.5) -- (1.1,-6.5) node[right, align=left] {Edge from $s$ to\\ each root};
    \draw[treeedge] (0,-7.5) -- (1.1,-7.5) node[right, align=left] {Tree edges\\ (directed down)};
    \draw[backedge] (0,-8.6) -- (1.1,-8.6) node[right, align=left] {Edge from each\\ nonroot vertex\\ back to $s$};
\end{scope}

\end{tikzpicture}%
}
\caption{The orientation $\vec{G}_h$: $\Delta$ disjoint rooted trees of height $h$, with tree edges directed away from the roots, an edge from $s$ to each root, and an edge from every nonroot tree vertex to $s$. Only representative children and edges to $s$ are shown.}
\label{fig:gadget}
\end{figure}

\begin{proof}
For an integer $h\ge2$, let $T_1,\ldots,T_\Delta$ be disjoint rooted trees of height $h$.
In each tree, the root has $\Delta$ children, every other vertex has
$\Delta-1$ children, and all leaves are at distance exactly $h$ from the root.
Let $G_h$ be obtained by adding a new vertex $s$ adjacent to every vertex
of the trees.  Orient (1) the edge from $s$ to the root of each tree,
(2) every tree edge away from the root, and (3) every edge between $s$ and
a nonroot tree vertex $v$ towards $s$.  Denote the resulting orientation
by $\vec{G}_h$.
See~\Cref{fig:gadget}.

Every outdegree is either $\Delta$ or $1$.  Indeed, $s$ has one outgoing edge
to each of the $\Delta$ roots; each root has $\Delta$ outgoing edges to its
children; and every other nonleaf tree vertex has $\Delta-1$ outgoing edges to
its children and one outgoing edge to $s$.  All these vertices have
outdegree $\Delta$, while every leaf has only the outgoing edge to $s$ and
therefore has outdegree $1$.

We next verify that $G_h$ has treewidth at most two.  Let
$B_s:=\{s\}$.  For the root $r_i$ of each tree $T_i$, let
$B_{r_i}:=\{s,r_i\}$ and connect $B_{r_i}$ to $B_s$.  For every nonroot
tree vertex $v$, let $p(v)$ be its parent, set
$B_v:=\{s,p(v),v\}$, and connect $B_v$ to $B_{p(v)}$.
The graph that is constructed from the bags is a tree.
Every edge of $G_h$ is contained in a bag: $(s,r_i)$ is
contained in $B_{r_i}$, while $(p(v),v)$ and $(s,v)$ are contained in
$B_v$.  The set of bags containing $s$ form the entire decomposition tree, and
for every tree vertex $v$, the set of bags containing $v$ are $B_v$ and the set of
bags of its children, which form a connected star.
Thus, this is a tree decomposition of width at most two.

Consider an excursion of the random walk from $s$.  Its first step enters
a root, and from the root it must move to a child.  At every subsequent
nonleaf vertex, one outgoing edge returns to $s$, while the other $\Delta-1$
outgoing edges lead to children.  Hence an excursion reaches a leaf
precisely when the walk chooses a child at each of the $h-1$ subsequent
nonleaf vertices.  Its success probability is
\[
  p_h=\left(\frac{\Delta-1}{\Delta}\right)^{h-1}.
\]

Let $N$ be the number of excursions up to and including the first
successful one.  After every failed excursion the walk returns to $s$,
where the same experiment starts again with fresh random choices.  Thus
$N$ is geometric with success probability $p_h$, and $\E[N]=1/p_h$.
Since every excursion traverses at least one edge, $\tau_\Delta\ge N$, and
therefore $\E[\tau_\Delta]\ge1/p_h$.
For every positive integer $M$, the event $N<M$ means that the first
successful excursion is one of the first $M-1$ excursions.  Hence
$\Prob(\tau_\Delta<M) \le\Prob(N<M) \le Mp_h$.
In particular, for $M_h:=\lfloor p_h^{-1/2}\rfloor$, we get
$\Prob(\tau_\Delta<M_h)\le\sqrt{p_h}$.

It remains to relate $h$ to the number of vertices.  Each tree contains
$1+\Delta\sum_{j=0}^{h-1}(\Delta-1)^j$ vertices, and hence
\begin{equation} \label{eq:nh}
  n_h:=|V(G_h)|
  =
  1+\Delta+\frac{\Delta^2((\Delta-1)^h-1)}{\Delta-2}.
\end{equation}
Since $n_2=\Delta^3+\Delta+1\le2\Delta^3\le n$, choose $h\ge2$ maximal such that
$n_h\le n$.  Increasing the height from $h$ to $h+1$ adds
$\Delta(\Delta-1)^h$ vertices to each of the $\Delta$ trees, and hence
\[
  n_{h+1}-n_h=\Delta^2(\Delta-1)^h.
\]
By maximality of $h$ and the bound
$n_h\ge \Delta^2(\Delta-1)^{h-1}$,
$$
  n
  <
  n_{h+1}
  =
  n_h+\Delta^2(\Delta-1)^h
  \le
  \Delta n_h.
$$
Thus $n_h>n/\Delta$.
Since $\Delta\ge3$, we have $1/(\Delta-2)\le1$, and hence
\Cref{eq:nh} gives
$$
  n_h
  \le
  1+\Delta+\Delta^2((\Delta-1)^h-1)
  \le
  \Delta^2(\Delta-1)^h.
$$
Consequently, since $\Delta^3\le(\Delta-1)^5$ for every $\Delta\ge3$,
\[
  (\Delta-1)^h
  \ge
  \frac{n_h}{\Delta^2}
  >
  \frac{n}{\Delta^3}
  \ge
  \frac{n}{(\Delta-1)^5}.
\]
Recalling the definition of $\beta_\Delta = \log_{\Delta-1}(\Delta/(\Delta-1))$
gives
$$
  \frac1{p_h}
  ~=~
  \left(\frac{\Delta}{\Delta-1}\right)^{h-1} 
  ~=~
  \left(\frac{\Delta-1}{\Delta}\right)^6\bigl((\Delta-1)^{h+5}\bigr)^{\beta_\Delta} 
  ~\ge~
  \left(\frac{\Delta-1}{\Delta}\right)^6n^{\beta_\Delta}
  ~\ge~
  \left(\frac23\right)^6n^{\beta_\Delta}.
$$
Thus $\E[\tau_\Delta]\ge1/p_h=\Omega(n^{\beta_\Delta})$.  As
$M_h=\lfloor p_h^{-1/2}\rfloor\ge p_h^{-1/2}/2$, the same estimate gives
$M_h=\Omega(n^{\beta_\Delta/2})$
and $\sqrt{p_h}=O(n^{-\beta_\Delta/2})$.  Together with
$\Prob(\tau_\Delta<M_h)\le\sqrt{p_h}$, this proves the required bounds.

Finally, if $n_h<n$, add $n-n_h$ new vertices adjacent to $s$,
orienting every new edges towards $s$.  Each new vertex has outdegree $1$,
while all original outdegrees remain unchanged.  Clearly, adding these
vertices does not increase the treewidth beyond two.  Moreover, the walk from $s$ cannot enter a new vertex, so its hitting time is unchanged.
Thus the resulting graph has exactly $n$ vertices and satisfies the same
bounds.
\end{proof}

Based on this construction,
we show the second lower bound of~\Cref{thm:intro-tight-threshold}.

\begin{theorem}[Exponential outerplanar lower bound for $\Delta=2$]
\label{thm:outerplanar-two-walk-lower}
For every sufficiently large $n$, there is a simple connected
$n$-vertex outerplanar graph $G$, an orientation $\vec G$ of $G$, and a
designated vertex $s$ such that $\outdeg(s)=2$ and every outdegree in
$\vec G$ is either $1$ or $2$. 
Then, $\E_s[\tau_2]=2^{\Omega(n)}$, and $\tau_2=2^{\Omega(n)}$ with probability $1-2^{-\Omega(n)}$.
\end{theorem}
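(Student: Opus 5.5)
We specialize the tree-based construction of \Cref{thm:sp-walk-lower} to $\Delta=2$. There every non-root tree vertex would have one child, so each tree degenerates into a path. Concretely, take $s$ and two \emph{arms}. Arm $i\in\{1,2\}$ has a root $r_i$ with two children $a_i,b_i$, and each child starts a directed path of length $h-1$ hanging below it. Orient $s\to r_i$, orient the tree edges downward, and orient every non-root vertex toward $s$. As in \Cref{thm:sp-walk-lower}, $s$ and the roots have outdegree $2$, every internal path vertex has outdegree $2$ (one child, one edge back to $s$), and every path endpoint has outdegree $1$. So the outdegrees are exactly $1$ or $2$, and $\outdeg(s)=2$.

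The first step is to check outerplanarity, which is the only new ingredient. Each hanging path $v_1\to\cdots\to v_{h-1}$, together with all edges $v_j s$, is a fan centered at $s$. A fan is outerplanar, with the path followed by $s$ forming the outer cycle. The root $r_i$ is adjacent to $s$ and to the first vertices of its two paths. I would place everything on a circle in the order
\[
s,\; \text{(path under $a_1$ reversed)},\; r_1,\; \text{(path under $b_1$)},\; \text{then the same for arm 2}.
\]
Every edge is then either a chord to $s$, an edge between circle-consecutive vertices, or the chord $sr_i$. These chords are pairwise non-crossing because they all share the endpoint $s$. Hence the graph is outerplanar, and it is simple and connected. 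Equivalently, the graph contains no $K_4$ or $K_{2,3}$ minor.

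The second step is the excursion analysis, which is verbatim that of \Cref{thm:sp-walk-lower} with $\Delta=2$. An excursion from $s$ goes to a root and then to a child. At each of the subsequent $h-2$ internal path vertices, it continues down the path with probability $1/2$ and otherwise returns to $s$. Hence its success probability is
\[
p_h=2^{-(h-2)}.
\]
The number $N$ of excursions is geometric with parameter $p_h$, and $\tau_2\ge N$. This gives $\E_s[\tau_2]\ge 1/p_h$. For the tail, the same union bound gives $\Prob(\tau_2<M)\le Mp_h$. Taking $M=2^{(h-2)/2}$ shows that $\tau_2=2^{\Omega(h)}$ with probability $1-2^{-\Omega(h)}$.

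Finally, the vertex count is $n_h=1+2(1+2(h-1))=4h-1$, so $h=\Theta(n)$ and all the bounds become $2^{\Omega(n)}$. For general $n$, choose the largest $h$ with $n_h\le n$ and attach the remaining at most $3$ vertices as leaves of $s$ oriented toward $s$. This preserves outerplanarity and the outdegree pattern, and the walk never enters these leaves. I expect no real obstacle; the only care needed is specifying the circular order so that the outerplanar embedding is explicit.
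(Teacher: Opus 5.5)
Your proof is correct and is essentially the paper's proof. It uses the same $\Delta=2$ specialization of the tree construction, where each arm together with $s$ is a fan, and the same geometric excursion argument with $\tau_2\ge N$ and $\Pr(N<M)\le Mp_h$. Your explicit circular order is a concrete version of the paper's ``two fans drawn in disjoint regions meeting only at $s$''. One harmless point of bookkeeping: your counts $p_h=2^{-(h-2)}$ and $n_h=4h-1$ require each hanging path to have $h-1$ \emph{vertices}, i.e.\ trees of height $h-1$ (the paper uses height $h$, with $p_h=2^{-(h-1)}$ and $n_h=4h+3$), so ``a path of length $h-1$'' should be read that way; this does not affect the $2^{\Omega(n)}$ bounds.
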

\begin{proof}
Use the construction from the proof of
\Cref{thm:sp-walk-lower}, now with $\Delta=2$.  Each $T_i$, viewed as an
undirected graph, is a path with $2h$ edges and with its root in the
middle.  Since $\Delta=2$, there are two such trees: $T_1$ and $T_2$.
Together with $s$, each
forms an outerplanar fan, and the two fans can be drawn in disjoint
regions meeting only at $s$.  Hence $G_h$ is outerplanar.  The vertex $s$
and every nonleaf tree vertex have outdegree $2$, while every leaf has
outdegree $1$.

After entering a root, the walk chooses one of the two sides of its path.
At each of the next $h-1$ nonleaf vertices, it continues toward a leaf
with probability $1/2$ and returns to $s$ with probability $1/2$.
Therefore an excursion succeeds with probability
$p_h=2^{-(h-1)}$.
Applying the geometric excursion calculation from the preceding proof,
for every $M\ge1$,
$$\E_s[\tau_2]\ge 1/{p_h}=2^{h-1}\quad\text{and}\quad
\Prob_s(\tau_2<M)\le Mp_h~.$$
With $M_h:=\lfloor p_h^{-1/2}\rfloor$, this gives
$M_h=2^{\Omega(h)}$,
thus,
$\Prob_s(\tau_2<M_h)\le2^{-(h-1)/2}$.

The graph has $n_h=1+2(2h+1)=4h+3$ vertices.
Since $h=(n_h-3)/4$, these are the claimed exponential bounds
in $n_h$.  For an arbitrary sufficiently large $n$, choose $h$ maximal
with $n_h\le n$ and add $n-n_h<4$ leaves adjacent to $s$, orienting their edges toward $s$.
This preserves outerplanarity and all original outdegrees.  The walk
cannot enter the new leaves, so its hitting time is unchanged.
\end{proof}

\paragraph{Why the construction is not outerplanar for $\Delta\ge3$.}
For every $\Delta\ge3$, the graph $G_h$ from
\Cref{thm:sp-walk-lower} contains a $K_{2,3}$ subgraph.  Indeed, take
any root $a_i$ and any three of its children.  Both $s$ and $a_i$ are
adjacent to all three children.  Since $K_{2,3}$ is not outerplanar,
neither is $G_h$.  This is why the outerplanar lower bound for $\Delta=3$ in
\Cref{thm:outerplanar-three-lower} requires a different construction.

\paragraph{Connection to the dynamic setting.}
The constructions in
\Cref{thm:sp-walk-lower,thm:outerplanar-two-walk-lower} can both be
realized immediately after a single insertion.
Recall this is also the case for~\Cref{thm:outerplanar-three-lower}.
For $\Delta\ge2$, use
$\Delta+1$ trees and initially omit the edge from $s$ to the root of one
tree.  If $v$ is any child of the omitted root, then the two-edge path
from the root through $v$ to $s$ remains, so the graph is still
connected.  Before the insertion, every outdegree is at most $\Delta$.
Insert the missing edge and orient it from $s$ to that root.  Then only
$s$ has outdegree $\Delta+1$, while every other vertex has outdegree $\Delta$ or
$1$.
When $\Delta=2$, the three resulting fans can be drawn in disjoint regions
meeting only at $s$, so the graph remains outerplanar.  When $\Delta\ge3$,
the same tree decomposition of width two applies before and after the
insertion.  Since all the trees are identical, each excursion still
succeeds with probability $p_h$.  Finally, using $\Delta+1$ rather than $\Delta$
trees increases the number of vertices by a factor of at most $\frac{\Delta+1}{\Delta}\le\frac32$.
Hence the same asymptotic lower bounds hold immediately after a single
insertion.  Together with the final assertion of
\Cref{thm:outerplanar-three-lower}, this completes the proof of the
lower bound assertions in \Cref{thm:intro-tight-threshold}.  Moreover,
\Cref{thm:sp-walk-lower} and the post-insertion construction prove
\Cref{thm:intro-series-parallel}.

\subsection{Outdegree 2 Requires \texorpdfstring{$\Omega(n)$}{Ω(n)} Amortized Edge Flips}
\label{sec:outdegree-two-lower-bound}

\Cref{thm:outerplanar-two-walk-lower} shows that, for $\Delta=2$, the
uniform directed random walk can have exponential hitting time even in
an outerplanar graph.  This rules out only that particular repair rule.
We now prove a stronger, algorithm-independent limitation: there is a
fixed update sequence on which every sequence of $2$-orientations, even
an optimal offline sequence chosen with full knowledge of all future
updates, incurs $\Omega(n)$ amortized edge flips.

\ThmRecourse*

\begin{figure}
    \centering

\begin{tikzpicture}[
    x=8mm,
    y=8mm,
    line width=0.45pt,
    >=Stealth,
    vertex/.style={
        circle,
        draw,
        fill=white,
        minimum size=2mm,
        inner sep=0pt
    },
    port/.style={
        vertex,
        fill=orange!40
    },
    slack/.style={
        vertex,
        fill=blue!25
    },
    bridge/.style={
        thick,
        red
    },
    flip/.style={
        very thick,
        blue
    }
]


\node at (3,1.8) {\small\bfseries Copy $A$};

\foreach \i in {2,4,6}{
    \pgfmathsetmacro\x{\i-1}
    \node[vertex] (A\i) at (\x,0) {};
}

\foreach \i in {1,3,5,7}{
    \pgfmathsetmacro\x{\i-1}
    \pgfmathtruncatemacro{\j}{(\i+1)/2}
    \node[port] (A\i) at (\x,0) {};
    \node[] (N\i) at (\x,-0.4) {$a_{\j}$};
}

\foreach \i in {1,...,6}{
    \pgfmathsetmacro\x{\i-0.5}
    \node[vertex] (U\i) at (\x,0.8) {};
}

\node[] (N) at (6.5,0.4) {$\dots$};

\foreach \i/\j in {1/2,2/3,3/4,4/5,5/6,6/7}
    \draw (A\i)--(A\j);

\foreach \i/\j in {1/2,2/3,3/4,4/5,5/6}
    \draw (U\i)--(U\j);

\foreach \i in {1,...,6}{
    \pgfmathtruncatemacro\next{\i+1}
    \draw (A\i)--(U\i);
    \draw (U\i)--(A\next);
}



\begin{scope}[xshift=8cm]

\node at (3,1.8) {\small\bfseries Copy $B$};

\foreach \i in {2,4,6}{
    \node[vertex] (B\i) at (\i-1,0) {};
}

\foreach \i in {1,3,5,7}{
    \pgfmathtruncatemacro{\j}{(\i+1)/2}
    \node[port] (B\i) at (\i-1,0) {};
    \node[] (M\i) at (\i-1,-0.4) {$b_{\j}$};
}

\foreach \i in {1,...,6}{
    \node[vertex] (V\i) at (\i-0.5,0.8) {};
}

\node[] (M) at (6.5,0.4) {$\dots$};

\foreach \i/\j in {1/2,2/3,3/4,4/5,5/6,6/7} {
    \draw (B\i)--(B\j);
    \draw (B\i)--(V\i);
    \draw (V\i)--(B\j);
}

\foreach \i/\j in {1/2,2/3,3/4,4/5,5/6}
    \draw (V\i)--(V\j);


\end{scope}


\draw[bridge] (A3) to[out=-30,in=-150] node[midway,above]{insert} (B3);

\end{tikzpicture}

\caption{Illustration of the construction used in the proof of Theorem~\ref{thm:intro-outdegree-two}. Each copy is a graph $H_m$, with omitted vertices indicated by dots. The highlighted vertices are the distinguished ports, and the red edge represents a temporary bridge between corresponding
ports. The drawing is schematic: the long portions separating consecutive ports are suppressed.}
\label{fig:outdegree-two-lower-bound}
\end{figure}

\begin{proof}
Let $m:=\lfloor n/2\rfloor$, and let $H_m$ be the graph on
$v_1,\ldots,v_m$ in which $v_i$ and $v_j$ are adjacent whenever
$1\le |i-j|\le 2$.  This is an outerplanar chain of triangles with
$2m-3$ edges.

Insert two disjoint copies $A$ and $B$ of $H_m$, leaving one vertex
isolated if $n$ is odd.  Set $r:=\lfloor(m-1)/24\rfloor$.  In each
copy, choose seven \emph{ports} with indices
$1,1+4r,\ldots,1+24r$, and denote them by
$a_1,\ldots,a_7$ and $b_1,\ldots,b_7$, respectively.  Since
$\operatorname{dist}_{H_m}(v_i,v_j)=\lceil |i-j|/2\rceil$, distinct
ports in the same copy are at distance at least $2r$.
Repeat the following fixed round $m$ times: for $i=1,\ldots,7$, insert
the bridge $(a_i,b_i)$ and then delete it.  Every intermediate graph is
simple and outerplanar. See Figure~\ref{fig:outdegree-two-lower-bound} for an illustration.

Fix a round and consider the orientations at its beginning.
Call a vertex \emph{slack} if its outdegree is at most one.  Every
orientation of $H_m$ with maximum outdegree at most 2 has at most
three slack vertices, since
$\sum_v(2-d^+(v))=2m-(2m-3)=3$.
Thus, the two copies have at most six slack vertices altogether.
Since the ports in
each copy are pairwise at distance at least $2r$, each initial slack
vertex is at distance less than $r$ from at most one port in its copy.

For each $i$, let $z_i\in\{a_i,b_i\}$ be the tail of the bridge after
its insertion has been processed.
That is, the edge between $a_i$ and $b_i$ is oriented away from $z_i$.
The seven vertices $z_1,\ldots,z_7$ are distinct.
Since at most six ports are within
distance less than $r$ of an initial slack vertex, some $z_i$ is at
distance at least $r$ from the initial slack set of its copy.

Fix such a vertex $z:=z_i$.
By renaming, assume $z\in A$.
Compare the internal orientation of $A$ at the beginning of the round with the orientation immediately
after the insertion of $(a_i,b_i)$.
Initially, $z$ has internal
outdegree 2. 
After the insertion, its internal outdegree is at most
one, because the bridge is outgoing from $z$ and its total outdegree
is at most 2.

Let $F$ be the internal edges whose directions differ between these two
orientations.  In every component of $F$, the sum of the outdegree
changes is zero, since each edge of $F$ contributes $-1$ at one
endpoint and $+1$ at the other.  Therefore, because the outdegree of
$z$ decreases, some vertex $w$ in the same component gains outdegree.
Its final internal outdegree is at most 2, so its initial internal
outdegree was at most one.  Thus $w$ was slack at the beginning of the
round and $w\in A$.
The edges of $F$ contain a path from $z$ to $w$.  This path has at
least $r$ edges, since $z$ is at distance at least $r$ from every
initial slack vertex of $A$.
Hence at least $r$ internal edges have changed
direction, and each was flipped at least once during the current
round.  Every round therefore requires at least
$r=\Omega(m)$ edge flips.

Over the $m$ rounds, the number of flips is
$mr=\Omega(m^2)=\Omega(n^2)$.  The setup uses $4m-6$ updates and the
rounds use $14m$ updates, so the entire fixed sequence has
$18m-6=\Theta(n)$ updates.
\end{proof}

\subsubsection{Comparison with Outdegree \texorpdfstring{$3$}{3}} \label{outdeg3offline}

\paragraph{An $O(\log n)$ recourse bound.}
The lower bound above is specific to maximum outdegree $2$.  With
maximum outdegree $3$, the graph $H_m$ has
$3m-(2m-3)=m+3$ units of unused outgoing capacity, rather than $3$.
In fact, the update sequence above requires no edge flips.  In each
copy of $H_m$, orient every edge $(v_i,v_j)$ with $i<j$ as $v_j\to v_i$.
Every vertex then has internal outdegree at most $2$, so every
temporary bridge may be oriented from $A$ to $B$.  All outdegrees
remain at most $3$, and deleting the bridge restores the previous
orientation.

In general, the augmenting path argument of Brodal and Fagerberg
\cite[Lemma~2 and proof of Lemma~3]{BF99} gives
$O(\log n)$ worst-case recourse for every insertion preserving
outerplanarity.  Suppose $(u,v)$ is inserted.  If one endpoint has
outdegree at most $2$, orient the new edge out of that endpoint.
Otherwise, both endpoints have outdegree $3$; choose one of them, say
$u$.  Since outerplanar graphs have arboricity at most $2$, there is a
directed path
$P:u\longrightarrow\cdots\longrightarrow w$
of length at most $\lceil\log_{3/2}n\rceil$, where $w$ has outdegree at
most $2$.  Flip the edges of $P$ and orient the new edge as $u\to v$.
Flipping $P$ decreases the outdegree of $u$ by $1$, increases that of
$w$ by $1$, and leaves all other outdegrees unchanged; the new edge
then restores the outdegree of $u$ to $3$.  Deletions require no flips.

Thus maximum outdegree $3$ can be maintained with $O(\log n)$
worst-case recourse.  This is only a recourse bound: the augmenting
path is short, but finding it may take linear time.

\paragraph{Tightness for an adversarially chosen orientation.}
The $O(\log n)$ upper bound on the worst-case recourse of a single
update is asymptotically tight if the current orientation is chosen
adversarially, even on a forest.  Take two complete rooted ternary
trees of height $h$, orient every edge away from its root, and insert an
edge between the two roots.  Let $x$ be the root that is the tail of
the new edge.  Its outdegree among the old tree edges must decrease
from $3$ to at most $2$.  By conservation of outdegree changes, the
unit lost at $x$ must be transferred along a path of flipped edges to a
vertex whose tree outdegree increases.  Only a leaf can increase its
outdegree, since every nonleaf initially has outdegree $3$.  Hence the
flipped edges contain an $x$-to-leaf path, and at least $h$ old edges
must be flipped.
Conversely, flipping one root-to-leaf path in the tree containing $x$
suffices.  The two trees have $n=3^{h+1}-1$ vertices altogether, and
hence $h=\Theta(\log n)$.

This does not, however, imply an $\Omega(\log n)$ worst-case recourse
lower bound for a dynamic algorithm starting from the empty graph.
Such an algorithm could orient every tree edge toward its root, in
which case inserting the edge between the roots requires no flips.
Thus the example establishes tightness only when the current
orientation itself may be chosen adversarially.

\vspace{-6pt}
\section{Concluding Remarks and Open Questions}
\label{sec:conclusion}
\vspace{-6pt}

We showed that for outerplanar graphs, the thresholds for recourse and for the
uniform-walk paradigm differ by one. Specifically, maintaining outdegree $2$
may require $\Omega(n)$ amortized recourse, even offline,
whereas every insertion preserving outerplanarity admits a repair to outdegree $3$ using $O(\log n)$ recourse, even in the worst-case.  The latter controls the
number of flipped edges, but not the time needed to find them.  Our
random walk algorithm obtains logarithmic update time, against an
adaptive adversary, by allowing outdegree $4$.

\EMPH{The value $4$ is tight} for the uniform directed walk on outerplanar
graphs: \Cref{thm:threshold-four} proves that threshold $4$ has
$O(\log n)$ hitting time in expectation and w.h.p.
Conversely, \Cref{thm:outerplanar-three-lower} gives outerplanar
orientations in which threshold $3$ has $n^{\Omega(1)}$ hitting time in
expectation and w.h.p.; this obstruction can arise
immediately after one insertion into an orientation of
outdegree $3$.  Thus $4$ is the smallest integer $\Delta$ such that, in
every orientation of every outerplanar graph and from every starting
vertex, the uniform directed random walk reaches a vertex of outdegree
less than $\Delta$ within $O(\log n)$ steps, both in expectation and w.h.p.  At threshold $2$, the hitting time can be
exponential in $n$.
(The constant $48$ in the elementary analysis is chosen only for simplicity of presentation.)

This leaves a sharp algorithmic gap.  Can some other update algorithm 
maintain outdegree $3$ with the same $O(\log n)$ per-update
guarantees, in expectation and w.h.p.~against an
adaptive adversary, as our outdegree-$4$ algorithm?
The $O(\log n)$-flip repair shows that there is no recourse barrier, whereas
the threshold-$3$ lower bound shows that the uniform random walk cannot achieve
this.
Even for outdegree $4$, it remains open to obtain a deterministic $O(\log n)$ worst-case update time.

Our positive result extends to $K_{2,t}$-minor-free graphs for every
fixed $t$, but the uniform random walk paradigm does not extend to all
series-parallel graphs.  Indeed, \Cref{thm:sp-walk-lower} shows that,
for every fixed $\Delta\ge3$, the walk can require polynomially many steps
even on graphs of treewidth $2$.  Can this obstruction be overcome by
a more structure-aware \EMPH{nonuniform} walk, using probabilities
based on efficiently maintainable structural information?  Can such a
walk reach a low-outdegree vertex in $O(\log n)$ steps and thereby yield
a constant outdegree dynamic orientation with the same update time
guarantees for series-parallel graphs?  Can the same be achieved for
bounded-treewidth graphs, planar graphs, or, ultimately, every proper
minor-closed graph family?

\section*{AI Disclosure}
The authors used ChatGPT and Claude to assist with drafting, revising, and polishing parts of the manuscript, including portions of the introduction and the presentation of several proofs. ChatGPT was also used in the development of some of the proofs, most notably the proof of \Cref{thm:threshold-four}, as described below.

The authors had previously obtained simpler proofs for larger constant outdegree thresholds. The proof for threshold 4, which requires a substantially different structural argument, was developed through an iterative interaction between the authors and ChatGPT. This interaction played a crucial role in discovering and refining the final argument. The resulting proof of \Cref{thm:threshold-four} was then checked in detail, revised, and fully formalized by the authors.

Every model-generated contribution was reviewed and edited by the authors. All definitions, theorem statements, algorithms, proofs, and final text were approved by the authors, who take full responsibility for the correctness, originality, and integrity of the paper.


\newcommand{\etalchar}[1]{$^{#1}$}

\end{document}